\newcommand{\blind}{1}
\numberwithin{equation}{section}
\renewcommand{\leq}{\leqslant}
\renewcommand{\geq}{\geqslant}
\renewcommand{\epsilon}{\varepsilon}
\renewcommand{\ln}{\log}
\def\twofig{6.85cm}
\def\Vplot{5cm}
\def\densplot{5cm}
\newcommand{\E}{\mathbb{E}}
\newcommand{\N}{\mathbb{N}}
\newcommand{\R}{\mathbb{R}}
\newcommand{\Z}{\mathbb{Z}}
\newcommand{\der}{\mathrm{d}}
\newcommand{\cara}[1]{\mathbbm{1}_{#1}}
\newcommand{\defeq}{:=}
\newcommand{\diag}{\mbox{diag}}
\newcommand{\var}{\mbox{Var}}
\newcommand{\tr}{\mathrm{Tr}}
\newcommand{\Xnew}{X_{\mathrm{new}}}
\newcommand{\Xbest}{X_{\mathrm{best}}}
\newcommand{\pcov}{\Lambda}
\newcommand{\PVS}[3]{\mathbb{P}_{\mathrm{VS}}^{#1}(#2,#3)}
\DeclareMathOperator*{\argmin}{arg\,min}
\DeclareMathOperator*{\argmax}{arg\,max}
\theoremstyle{plain}
\newtheorem{prop}{Proposition}[section]
\newtheorem{defin}[prop]{Definition}
\newtheorem{defin/prop}[prop]{Definition/Proposition}
\newtheorem{corr}[prop]{Corollary}
\title{On proportional volume sampling for experimental design in general spaces}
\author{Arnaud Poinas and R\'emi Bardenet}
\date{}
\begin{document}

\def\spacingset#1{\renewcommand{\baselinestretch}%
{#1}\small\normalsize} \spacingset{1}


\if1\blind
{
  \title{\bf On proportional volume sampling for experimental design in general spaces}
  \author{Arnaud Poinas\thanks{Corresponding author}~ and R\'emi Bardenet\\
    Universit\'e de Lille, CNRS, Centrale Lille,\\ UMR 9189 – CRIStAL,
    59651 Villeneuve d’Ascq, France}
  \maketitle
} \fi

\if0\blind
{
  \bigskip
  \bigskip
  \bigskip
  \begin{center}
    {\LARGE\bf On proportional volume sampling for experimental design in general spaces}
\end{center}
  \medskip
} \fi

\bigskip
\begin{abstract}
Optimal design for linear regression is a fundamental task in statistics.
For finite design spaces, recent progress has shown that random designs drawn using \emph{proportional volume sampling} (PVS) lead to approximation guarantees for A-optimal design.
PVS strikes the balance between design nodes that jointly fill the design space, while marginally staying in regions of high mass under the solution of a relaxed convex version of the original problem.
In this paper, we examine some of the statistical implications of a new variant of PVS for (possibly Bayesian) optimal design.
Using point process machinery, we treat the case of a generic Polish design space.
We show that not only are the A-optimality approximation guarantees preserved, but we obtain similar guarantees for D-optimal design that tighten recent results.
Moreover, we show that PVS can be sampled in polynomial time.
Unfortunately, in spite of its elegance and tractability, we demonstrate on a simple example that the practical implications of general PVS are likely limited.
In the second part of the paper, we focus on applications and investigate the use of PVS as a subroutine for stochastic search heuristics.
We demonstrate that PVS is a robust addition to the practitioner's toolbox, especially when the regression functions are nonstandard and the design space, while low-dimensional, has a complicated shape (e.g., nonlinear boundaries, several connected components).
\end{abstract}

\noindent%
{\it Keywords:} Bayesian optimal design; volume sampling; determinantal point processes.
\vfill

\newpage
\spacingset{1.5} 

\section{Introduction}
In the classical problem of experimental design for linear models, the main goal is to select the input variables so as to minimize the variance of the Gauss-Markov estimator.
The literature and the available techniques are usually partitioned according to which set $\Omega$ of possible input variables is considered, the so-called \emph{design space}.
For instance, taking $\Omega$ to be a finite set
leads to factorial experiments \citep*[Chapter 7]{Atkinson} and row subset selection \citep*{Def_RescaledVS}.
Very often, $\Omega$ is a rectangular subset of $\R^d$, but more complicated shapes are also not uncommon in applications. For example, $\Omega$ becomes a simplex when considering mixture designs \citep[Chapter 16]{Atkinson}, and some of the variables can satisfy nonlinear constraints for physical reasons \citep[Example 16.3]{Space_Filling_Design, Atkinson}.
Finally, taking  $\Omega$ to be a product of both a finite set and a compact subset of $\R^d$ allows including both qualitative and quantitative variables \cite[Example 1.2]{Atkinson}.
In this paper, we consider the general setting where the design space $\Omega$ is any bounded closed Polish space $\Omega$, thus including all previous examples.

In general, the problem of minimizing the variance matrix of the Gauss-Markov estimator is difficult to solve, both theoretically and numerically.
For starters, the Loewner order on positive definite symmetric matrices is not total.
The common workaround is to minimize instead a real-valued function of these matrices, which somehow measures the size of the variance of the Gauss-Markov estimator.
Even using such proxies, the resulting optimization problem remains difficult to tackle. For finite design spaces, for instance, one faces a combinatorial optimization problem, typically framed as an optimization over integer-weighted measures on $\Omega$. One seminal idea in this discrete setting has been to rather solve a continuous relaxation of the original minimization problem, yielding a weighted measure on $\Omega$ with \emph{real} weights, called an \emph{approximate} optimal design.
Several postprocessing procedures have been investigated to recover an integer-weighted measure from that real-weighted approximate optimal design \citep[Chapter 12]{Pukelsheim}.
Recently, \cite*{Nikolov} showed how to use that approximate design to define \emph{proportional volume sampling}, a distribution over subsets of a (large) finite set that $(i)$ typically charges designs closer to optimality than i.i.d. random designs, and $(ii)$ can be sampled in polynomial time.  The same distribution was also introduced in \citep*{Def_RescaledVS} for a similar purpose, and generalized to a continuous design space in \citep*{VRS}. The key idea is to sample designs that strike a balance between, on the one side, each design point having a large mass under the approximate design, and on the other side, the points together corresponding to a small confidence ellipsoid for the Gauss-Markov estimator.
Finally, \cite*{RegDPP} modified proportional volume sampling distribution so that it applies to Bayesian optimal design on a finite design space.

In this paper, we investigate two natural ways of using proportional volume sampling (PVS) to generate random designs close to optimality in \emph{general} design spaces, in both the Bayesian and classical setting, thus extending the results of \citep{VRS,RegDPP}.
After a short survey of optimal design and related work in Section \ref{sec:Rel_Works}, we show in Section \ref{sec:GPCVS} that PVS has a natural extension to $\Omega$ being any Polish set.
We show that this extension preserves important known properties of PVS, i.e., that it yields unbiased estimates of the inverse information matrix and its determinant.
We also prove new approximation guarantees for A-optimal and D-optimal designs generated from PVS.
Because we focus on the standard statistical formulation of experimental design \citep{Atkinson}, our guarantees use PVS conditioned on the design having a fixed, used-defined cardinality.
Our results apply to any setting, Bayesian or not, finite or not, and slightly improve the known bounds in the finite case of \cite{RegDPP}.
Furthermore, we give a sampling algorithm for random designs generated from this general conditioned PVS. This algorithm further highlights the connection between PVS and determinantal point processes established in the finite frequentist setting by \cite{Nikolov}.
Using numerical simulations, we confirm that general PVS generates better Bayesian experimental designs than random i.i.d. designs.
However, in spite of its mathematical and algorithmic elegance, we found general PVS to bring only modest improvements in practical problems, compared to a second way of using finite PVS.
We describe this alternative extension of \citep{Nikolov,RegDPP} in Section~\ref{sec:algo}, namely a global search heuristic that generates designs close to A and D-optimality when $\Omega$ is any compact set of $\R^d$.
This heuristic uses finite PVS as a subroutine.
While standard algorithms, including exact methods when applicable \citep{DeCastro}, remain the preferred solution in the typical setting of a rectangular design space and low-degree polynomial regression functions, we show on several examples in Section~\ref{sec:perf} that our PVS-based heuristic outperforms them when the design space has a more complicated shape and the regression functions are generic.

\section{Background and related work}\label{sec:Rel_Works}
In this section, we recall the usual mathematical setting for optimal design in linear regression, and survey a few key results in the rich statistical literature on the subject.

\subsection{Optimal design for linear models}
Let $y_1,\cdots, y_k\in\R$ denote the responses of a fixed number $k\geq p$ of independent experiments with input variables $(x_1,\cdots,x_k)\in\Omega^k$. Consider a linear regression model over $p$ linearly independent regression functions $\phi_1,\cdots,\phi_p\in L^2(\Omega)$,
\begin{equation}\label{linreg}
Y=\phi(X)\beta+\epsilon,
\end{equation}
where $X\defeq (x_1,\cdots,x_k)^T$, $Y\defeq (y_1,\cdots,y_k)^T$, $\epsilon\in\R^k$ is a random variable such that $\E[\epsilon]=0$ and $\var(\epsilon)=\sigma^2 I_k$, and $\phi(X) = (\phi_j(x_i)) \in\R^{k\times p}$
is called the \emph{design matrix}.
The feature functions $\phi_i$ are often polynomials, but experimental design for different regression functions has also been considered, such as trigonometric polynomials \citep{Trigo_regression}, Haar wavelets \citep{Wavelet_reg} or B-splines \citep{B-Spline_1}.
Assuming that the information matrix $\phi(X)^T\phi(X)+\pcov$ is invertible, one popular estimator for $\beta$ in \eqref{linreg} is
\begin{equation} \label{eq:GM_estimator}
\hat\beta\defeq (\phi(X)^T\phi(X)+\pcov)^{-1}\phi(X)^T Y,
\end{equation}
where $\Lambda$ be a positive semi-definite matrix. When $\Lambda=0$, \eqref{eq:GM_estimator} corresponds to the classical Gauss-Markov estimator, while the case $\Lambda\neq 0$ corresponds to the posterior mean of $\beta$ for any prior distribution with mean $0$ and variance $\sigma^2 \pcov^{-1}$.
In the rest of the paper, we often abusively refer to $\Lambda\neq 0$ as the Bayesian setting, although frequentists may also prefer to use a nonzero $\Lambda$, as in ridge regression.
The error covariance matrix $\E[(\hat\beta-\beta)(\hat\beta-\beta)^T]$, for a given design $X$, is $\sigma^2(\phi(X)^T\phi(X)+\pcov)^{-1}$.

Minimizing the covariance matrix of $\hat\beta$ with respect to the design $X\in\Omega^k$ corresponds to minimizing the inverse of the information matrix $\phi(X)^T\phi(X)+\pcov$, in the sense of the Loewner order. But, since the Loewner order is a partial order on the space $S_p(\R)^+$ of positive $p\times p$ symmetric matrices, the variance of $\hat\beta$ does not necessarily admit a global optimum. As a proxy, it is thus common to minimize a decreasing convex function $h:S_p(\R)^+\rightarrow\R$ of the information matrix.

In this paper, we focus on two of the most common proxies, $h_A(M)\defeq\tr(M^{-1})$ and $h_D(M)\defeq\det(M^{-1})$, respectively called the A and D-optimality criteria \citep{Pukelsheim}.
An optimal design is thus defined as an element of
\begin{equation} \label{ODproblem}
\argmin_{X\subset\Omega^k} h(\phi(X)^T\phi(X)+\pcov).
\end{equation}
Even though $h$ is a convex function, this is not the case for the set of information matrices $\{\phi(X)^T\phi(X)+\pcov, X\subset\Omega^k\}$, making the optimization problem difficult.
Even when $\Omega$ is a finite space with $n$ elements, \eqref{ODproblem} writes as a finite optimization problem over a space with $\binom{n}{k}$ elements, which is usually too large for exhaustive search.
A common technique is to solve instead a convex relaxation of \eqref{ODproblem},
\begin{equation} \label{ODproblem_relax}
\argmin_{\nu\in\mathcal{M}(\Omega)} h(G_\nu(\phi)+\pcov)~s.t.~\nu(\Omega)=k,
\end{equation}
where $\mathcal{M}(\Omega)$ is the space of Borel measures on $\Omega$, and $G_\nu(\phi)\defeq\left(\int_\Omega \phi_i(x)\phi_j(x)\der\nu(x)\right)\in\mathbb{R}^{p\times p}$
is the \emph{Gramian} matrix. A solution of the convex optimization problem \eqref{ODproblem_relax} is called an \emph{approximate optimal design}.

\subsection{The case of discrete design spaces}
\label{sec:DiscreteDS_talk}
As an illustration of the difficulty of finding an optimal design when $\Omega$ is a finite set, \cite*{D_opt_NP_Hard} show that the problem of getting a (1+$\epsilon$)-approximation of the D-optimal design is NP-hard for small enough $\epsilon$ when $\pcov=0$ (i.e., in the non-Bayesian setting). A similar result for A-optimality, when $k=p$, appears in \citep{Nikolov}.
In contrast, the relaxation \eqref{ODproblem} becomes a convex optimization problem over a finite-dimensional space, for which efficient algorithms exist \cite[Chapter 7.5.2]{Boyd}.
A natural question is thus how to extract a near-optimal design from an approximate design in \eqref{ODproblem}.
Several \emph{rounding algorithms} were introduced for that purpose \cite[Chapter 12]{Pukelsheim}, such as the popular method of \cite{Efficient_rounding}, which looks for the best design across the ones obtained by rounding up or down the coefficients of an approximate design.

Still with $\pcov=0$ in \eqref{eq:GM_estimator}, \cite{Nikolov} and \cite{Def_RescaledVS} introduced an alternative to rounding, called ``proportional" or ``rescaled" volume sampling.
The principle is to sample a random design $X$ with a probability density proportional to $\det(\phi(X)^T\phi(X))\der\nu^k(X)$, where $\nu$ is an approximate optimal design.
\cite{Nikolov} showed that for the A-optimality criterion, the generated designs are on average at least a $k/(k-p+1)$-approximation of the A-optimal design.
\cite{RegDPP} further introduced the \emph{regularized determinantal point process} distribution as a way to generalize these results to the Bayesian setting, yielding $(1+O\big(\sqrt{\frac{\ln(k)}{k}}\big))$-approximations of the optimal design for several optimality criteria, including A and D, provided $k$ is large enough.

\subsection{The case of continuous design spaces} \label{sec:ContinuousDS_talk}


Theory is well-developed in the non-Bayesian univariate case, when $\Omega$ is a compact subset of $\R$ and $\pcov=0$. The approximate D-optimal design has been characterized as zeros of certain families of polynomials for both polynomial \cite[Theorem 5.5.3]{Dette} and trigonometric regression \cite[Theorem 3.1.]{Trigo_regression}.
Unfortunately, these results do not extend well to multivariate regression. When $\Omega$ is a compact subset of $\mathbb{R}^d$, some results do exist, but only for very specific design spaces and regression functions \citep*{Multi_Designs,Liski}. Theoretical knowledge in the Bayesian case is even more limited but a few results on D-optimality in some specific cases are known \citep*{Dette_Bayes, Bayes_Example}.

Regarding the relaxation \eqref{ODproblem_relax} of the optimal design problem, common practice is to optimize it only over finitely supported measures \citep*{Dette, DeCastro}, and to look for an optimal design within the support of a solution to that relaxation.
This is partially motivated by the result that the relaxed problem \eqref{ODproblem_relax} always has a finitely supported solution; see e.g. \cite[Theorem 8.2]{Pukelsheim}.
We refer to \citep{Pronzato} for a survey of optimization methods for the relaxation \eqref{ODproblem_relax}.
As a recent example of this line of research, \cite{DeCastro} give an algorithm solving \eqref{ODproblem_relax} when $\pcov=0$, $h$ is either the A or D-optimality criterion, $\Omega$ is a closed semi-algebraic set, and the regression functions are multivariate polynomials.
The algorithm is based on an elegant construction of a nested sequence of convex optimization problems where the search space is formed by the moments of the target measure. Unfortunately, because of the quickly exploding size of these optimization problems, along with some numerical instability issues, the practical impact of this algorithm has so far remained limited to small $p$ and $d$. In particular, experiments in \citep{DeCastro} consider polynomial regressions with degree up to $3$ in dimension $d\leq 3$.
The extent of this domain of applicability is confirmed by our own numerical experiments; see Section~\ref{sec:perf}.

In parallel, a lot of effort has been put into designing efficient optimization heuristics for optimal design, although not many algorithms are agnostic to the properties of $\Omega$.
We describe two such families in Section~\ref{sec:algo}: local search algorithms \citep[Chapter 4]{LocSearch} and the exchange method \citep[Chapter 9.2.1]{Pronzato}.

\section{Proportional volume sampling in general spaces}
\label{sec:GPCVS}
The distribution known as \emph{regularized determinantal point processes} was introduced by \cite{RegDPP} to generalize the proportional volume sampling distribution of \cite{Nikolov} to the finite Bayesian setting. In this section, we further extend the definition and results of \citep{RegDPP} to any Polish design space $\Omega$. To avoid later confusion with determinantal point processes in our paper, we stick to the original name of \emph{proportional volume sampling}, often shortened as PVS, for our distribution.

Other than treating general design spaces, notable differences of this section with \citep{RegDPP} include a more general and slightly tighter bound on the average A- and D-optimality criteria of designs sampled from PVS, conditioned on having a fixed size.
We also give an algorithm to sample from (possibly size-constrained) PVS that does not require rejection sampling, unlike \citep[Lemma 13]{RegDPP}.

\subsection{Definition}
We define proportional volume sampling (PVS) as a point process, i.e., a distribution over the space $\cup_{k\geq 0}\Omega^k$ of finite point configurations in $\Omega$.
For notions related to point processes, such as Janossy densities and correlation functions, we refer to the standard reference \citep{DVJ}.
PVS is further parameterized by a probability measure $\nu$ and $k$ independent functions $\phi_1,\dots,\phi_k$ in $L^2(\Omega,\nu)$. For

\begin{defin} \label{def:RDPP}
Let $A$ be a $p\times p$ non-negative symmetric matrix; let $\nu$ be a finite measure on $(\Omega,\mathcal{B}(\Omega))$ such that $G_{\nu}(\phi)$ is non-singular, where $\mathcal{B}(\Omega)$ is the Borel $\sigma$-algebra and $G_\nu(\phi)\in\mathbb{R}^{p\times p}$ has entries $\int_\Omega \phi_i(x)\phi_j(x)\der\nu(x)$.
We define the proportional volume sampling distribution $\PVS{\nu}{\phi}{\pcov}$ with reference measure $\nu$ as the point process on $\Omega$ with Janossy densities
\begin{equation}\label{DefPCVS}
\forall n\in\N,~\forall x\in\Omega^n,~j_n(x_1,\cdots,x_n)\der x_1\cdots\der x_n=\frac{\det(\phi(x)^T\phi(x)+\pcov)}{\det(G_\nu(\phi)+\pcov)\exp(\nu(\Omega))}\der\nu(x_1)\cdots\der\nu(x_n).
\end{equation}
\end{defin}
Some comments are in order.
First, we recall that $j_n(x_1,\cdots,x_n)\der x_1\cdots\der x_n$ can be interpreted as the probability of the point process consisting of $n$ points, one in the neighborhood of each $x_i$  \citep[Section 5.3]{DVJ}.
It follows that PVS strikes a balance between favoring designs that lie in regions of large mass under $\nu$, and designs with feature vectors $(\phi_i(x_j)) \in\mathbb{R}^n$, $i=1,\dots,k$ spanning a large volume.
When $k=p$ and $\pcov=0$, the latter is equivalent to having a small-volume confidence ellipsoid for the Gauss-Markov estimator \eqref{eq:GM_estimator}.
Second, taking $\Omega$ to be a finite set leads to a similar distribution, although not identical, to the \emph{regularized determinantal point process} defined in \citep{RegDPP}. Third, when taking $\pcov=0$ and conditioning PVS to have $k$ points, this distribution corresponds to the \emph{volume-rescaled sampling} defined in \citep{VRS}, whose discrete version itself corresponds to the \emph{proportional volume sampling with hardcore distribution} of \cite{Nikolov} and the \emph{rescaled volume sampling} described in \citep{Def_RescaledVS}.
Third, Definition~\ref{def:RDPP} bears resemblance to that of determinantal point processes \citep{Mac75,HKPV06}, an observation that we make precise in Section~\ref{sec:sampling}.
Fourth, the fact that \eqref{DefPCVS} is well-defined can be seen by showing that
$$\sum_{n\geq 0}^\infty\frac{1}{n!}\int_{\Omega^n}j_n(x)\der ^n x=1;$$
see \citep[Proposition 5.3.II.(ii)]{DVJ}. This can be done using both the classical Cauchy-Binet formula and a generalization of it \citep{Cauchy_Binet}, combined with the well-known matrix identity \citep{Det_Diag} stating that for any $n\times n$ matrix $M$ and diagonal matrix $D$ with diagonal entries $\lambda_1,\cdots,\lambda_n$,
$$\det(M+D)=\sum_{S\subset\{1,\cdots,n\}}\det(M_S)\prod_{i\notin S}\lambda_i$$
where $M_S$ is the submatrix of $M$ of the rows and columns indexed by $S$. The detailed proof is given in Section \ref{proof:RDPP}.

\subsection{PVS and optimal designs} \label{sec:PVS_results}
\cite{RegDPP} show that their regularized DPP distribution gives natural unbiased estimators of $(G_\nu(\phi)+\pcov)^{-1}$ and $\det(G_\nu(\phi)+\pcov)^{-1}$ in the discrete case. We show that this property extends to PVS defined on a general Polish spaces $\Omega$.

\begin{prop} \label{prop:Equal_Expectations}
Let $X\sim \PVS{\nu}{\phi}{\pcov}$, then
\begin{equation}\label{Eq_regDPP1}
\E\left[(\phi(X)^T\phi(X)+\pcov)^{-1}\right]=(G_\nu(\phi)+\pcov)^{-1}
\end{equation}
and
\begin{equation}\label{Eq_regDPP2}
\E\left[\det(\phi(X)^T\phi(X)+\pcov)^{-1}\right]=\det(G_\nu(\phi)+\pcov)^{-1}
\end{equation}
\end{prop}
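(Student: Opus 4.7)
The plan is to reduce both identities to the normalization identity established in Section~\ref{proof:RDPP},
\begin{equation}
\sum_{n\geq 0}\frac{1}{n!}\int_{\Omega^n}\det(\phi(x)^T\phi(x)+\pcov)\,d\nu^{\otimes n}(x)=\det(G_\nu(\phi)+\pcov)\,e^{\nu(\Omega)},\label{eq:norm_id}
\end{equation}
which is what makes Definition~\ref{def:RDPP} well-posed. Identity~\eqref{Eq_regDPP2} falls out essentially for free: expanding the expectation via the Janossy densities, the factor $\det(\phi(x)^T\phi(x)+\pcov)$ in the density cancels the $\det(\phi(x)^T\phi(x)+\pcov)^{-1}$ in the integrand (noting that configurations where this determinant vanishes also kill the density, hence contribute nothing), leaving
$$\frac{1}{\det(G_\nu(\phi)+\pcov)\,e^{\nu(\Omega)}}\sum_{n\geq 0}\frac{\nu(\Omega)^n}{n!}=\det(G_\nu(\phi)+\pcov)^{-1}.$$

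For \eqref{Eq_regDPP1}, I would first use $M^{-1}=\adj(M)/\det(M)$ to see that the determinant again cancels with the density, reducing the claim to the matrix identity
\begin{equation}
\sum_{n\geq 0}\frac{1}{n!}\int_{\Omega^n}\adj(\phi(x)^T\phi(x)+\pcov)\,d\nu^{\otimes n}(x)=\adj(G_\nu(\phi)+\pcov)\,e^{\nu(\Omega)}.\label{eq:adj_id}
\end{equation}
To prove \eqref{eq:adj_id} I apply a perturbation trick. Both sides of \eqref{eq:norm_id} are polynomial in the entries of $\pcov$, so \eqref{eq:norm_id} holds as a polynomial identity and in particular remains valid when $\pcov$ is replaced by $\pcov+tA$ for any symmetric $p\times p$ matrix $A$ and $t\in\R$. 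Differentiating at $t=0$ and invoking Jacobi's formula $\partial_t\det(M+tA)|_{t=0}=\tr(\adj(M)A)$ yields
$$\sum_{n\geq 0}\frac{1}{n!}\int_{\Omega^n}\tr\bigl(\adj(\phi(x)^T\phi(x)+\pcov)A\bigr)\,d\nu^{\otimes n}(x)=\tr\bigl(\adj(G_\nu(\phi)+\pcov)A\bigr)\,e^{\nu(\Omega)}.$$
Letting $A$ range over a basis of symmetric matrices (for instance $E_{ii}$ and $E_{ij}+E_{ji}$) then produces \eqref{eq:adj_id} entry-wise, and plugging this back into the Janossy expansion of $\E[(\phi(X)^T\phi(X)+\pcov)^{-1}]$ yields \eqref{Eq_regDPP1}.

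The only delicate step is justifying the interchange of $\partial_t$ with the outer sum and the inner integral. I expect this to be routine: for each fixed $n$ the integrand is a polynomial of degree at most $p$ in $t$, so termwise differentiation is immediate, and the Cauchy--Binet estimates already used in Section~\ref{proof:RDPP} to establish \eqref{eq:norm_id} should deliver absolute convergence uniformly in $t$ on a neighborhood of $0$, so that dominated convergence applies. Beyond this technicality I do not foresee any substantial obstacle.
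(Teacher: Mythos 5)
Your proof is correct, and while it follows the paper's overall template for both identities, it establishes the key intermediate fact for \eqref{Eq_regDPP1} by a genuinely different argument. The treatment of \eqref{Eq_regDPP2} is identical to the paper's: the determinant in the Janossy density cancels the integrand and the sum telescopes to $e^{\nu(\Omega)}$. For \eqref{Eq_regDPP1}, both you and the paper pass through the adjugate so that the determinant cancels, reducing everything to the identity
$\sum_{n\geq 0}\frac{1}{n!}\int_{\Omega^n}\mbox{adj}(\phi(x)^T\phi(x)+\pcov)\,\der\nu^n(x)=\mbox{adj}(G_\nu(\phi)+\pcov)e^{\nu(\Omega)}$.
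The paper proves this entrywise by observing that the normalization computation of Section~\ref{proof:RDPP} never uses symmetry, so it applies verbatim to each mixed cofactor $\det(\phi_{-j}(x)^T\phi_{-i}(x)+\pcov_{-j,-i})$ via the generalized Cauchy--Binet formula; you instead treat the normalization identity as a black box, note that both sides are polynomials in the entries of $\pcov$ agreeing on the PSD cone (which has nonempty interior, so they agree identically), and differentiate in the direction of a symmetric $A$ using Jacobi's formula. Your route avoids reopening the Appendix~\ref{proof:RDPP} computation for non-symmetric Gram-type matrices, at the price of justifying differentiation under the infinite sum and the integral --- which, as you note, is routine here since each term is a polynomial of degree at most $p$ in $t$ and the Cauchy--Binet evaluations give absolutely convergent coefficient sums. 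One point worth making explicit in either approach (the paper leaves it implicit too): replacing $(\phi(x)^T\phi(x)+\pcov)^{-1}j_n(x)$ by the adjugate and then integrating over all of $\Omega^n$ silently adds the contribution of configurations where $\phi(x)^T\phi(x)+\pcov$ is singular, on which the adjugate need not vanish even though the Janossy density does; this is harmless exactly when that set is $\nu^n$-negligible or when one reads $\E[(\phi(X)^T\phi(X)+\pcov)^{-1}]$ as requiring almost-sure invertibility, which is the standing assumption of the proposition.
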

\begin{proof}
The proof is given in Section \ref{proof:Equal_Expectations}. It relies on the fact that for any function $f:\cup_{n\geq 0}\Omega^n\rightarrow\R$,
\begin{equation}\label{eq:Expectation_PP}
\E[f(X)]=\sum_{n\geq 0}\frac{1}{n!}\int_{\Omega^n}f(x)j_n(x)\der^n x;
\end{equation}
see \citep[Exercice 5.3.8]{DVJ}.
\end{proof}

Proposition \ref{prop:Equal_Expectations} shows that by taking $\nu$ to be a solution of the relaxation \eqref{ODproblem_relax} the A or D-optimal design problem, the designs generated by PVS will perform, on average, at least as well as an A or D-optimal design. The obvious issue here is that PVS generates designs with a random number of points. When $\nu(\Omega)=k$ we later show in Corollary \ref{corr:Avg_num_points} that the average number of points generated by $\PVS{\nu}{\phi}{\pcov}$ is
$$k+\tr(G_\nu(\phi)(G_\nu(\phi)+\pcov)^{-1})\in[k,k+p].$$
When $k$ is large, this shows that the distribution $\PVS{\nu}{\phi}{\pcov}$ generates designs with $\approx k$ points.
\cite{RegDPP} make such a statement precise for their variant of PVS using concentration inequalities.
For $k$ large enough, they actually bound the average optimality criteria of designs generated by PVS, conditionally to their size being either equal to $k$, or lower than $k$ and completed with random i.i.d. points to reach size $k$.
Instead, we focus on A and D-optimality and we directly exploit the form of the Janossy densities of our PVS variant to get a tighter bound, which can be used for any value of $k\geq p$.

\begin{prop}\label{prop:regDPP_Dopt}
Let $X\sim \PVS{\nu}{\phi}{\pcov}$ such that $\nu(\Omega)=k$. Then
\begin{align}
\E \big[\det(\phi(X)^T\phi(X)+&\pcov)^{-1}\big| |X|=k\big]\nonumber\\
& \leq\frac{k^p(k-p)!}{k!}\frac{\det(G_\nu(\phi)+\pcov)^{-1}}{1+\frac{p-1}{k-p+1}\big[1-\det(G_\nu(\phi)(G_\nu(\phi)+\pcov)^{-1})\big]}
\label{Eq_regDPP2_Dbound}
\end{align}
with equality when $\pcov=0$.
\end{prop}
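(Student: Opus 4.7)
The plan is to reduce the inequality to a sharp lower bound on the integral
$$I_k \defeq \int_{\Omega^k}\det\!\bigl(\phi(x)^T\phi(x)+\pcov\bigr)\,\der\nu^k(x),$$
then to derive an additive expansion of $I_k$ on which a simple monotonicity argument applies.

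\textbf{Reduction.} Applying \eqref{eq:Expectation_PP} to the function $f(X) = \det(\phi(X)^T\phi(X)+\pcov)^{-1}\mathbbm{1}_{|X|=k}$, the determinant factor in the Janossy density \eqref{DefPCVS} cancels the integrand, and what remains is $\int_{\Omega^k}\der\nu^k = \nu(\Omega)^k = k^k$. A parallel calculation for $\mathbb{P}(|X|=k)$ yields
$$\E\bigl[\det(\phi(X)^T\phi(X)+\pcov)^{-1}\,\big|\,|X|=k\bigr] \;=\; \frac{k^k}{I_k},$$
so the whole problem reduces to lower-bounding $I_k$.

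\textbf{Additive expansion of $I_k$.} Assume first that $\pcov$ is invertible (the general case follows by continuity in $\pcov$). Sylvester's identity together with the classical expansion $\det(I+M) = \sum_{T\subseteq[k]}\det(M_{T,T})$ give
$$\det\!\bigl(\phi(x)^T\phi(x)+\pcov\bigr) \;=\; \det(\pcov)\sum_{t=0}^{p}\sum_{\substack{T\subseteq[k]\\|T|=t}}\det\!\bigl(\phi(x)_T\,\pcov^{-1}\,\phi(x)_T^{T}\bigr),$$
terms with $|T|>p$ vanishing by rank. Integrating term-by-term with $\int_{\Omega^k}g(x_1,\dots,x_t)\der\nu^k = k^{k-t}\int_{\Omega^t}g\,\der\nu^t$ and identifying the key integral
$$\int_{\Omega^t}\det\!\bigl(\phi(x)_t\,\pcov^{-1}\,\phi(x)_t^T\bigr)\,\der\nu^t \;=\; t!\,e_t\!\bigl(\pcov^{-1/2}G_\nu(\phi)\pcov^{-1/2}\bigr)$$
(where $e_t$ is the $t$-th elementary symmetric polynomial in the eigenvalues, and the identity follows from Cauchy--Binet combined with the Leibniz trick used in Section~\ref{proof:RDPP}), the whole expansion collapses into
$$I_k \;=\; \sum_{t=0}^{p}\frac{k!}{(k-t)!}\,k^{k-t}\,\gamma_t,\qquad \gamma_t \defeq \det(\pcov)\,e_t\!\bigl(\pcov^{-1/2}G_\nu(\phi)\pcov^{-1/2}\bigr).$$
Equivalently, $\gamma_t$ is the coefficient of $s^t$ in $\det(sG_\nu(\phi)+\pcov)$, from which the two identities $\sum_{t=0}^p\gamma_t = \det(G_\nu(\phi)+\pcov)$ and $\gamma_p = \det(G_\nu(\phi))$ needed below drop out for free.

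\textbf{Monotonicity and equality case.} Setting $c_t := k!/((k-t)!\,k^t)$, the sequence $(c_t)_{t\geq 0}$ is nonincreasing with $c_0=c_1=1$, so $c_t \geq c_{p-1}$ for every $t\leq p-1$, and $c_p = c_{p-1}(k-p+1)/k$. Splitting the sum at $t = p$ therefore gives
$$I_k \;=\; k^k\sum_{t=0}^p c_t\gamma_t \;\geq\; k^k\,c_{p-1}\!\left[\det(G_\nu(\phi)+\pcov) - \tfrac{p-1}{k}\det(G_\nu(\phi))\right],$$
and dividing $k^k$ by this right-hand side yields, after expanding $c_{p-1}$ and isolating the factor $k\det(G_\nu(\phi)+\pcov) - (p-1)\det(G_\nu(\phi))$, exactly the claimed bound \eqref{Eq_regDPP2_Dbound}. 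The principal obstacle is the Cauchy--Binet plus Leibniz bookkeeping behind the collapse to $e_t$ in Step 2: once that integral identity is in hand, the rest is algebra. When $\pcov=0$ one has $\gamma_t=0$ for every $t<p$, so the slack introduced by $c_t\geq c_{p-1}$ is only applied against zero weights and the inequality saturates, giving the asserted equality.
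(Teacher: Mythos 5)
Your proof is correct and follows essentially the same route as the paper's: both reduce the conditional expectation to $k^k$ divided by $\int_{\Omega^k}\det(\phi(x)^T\phi(x)+\pcov)\,\der\nu^k(x)$, expand that integral via Cauchy--Binet into a sum of nonnegative terms indexed by a degree $t\in\{0,\dots,p\}$ with weights proportional to $k^{k-t}/(k-t)!$, and lower-bound all terms except the top one by the value at $t=p-1$ using the monotonicity of $t\mapsto k!/((k-t)!\,k^t)$. The only difference is bookkeeping: the paper diagonalizes $\pcov$ and sums over column subsets $S\subseteq[p]$, whereas you sum over row subsets $T\subseteq[k]$ and package the coefficients as the $\gamma_t$ of $\det(sG_\nu(\phi)+\pcov)$ --- the resulting sums, the monotonicity step, and the final algebra (including the equality case at $\pcov=0$) coincide.
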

\begin{proof}
The proof is given in Section \ref{proof:regDPP_Dopt}. It relies on the fact that for any function $f:\Omega^k\rightarrow\R$,
$$\E[f(X)| |X|=k]=\frac{\E[f(X)\cara{|X|=k}]}{\E[\cara{|X|=k}]}=\frac{\frac{1}{k!}\int_{\Omega^k}f(x)j_k(x)\der^k x}{\frac{1}{k!}\int_{\Omega^k}j_k(x)\der^k x}.$$
This can be seen as a direct consequence of the definition of Janossy functions or  \eqref{eq:Expectation_PP}.
\end{proof}

Proposition~\ref{prop:regDPP_Dopt} shows that if we can find a solution $\nu_\star$ to the relaxation \eqref{ODproblem_relax} of the D-optimal design problem, a design sampled from $\PVS{\nu_\star}{\phi}{\pcov}$, conditionally on having $k$ points, is in expectation at least a $\frac{(k-p)!k^p}{k!}$-approximation of the (intractable) D-optimal design.
In that sense, PVS can be seen as a rounding method in the sense of \cite[Chapter 12.4]{Pukelsheim}: it takes as input a solution $\nu_\star$ to the relaxed design problem, and outputs a design in the desired form of a set of $k$ points with an optimality certificate, here in expectation.
The second denominator in the RHS of \eqref{Eq_regDPP2_Dbound} results from a second-order approximation in the proof.
We kept it to show the influence of the prior covariance matrix $\Lambda^{-1}$.
At its lowest value, this term is equal to $1$ in the non-Bayesian case ($\pcov=0$), and it grows to $(1+\frac{p-1}{k-p+1})^{-1}$ as $\pcov^{-1}$ goes towards $0$.
Our bound thus improves as the prior becomes more peaked.

To interpret the factor in Proposition~\ref{prop:regDPP_Dopt}, it is actually more convenient to measure the performance of an experimental design by its D-efficiency \citep[Chapter 11.1]{Atkinson}
\begin{equation}\label{def:D-eff}
D_{\mathrm{eff}}(X)=\left(\frac{\det(\phi(X)^T\phi(X)+\pcov)}{\det(\phi(X_{\star})^T\phi(X_{\star})+\pcov)}\right)^{1/p} \in [0,1],
\end{equation}
where $X_{\star}$ is a D-optimal design.
Since $x\mapsto x^{-1/p}$ is a convex function, it comes
\begin{multline*}
\E_{\PVS{\nu}{\phi}{\pcov}}[D_{\mathrm{eff}}(X)]\geq \left(\frac{k!}{(k-p)!k^p}\right)^{1/p}\times\\
\left(1+\frac{p-1}{k-p+1}\big(1-\det(G_\nu(\phi)(G_\nu(\phi)+\pcov)^{-1})\big)\right)^{1/p}\left(\frac{\det(G_\nu(\phi)+\pcov)}{\det(\phi(X_{\star})^T\phi(X_{\star})+\pcov)}\right)^{1/p}.
\end{multline*}
Since $\det(G_{\nu_\star}(\phi)+\pcov)\geq \det(\phi(X_{\star})^T\phi(X_{\star})+\pcov)$, designs sampled from $\PVS{\nu}{\phi}{\pcov}$ have, on average, a D-efficiency greater than $\left(\frac{k!}{(k-p)!k^p}\right)^{1/p}\geq 1-(p-1)/k$. This improves upon the results of \cite[Lemma 13]{RegDPP}, who found a D-efficiency of $1-O\big (\sqrt{\frac{\ln(k)}{k}}\big )$.

We now show that designs generated by PVS with some reference measure $\nu$ always have a better expected D-optimality criterion than i.i.d. designs drawn from $\nu/\nu(\Omega)$.

\begin{prop}\label{prop:PVS_better}
Let $\nu$ be any finite measure on $\Omega$, $X\sim\PVS{\nu}{\phi}{\pcov}$ and $Y=(Y_1,\cdots,Y_k)$ where the $Y_i$ are i.i.d. random variables with distribution $\nu/\nu(\Omega)$. Then
$$\E[\det(\phi(X)^T\phi(X)+\pcov)^{-1}\big| |X|=k]\leq \E[\det(\phi(Y)^T\phi(Y)+\pcov)^{-1}]$$
\end{prop}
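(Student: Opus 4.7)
The plan is to compute both sides explicitly from the Janossy density of Definition~\ref{def:RDPP}, observe a lucky cancellation in the conditional expectation, and then invoke Cauchy--Schwarz. The computation should collapse to the inequality
$$\nu(\Omega)^{2k}\leq \left(\int_{\Omega^k}\det(\phi(x)^T\phi(x)+\pcov)\,\der\nu^k(x)\right)\left(\int_{\Omega^k}\det(\phi(x)^T\phi(x)+\pcov)^{-1}\,\der\nu^k(x)\right),$$
which is precisely Cauchy--Schwarz applied to the two $L^2(\nu^k)$-functions $\det(\phi(\cdot)^T\phi(\cdot)+\pcov)^{1/2}$ and $\det(\phi(\cdot)^T\phi(\cdot)+\pcov)^{-1/2}$.

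More concretely, using the identity
$$\E[f(X)\mid |X|=k]=\frac{\int_{\Omega^k}f(x)j_k(x)\,\der^k x}{\int_{\Omega^k}j_k(x)\,\der^k x}$$
recalled in the proof of Proposition~\ref{prop:regDPP_Dopt} and plugging in the Janossy density of Definition~\ref{def:RDPP}, the common prefactor $\det(G_\nu(\phi)+\pcov)^{-1}\exp(-\nu(\Omega))$ cancels between numerator and denominator, leaving
$$\E\big[\det(\phi(X)^T\phi(X)+\pcov)^{-1}\mid |X|=k\big]=\frac{\int_{\Omega^k}\der\nu^k(x)}{\int_{\Omega^k}\det(\phi(x)^T\phi(x)+\pcov)\,\der\nu^k(x)}=\frac{\nu(\Omega)^k}{\int_{\Omega^k}\det(\phi(x)^T\phi(x)+\pcov)\,\der\nu^k(x)},$$
where the key cancellation is that $\det(\phi(x)^T\phi(x)+\pcov)^{-1}$ in the integrand kills the matching term in $j_k$. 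The right-hand side of the claim is, by independence,
$$\E\big[\det(\phi(Y)^T\phi(Y)+\pcov)^{-1}\big]=\nu(\Omega)^{-k}\int_{\Omega^k}\det(\phi(y)^T\phi(y)+\pcov)^{-1}\,\der\nu^k(y).$$

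Multiplying the asserted inequality out reduces it exactly to the displayed Cauchy--Schwarz bound above, which is immediate once one notes that $\phi(x)^T\phi(x)+\pcov$ is positive semidefinite so its determinant is nonnegative; strict positivity (needed to make sense of the inverse on both sides) is ensured on the set of $x$ where $\det(\phi(x)^T\phi(x)+\pcov)>0$, and whenever this determinant vanishes on a set of positive $\nu^k$-measure the right-hand side is $+\infty$ and there is nothing to prove.

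The main obstacle is essentially bookkeeping: one must keep track of the normalizing constant $\det(G_\nu(\phi)+\pcov)\exp(\nu(\Omega))$ in the Janossy density long enough to verify that it factors out identically from the numerator and denominator of the conditional expectation, so that the proof does not rely at all on $\nu$ being a solution of the relaxed design problem. Once this is done, the Cauchy--Schwarz step is a one-line application and no further probabilistic or matrix-theoretic ingredient (in contrast to Propositions~\ref{prop:Equal_Expectations} and~\ref{prop:regDPP_Dopt}) is required.
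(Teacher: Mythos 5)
Your proof is correct and follows essentially the same route as the paper: both first reduce $\E[\det(\phi(X)^T\phi(X)+\pcov)^{-1}\mid |X|=k]$ to $\nu(\Omega)^k\big(\int_{\Omega^k}\det(\phi(x)^T\phi(x)+\pcov)\,\der\nu^k(x)\big)^{-1}$ via the exact cancellation in the Janossy density. The only cosmetic difference is the final step, where the paper invokes Jensen's inequality for the convex map $x\mapsto 1/x$ while you use Cauchy--Schwarz on $\det^{1/2}$ and $\det^{-1/2}$; these are the same inequality $\E[Z]\,\E[1/Z]\geq 1$ in two guises.
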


We also show a bound for the A-optimality criterion of designs generated from the $\PVS{\nu}{\phi}{\pcov}$ distribution.
\begin{prop}\label{prop:regDPP_Aopt}
Let $X\sim \PVS{\nu}{\phi}{\pcov}$ such that $\nu(\Omega)=k$. We define
$$m_0\defeq\max(\dim(\mathrm{Ker}(\pcov)),1).$$
Then
\begin{equation}\label{Eq_regDPP2_Abound}
\E\left[\tr\big((\phi(X)^T\phi(X)+\pcov)^{-1}\big)\big| |X|=k\right]\leq\frac{k^{p+1-m_0}(k-p)!}{(k+1-m_0)!}\tr\big((G_\nu(\phi)+\pcov)^{-1}\big).
\end{equation}
\end{prop}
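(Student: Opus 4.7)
The plan is to proceed as in the proof of Proposition~\ref{prop:regDPP_Dopt}. Writing $M(x)=\phi(x)^T\phi(x)+\pcov$, the conditional-expectation formula used there gives $\E[\tr(M(X)^{-1})\mid |X|=k]=N/D$, where $D=\frac{1}{k!}\int_{\Omega^k}\det M(x)\,\der\nu^k(x)$ and $N=\frac{1}{k!}\int_{\Omega^k}\tr(M(x)^{-1})\det M(x)\,\der\nu^k(x)$. The adjugate identity $\tr(M^{-1})\det M=\tr(\adj M)=\sum_{i=1}^p\det M_{\hat i,\hat i}$, where $M_{\hat i,\hat i}$ denotes $M$ with its $i$-th row and column removed, rewrites $N=\sum_{i=1}^p Z_i$ with $Z_i=\frac{1}{k!}\int\det(\phi^{(i)}(x)^T\phi^{(i)}(x)+\pcov_{\hat i,\hat i})\,\der\nu^k(x)$ and $\phi^{(i)}$ the matrix $\phi$ with its $i$-th column removed. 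Both $N$ and $D$ are therefore PVS-type normalisation integrals (for $p-1$ and $p$ regression functions respectively), and the task reduces to comparing them in a common expansion.

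I would first reduce to diagonal $\pcov$. Both sides of the target inequality and the PVS distribution are invariant under orthogonal reparameterisations $\phi\mapsto\phi U$, $\pcov\mapsto U^T\pcov U$: trace is invariant under conjugation by orthogonal matrices, and the factor $\det(U)^2$ picked up by the Janossy density cancels between numerator and normalisation. Hence I may assume $\pcov=\diag(\mu_1,\dots,\mu_p)$ with $\mu_i>0$ for $i\le r$ and $\mu_i=0$ for $i>r$, where $r=\mathrm{rank}(\pcov)$. Combining the diagonal-expansion identity $\det(A+\diag(\mu_j))=\sum_{S\subset[p]}\det A_{S,S}\prod_{j\notin S}\mu_j$ with Cauchy--Binet and Andrei\"ef's identity, as in the normalisation argument behind \eqref{DefPCVS} carried out in Section~\ref{proof:RDPP}, a direct calculation gives, using $\nu(\Omega)=k$,
$$\frac{1}{k!}\int_{\Omega^k}\det M(x)\,\der\nu^k(x)=\frac{k^k}{k!}\sum_{S\subset[p]}\alpha_{|S|}\det G_\nu(\phi)_{S,S}\prod_{j\notin S}\mu_j,\qquad \alpha_j\defeq\frac{k!}{(k-j)!\,k^j}=\prod_{m=0}^{j-1}\Bigl(1-\tfrac{m}{k}\Bigr),$$
and an analogous formula, with the sum now running over $S\subset[p]\setminus\{i\}$, handles each $Z_i$.

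The bound then follows from a weight comparison. The coefficients $\alpha_j$ are strictly decreasing in $j\in\{0,\dots,p\}$ with $\alpha_0=1$. The factor $\prod_{j\notin S}\mu_j$ vanishes unless $\{r+1,\dots,p\}\subset S\cup\{i\}$ in the $N$-sum (respectively $\subset S$ in the $D$-sum), which restricts the effective range to $|S|\ge m_0-1$ in $N$ and $|S|\ge \max(0,p-r)$ in $D$: indeed, the identity $m_0-1=\max(p-r-1,0)$ is exactly what the definition $m_0=\max(\dim\mathrm{Ker}(\pcov),1)$ delivers in both the invertible ($r=p$) and singular ($r<p$) regimes. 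Using the uniform bounds $\alpha_{|S|}\le \alpha_{m_0-1}$ in $N$ and $\alpha_{|S|}\ge\alpha_p$ in $D$, the residual sums collapse by the same diagonal-expansion identity to $\det(G_\nu(\phi)+\pcov)$ in $D$ and to $\sum_i\det((G_\nu(\phi)+\pcov)_{\hat i,\hat i})=\det(G_\nu(\phi)+\pcov)\,\tr((G_\nu(\phi)+\pcov)^{-1})$ in $N$, the latter by a final application of the adjugate identity in reverse. Dividing yields the prefactor $\alpha_{m_0-1}/\alpha_p=k^{p+1-m_0}(k-p)!/(k+1-m_0)!$, as claimed.

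The main subtlety I foresee is the book-keeping of which subsets $S$ survive in the expansion of $N$ when $\pcov$ is singular, and in particular verifying that the minimum admissible $|S|$ in that sum is exactly $m_0-1$: the extreme value $p-r-1$ is attained precisely by the pairs $(i,S)$ with $i\in\{r+1,\dots,p\}$ (no such $i$ exists when $r=p$, and then the $\max$ inside $m_0$ restores the correct value $0$). This is what allows the invertible and singular cases to be handled by a single inequality.
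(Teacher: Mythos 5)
Your proof is correct and follows essentially the same route as the paper's: the ratio of Janossy-density integrals, the cofactor expansion of $\tr(M^{-1})\det M$ into the principal minors $\det M_{\hat i,\hat i}$, diagonalization of $\pcov$, the Cauchy--Binet/diagonal expansion over subsets $S$, and the monotonicity of $j\mapsto k^j/j!$ to bound the weights at $|S|=m_0-1$ in the numerator and $|S|=p$ in the denominator before recollapsing to $\det(G_\nu(\phi)+\pcov)\tr((G_\nu(\phi)+\pcov)^{-1})$ and $\det(G_\nu(\phi)+\pcov)$. Your bookkeeping of which subsets survive when $\pcov$ is singular --- nonzero terms have $|S|\geq m_0-1$, with the extreme value attained only when $i$ indexes a zero eigenvalue --- is if anything slightly more precise than the corresponding sentence in Section~\ref{proof:regDPP_Aopt}.
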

\begin{proof}
The proof is similar to Proposition \ref{prop:regDPP_Dopt}, and can be found in Section \ref{proof:regDPP_Aopt}.
\end{proof}

Proposition~\ref{prop:regDPP_Aopt} shows that if we can find a solution $\nu_\star$ of the relaxation \eqref{ODproblem_relax} of the A-optimal design problem, a design sampled from $\PVS{\nu_\star}{\phi}{\pcov}$, conditionally to having $k$ points, is in expectation a $\frac{k^{p+1-m_0}(k-p)!}{(k+1-m_0)!}$-approximation of the (intractable) D-optimal design. Defining A-efficiency as
\begin{equation}\label{def:A-eff}
A_{\mathrm{eff}}(X)\defeq\frac{\tr\big((\phi(X_\star)^T\phi(X_\star)+\pcov)^{-1}\big)}{\tr\big((\phi(X)^T\phi(X)+\pcov)^{-1}\big)}\in[0,1],
\end{equation}
where $X_{\star}$ is an A-optimal design, we get that designs sampled from $\PVS{\nu_\star}{\phi}{\pcov}$ have, on average, an A-efficiency greater than $\frac{(k+1-m_0)!}{k^{p+1-m_0}(k-p)!}$. In the special case where $\pcov=0$, the A-efficiency becomes $\frac{k-p+1}{k}$, as was already shown in \citep[Theorem 2.9]{VRS}. This bound is actually sharp since it corresponds to the worst possible gap between the A-optimality criterion of the approximate optimal design and that of the true optimal design \citep[Theorem C.3]{Nikolov}. Unfortunately, in Bayesian linear regression with $\pcov$ invertible, the A-efficiency is $\frac{k!}{k^p(k-p)!}$, which is much larger. Still, as $k\rightarrow\infty$, the A-efficiency is equivalent to $1-p(p-1)/2k$. This improves upon the $O\big(\sqrt{\frac{\ln(k)}{k}}\big)$ rate in \citep{RegDPP}, although for small values of $k$ the bound in \citep[Lemma 13]{RegDPP} is tighter.

\subsection{Efficient simulation of proportional volume sampling}
\label{sec:sampling}

When $\Omega$ is finite, an algorithm has been proposed in \citep{RegDPP} with a sample time of $O(np^2)$, where $n$ is the cardinality of $\Omega$, after a preprocessing cost of $O(nd^2)$.
This was done by showing that sampling from PVS boils down to sampling a determinantal point process (DPP) and a few additional i.i.d. points from $\nu$.
A similar result was shown for general design spaces $\Omega$ when $\pcov=0$ in \citep{VRS}.
These results further highlight the links between these two families of point processes.
DPPs are a large class of point processes formalized by \cite{Mac75} as a fermionic analogue to photon detection in quantum optics. Since then, DPPs have been extensively studied in the literature, from random matrix theory to spatial statistics and machine learning; see e.g. \citep{Hough, Lavancier, KuTa12}.

In this section, we extend the results of \cite{RegDPP} to a general Polish space.
Additionally, for any $k\geq p$, we show that there is a natural rejection-free sampler for our PVS conditionally to the cardinality being $k$.
First, we recall the definition of a determinantal point process on a general Polish set $\Omega$.

\begin{defin}
Let $\Omega$ be a Polish set and $\nu$ be a measure on $(\Omega,\mathcal{B}(\Omega))$. Let $\psi_1,\cdots,\psi_k\in L^2(\Omega,\nu)$ be $k$ orthonormal functions, and let $\lambda_1,\cdots,\lambda_k\in[0,1]$. We further define the function $K(x,y)\defeq\sum_{i=1}^k \lambda_i\psi_i(x)\psi_i(y)$. Then the point process with correlation functions
\begin{equation} \label{def:DPP}
\rho_n(x_1,\cdots,x_n)\der x_1\cdots\der x_n=\det\left(
\begin{array}{ccc} K(x_1,x_1) & \cdots & K(x_1,x_n)\\
\vdots & \ddots & \vdots\\
K(x_n,x_1) & \cdots & K(x_n,x_n) \end{array}
\right)\der\nu^k(x), \quad \forall n\geq 1,
\end{equation}
 is well-defined. We call it the DPP with reference measure $\nu$ and kernel $K$, and denote it by $\mathrm{DPP}(K,\nu)$.
\end{defin}
Note that DPPs can be sampled by an algorithm due to \cite[Algorithm 18]{Hough}; see also \citep[Algorithm 1]{Lavancier}. Additionally, DPPs are defined through their correlation functions. In order to compare them to PVS, we first get an explicit expression of the correlation functions of proportional volume sampling.

\begin{prop}\label{prop:Corr_functions}
The $n$-th order correlation function $\rho_n$ of the $\PVS{\nu}{\phi}{\pcov}$ point process are well-defined for all $n\in\N$, and write
$$\rho_n(x_1,\cdots,x_n)\der x_1\cdots\der x_n=\frac{\det(G_\nu(\phi)+\pcov+\phi(x)^T\phi(x))}{\det(G_\nu(\phi)+\pcov)}\prod_{i=1}^n\der\nu(x_i).$$
\end{prop}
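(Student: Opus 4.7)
The plan is to combine the standard identity expressing correlation functions as a series of integrals of Janossy densities with the Cauchy--Binet computation already invoked by the authors to establish well-definedness of \eqref{DefPCVS} (which corresponds to the special case $n=0$ of our claim). By the standard relation between correlation functions and Janossy densities \citep[Section 5.4]{DVJ},
\begin{equation*}
\rho_n(x_1,\ldots,x_n)=\sum_{m\geq 0}\frac{1}{m!}\int_{\Omega^m}j_{n+m}(x_1,\ldots,x_n,y_1,\ldots,y_m)\,\der\nu^m(y).
\end{equation*}
I would plug in the explicit Janossy density of Definition~\ref{def:RDPP} and use the block decomposition $\phi(x,y)^T\phi(x,y)=\phi(x)^T\phi(x)+\Phi^T\Phi$, with $\Phi\defeq\phi(y)\in\R^{m\times p}$. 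Setting $M\defeq\phi(x)^T\phi(x)+\pcov$, Sylvester's identity then gives $\det(M+\Phi^T\Phi)=\det(M)\det(I_m+\Phi M^{-1}\Phi^T)$.

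Next, I would expand $\det(I_m+\Phi M^{-1}\Phi^T)=\sum_{S\subseteq[m]}\det((\Phi M^{-1}\Phi^T)_S)$ via the diagonal-plus-matrix identity recalled in the excerpt (with $\lambda_i=1$). Each principal minor indexed by a subset $S$ of size $s$ depends only on the points $\{y_i:i\in S\}$, so integrating against $\der\nu^m$ factors out $\nu(\Omega)^{m-s}$. Collecting the $\binom{m}{s}$ equal contributions and swapping the order of summation over $m$ and $s$ collapses the double series into
\begin{equation*}
\exp(\nu(\Omega))\sum_{s\geq 0}\frac{C_s}{s!},\qquad C_s\defeq\int_{\Omega^s}\det\big(\Phi_{[s]}M^{-1}\Phi_{[s]}^T\big)\der\nu^s,
\end{equation*}
where $\Phi_{[s]}$ denotes the $s\times p$ matrix with rows $\phi(y_i)^T$, $i=1,\ldots,s$.

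The main step left is to identify $\sum_{s\geq 0}C_s/s!$ with $\det(I_p+M^{-1}G_\nu(\phi))$. Since $\Phi_{[s]}M^{-1}\Phi_{[s]}^T$ has rank at most $p$, the series actually truncates at $s=p$, so convergence is not an issue. I would establish the identity by applying Cauchy--Binet twice to write $\det(\Phi_{[s]}M^{-1}\Phi_{[s]}^T)=\sum_{|T|=|T'|=s}\det\big((\Phi_{[s]})_{\cdot,T}\big)\det\big((M^{-1})_{T,T'}\big)\det\big((\Phi_{[s]})_{\cdot,T'}\big)$, then using the standard identity $\int\det\big((\Phi_{[s]})_{\cdot,T}\big)\det\big((\Phi_{[s]})_{\cdot,T'}\big)\der\nu^s=s!\det\big((G_\nu(\phi))_{T,T'}\big)$, and recognizing the resulting sum of $s$-th principal minors of $M^{-1}G_\nu(\phi)$ as the $s$-th elementary symmetric polynomial in its eigenvalues. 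Multiplying by $\det(M)$ then gives $\det(M+G_\nu(\phi))=\det(\phi(x)^T\phi(x)+\pcov+G_\nu(\phi))$, and the leftover factor $\exp(\nu(\Omega))$ cancels with the normalizer in the denominator of $j_{n+m}$, yielding the claimed formula. The main obstacle is the evaluation of $\sum_s C_s/s!$; once handled, every other step mirrors the well-definedness computation of Section~\ref{proof:RDPP}.
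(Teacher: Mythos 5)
Your overall strategy is the paper's: expand $\rho_n$ via the Janossy-density identity and show that $\sum_{m\geq 0}\frac{1}{m!}\int_{\Omega^m}\det\big(M+\phi(y)^T\phi(y)\big)\der\nu^m(y)=\det\big(M+G_\nu(\phi)\big)\exp(\nu(\Omega))$ with $M\defeq\phi(x)^T\phi(x)+\pcov$, after which the normalizer cancels. Where you diverge is in how this series is evaluated. The paper simply observes that the normalization computation of Appendix~\ref{proof:RDPP} is valid for an arbitrary non-negative symmetric matrix in place of $\pcov$ (diagonalize it, expand $\det(\psi(y)^T\psi(y)+D_\lambda)$ over subsets, apply Cauchy--Binet), and invokes it with $\pcov$ replaced by $M$. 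You instead factor $\det(M+\phi(y)^T\phi(y))=\det(M)\det(I_m+\phi(y)M^{-1}\phi(y)^T)$ by Sylvester's identity and expand the second factor. Your bookkeeping from there --- the $\binom{m}{s}$ count, the collapse of the double series to $\exp(\nu(\Omega))\sum_s C_s/s!$, and the double Cauchy--Binet identification of $\sum_s C_s/s!$ with $\det(I_p+M^{-1}G_\nu(\phi))$ via principal minors --- is correct, and the truncation at $s=p$ is a valid rank observation.

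The one genuine defect is your repeated use of $M^{-1}$. The matrix $\pcov$ is only assumed positive semi-definite (it is $0$ in the non-Bayesian case), and $\phi(x)^T\phi(x)$ has rank at most $n$, so whenever $n<p$ and $\pcov$ is singular --- e.g.\ for the intensity $\rho_1$ with $\pcov=0$ --- the matrix $M$ is not invertible and Sylvester's identity is unavailable as written. This is fixable: prove the identity for $M+\epsilon I_p$ and let $\epsilon\downarrow 0$, interchanging limit and series by dominated convergence (monotonicity of the determinant on the Loewner order gives the $\epsilon=1$ term as an integrable dominating function, integrable by the normalization computation), and note that both sides are continuous in $\epsilon$. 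Alternatively, rerun the Appendix~\ref{proof:RDPP} computation with $\pcov$ replaced by $M$, which never inverts anything; this is exactly why the paper's route is the cleaner one here. With that repair, your argument is complete and yields the stated formula.
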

\begin{proof}
The proof is shown in Section \ref{proof:Corr_functions}. It is mainly based around the usual identity linking correlation functions and Janossy densities; see \cite[Lemma 5.4.III]{DVJ}:
$$\rho_n(x_1,\cdots,x_n)\der^n x=\sum_{m\geq 0}\frac{1}{m!}\int_{\Omega^m}j_{n+m}(x,y)\der^m y.$$
\end{proof}

As a consequence, we get that PVS can be expressed as the superposition of a DPP and a Poisson point process.
This is to be put in parallel to \cite[Theorem 2.4]{VRS} which shows a similar result in the non-Bayesian case.

\begin{prop}\label{prop:Superpos}
Let $\nu$ be a finite measure on $\Omega$. We write the spectral decomposition of the matrix $G_\nu(\phi)^{1/2}(G_\nu(\phi)+\pcov)^{-1}G_\nu(\phi)^{1/2}$ as $P^TDP$, where $D=\diag(\lambda_1,\cdots,\lambda_p)$.
Furthermore, we define the functions
$$(\psi_1(x),\cdots,\psi_p(x))\defeq (\phi_1(x),\cdots,\phi_p(x))G_{\nu}(\phi)^{-1/2}P.$$

Let $X$ be a DPP with kernel
\begin{equation*}\label{Kernel_superpos}
K(x,y)\defeq\sum_{i=1}^p \lambda_i\psi_i(x)\psi_i(y)
\end{equation*}
and reference measure $\nu$, and let $Y$ be an independent Poisson point process with intensity $\nu$. Then the distribution of the superposition $X\cup Y$ is $\PVS{\nu}{\phi}{\pcov}$.
\end{prop}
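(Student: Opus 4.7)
The plan is to compute the correlation functions of the superposition $X\cup Y$ and verify that they coincide with those of $\PVS{\nu}{\phi}{\pcov}$ computed in Proposition~\ref{prop:Corr_functions}. Since the correlation functions of a point process admitting Janossy densities determine its law uniquely, this will close the proof.

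The first step is to rewrite the DPP kernel in a more tractable form. Unpacking the definition of $\psi$ together with the spectral decomposition of $G_\nu(\phi)^{1/2}(G_\nu(\phi)+\pcov)^{-1}G_\nu(\phi)^{1/2}$, a direct computation gives
$$K(x,y)=\sum_{i=1}^p\lambda_i\psi_i(x)\psi_i(y)=\phi(x)(G_\nu(\phi)+\pcov)^{-1}\phi(y)^T,$$
and the $\psi_i$ are orthonormal in $L^2(\Omega,\nu)$ (this is immediate from $P^TP=I_p$ and the definition of $G_\nu(\phi)$). Moreover, the eigenvalues $\lambda_i$ all lie in $[0,1]$: since $\pcov\succeq 0$, we have $(G_\nu(\phi)+\pcov)^{-1}\preceq G_\nu(\phi)^{-1}$ in the Loewner order, so $G_\nu(\phi)^{1/2}(G_\nu(\phi)+\pcov)^{-1}G_\nu(\phi)^{1/2}\preceq I_p$. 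This certifies that $K$ defines a valid DPP kernel in the sense of \cite{Hough}.

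Next, for two independent point processes $X,Y$ on $\Omega$, a standard argument based on partitioning the $n$ points of $X\cup Y$ according to which process they come from gives
$$\rho_n^{X\cup Y}(x_1,\dots,x_n)=\sum_{S\subseteq[n]}\rho^X_{|S|}\bigl((x_i)_{i\in S}\bigr)\,\rho^Y_{n-|S|}\bigl((x_i)_{i\notin S}\bigr).$$
For the Poisson part, $\rho^Y_m(y_1,\dots,y_m)\,dy_1\cdots dy_m=\prod_j d\nu(y_j)$, while for the DPP, $\rho^X_m(x_1,\dots,x_m)\,dx_1\cdots dx_m=\det\bigl(K(x_i,x_j)\bigr)_{i,j=1}^m\prod_i d\nu(x_i)$. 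Factoring out the common product of $\nu$-measures, we arrive at
$$\rho_n^{X\cup Y}(x_1,\dots,x_n)\,dx_1\cdots dx_n=\Bigl(\sum_{S\subseteq[n]}\det K_{S,S}(x)\Bigr)\prod_{i=1}^n d\nu(x_i),$$
where $K_{S,S}(x)$ denotes the principal submatrix of $[K(x_i,x_j)]_{i,j=1}^n$ indexed by $S$.

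The final step is to simplify the inner sum with two classical determinantal identities. First, $\sum_{S\subseteq[n]}\det M_{S,S}=\det(I_n+M)$ (the very identity already exploited after Definition~\ref{def:RDPP}) turns the sum into $\det\bigl(I_n+\phi(x)(G_\nu(\phi)+\pcov)^{-1}\phi(x)^T\bigr)$. Then Sylvester's identity $\det(I_n+UV)=\det(I_p+VU)$, applied with $U=\phi(x)$ and $V=(G_\nu(\phi)+\pcov)^{-1}\phi(x)^T$, rewrites this as
$$\det\bigl(I_p+(G_\nu(\phi)+\pcov)^{-1}\phi(x)^T\phi(x)\bigr)=\frac{\det\bigl(G_\nu(\phi)+\pcov+\phi(x)^T\phi(x)\bigr)}{\det\bigl(G_\nu(\phi)+\pcov\bigr)},$$
which is precisely the correlation function of $\PVS{\nu}{\phi}{\pcov}$ stated in Proposition~\ref{prop:Corr_functions}. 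The only step with genuine content is the first one, where the orthogonal change of basis must be tracked carefully to identify $K$ as the reproducing kernel $\phi(\cdot)(G_\nu(\phi)+\pcov)^{-1}\phi(\cdot)^T$; the remainder reduces to standard determinantal algebra.
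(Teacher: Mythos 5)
Your proposal is correct and follows essentially the same route as the paper: both identify $K(x,y)=\phi(x)(G_\nu(\phi)+\pcov)^{-1}\phi(y)^T$, compute the correlation functions of the superposition by partitioning points between the DPP and the Poisson process, and collapse the subset sum to $\det(I_n+K[x])$ before matching it, via Sylvester's identity, with the PVS correlation functions of Proposition~\ref{prop:Corr_functions}. The only cosmetic difference is that you invoke the superposition formula for correlation functions of independent processes as a standard fact, whereas the paper derives it explicitly from the expectation identity for sums over distinct tuples.
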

\begin{proof}
The proof is given in Section \ref{proof:Superpos}. We rewrite the correlation functions of PVS as $\det(I_n+(K(x_i,x_j))_{1\leq i,j\leq n})$, which corresponds to the correlation functions of the superposition of a DPP and a Poisson point process.
\end{proof}
Note that when $\nu$ is not a diffuse measure, the Poisson point process with intensity $\nu$ can generate the same point multiple times. In this case, $X$ and $Y$ are understood as multisets and their superposition is understood as a multiset union. We also remark that by definition of the functions $\psi_i$,
$$G_\nu(\psi)=P^TG_{\nu}(\phi)^{-1/2}G_{\nu}(\phi)G_{\nu}(\phi)^{-1/2}P=I_p,$$
so that $\psi_1,\cdots,\psi_p$ is an orthonormal family of $L^2(\Omega,\nu)$. Moreover, since
$$0_p\leq G_\nu(\phi)^{1/2}(G_\nu(\phi)+\pcov)^{-1}G_\nu(\phi)^{1/2}\leq G_\nu(\phi)^{1/2}G_\nu(\phi)^{-1}G_\nu(\phi)^{1/2}=I_p$$
in Loewner's order, the eigenvalues $\lambda_1,\cdots,\lambda_p$ are all in $[0,1]$, making the DPP in Proposition \ref{prop:Superpos} is well-defined. A direct consequence of this result, proved in Section \ref{proof:Avg_num_points}, is an explicit expression of the average number of points generated by PVS.

\begin{corr}\label{corr:Avg_num_points}
Let $X\sim \PVS{\nu}{\phi}{\pcov}$, then
$$\E[|X|]=\nu(\Omega)+\tr((G_\nu(\phi)+\pcov)^{-1}G_\nu(\phi)).$$
\end{corr}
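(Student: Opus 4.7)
The plan is to use Proposition \ref{prop:Superpos} to reduce the computation to the separate means of a DPP and a Poisson point process, for which the expected cardinality is standard.

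First, by Proposition \ref{prop:Superpos}, we may realize $X$ as the (multiset) union $X_1 \cup X_2$, where $X_1 \sim \mathrm{DPP}(K,\nu)$ with $K(x,y) = \sum_{i=1}^p \lambda_i \psi_i(x) \psi_i(y)$ and $X_2$ is an independent Poisson point process with intensity $\nu$. By linearity of expectation, $\E[|X|] = \E[|X_1|] + \E[|X_2|]$.

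The Poisson contribution is immediate: $\E[|X_2|] = \nu(\Omega)$. For the DPP, the first-order correlation function is $\rho_1(x) = K(x,x)$, so
\begin{equation*}
\E[|X_1|] = \int_\Omega K(x,x)\,\der\nu(x) = \sum_{i=1}^p \lambda_i \int_\Omega \psi_i(x)^2\,\der\nu(x) = \sum_{i=1}^p \lambda_i,
\end{equation*}
since $(\psi_i)_{i=1}^p$ is orthonormal in $L^2(\Omega,\nu)$, as noted right after Proposition \ref{prop:Superpos}.

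It remains to identify $\sum_{i=1}^p \lambda_i$ with the stated trace. Since $\lambda_1,\dots,\lambda_p$ are, by definition, the eigenvalues of $G_\nu(\phi)^{1/2}(G_\nu(\phi)+\pcov)^{-1} G_\nu(\phi)^{1/2}$, the cyclic property of the trace gives
\begin{equation*}
\sum_{i=1}^p \lambda_i = \tr\!\left( G_\nu(\phi)^{1/2}(G_\nu(\phi)+\pcov)^{-1} G_\nu(\phi)^{1/2} \right) = \tr\!\left( (G_\nu(\phi)+\pcov)^{-1} G_\nu(\phi) \right),
\end{equation*}
which yields the announced formula. There is no real obstacle here; the only mild subtlety is that if $\nu$ is not diffuse the union $X_1 \cup X_2$ must be interpreted as a multiset, but this is precisely the convention adopted in the discussion following Proposition \ref{prop:Superpos} and it does not affect the computation of the mean cardinality.
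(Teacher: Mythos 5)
Your proof is correct and follows essentially the same route as the paper: both decompose $X$ via Proposition \ref{prop:Superpos} into an independent DPP and Poisson process and integrate the DPP's first intensity, the only difference being that you compute the integral in the orthonormal basis $(\psi_i)$ and sum the eigenvalues, whereas the paper integrates $\phi(x)(G_\nu(\phi)+\pcov)^{-1}\phi(x)^T$ directly and moves the trace inside the integral.
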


Proposition \ref{prop:Superpos} yields a natural sampling algorithm for proportional volume sampling as a superposition of an independent DPP and a Poisson point process. For ease of reference, the pseudocode of the algorithm in Proposition~\ref{prop:Superpos} is given in Figure~\ref{f:algo}.

In practice, it is customary to fix the cardinality $k$ of an experimental design in advance.
We thus would like to condition the two point processes in the superposition of Proposition~\ref{prop:Superpos} on their union having total size $k$.
This is possible using a well-known decomposition of the DPP $X$ with kernel $K(x,y)\defeq\sum_{i=1}^p \lambda_i\psi_i(x)\psi_i(y)$ as a mixture of projection DPPs \citep{Hough}.
More precisely, to sample $X$, one first samples $p$ independent Bernoulli random variables $I_i\sim \text{Ber}(\lambda_i)$. Then, conditionally on these Bernoullis, one uses the chain rule to sample from the projection DPP with kernel $\sum_{i=1}^p I_i\psi_i(x)\psi_i(y)$, which yields $\sum_{i=1}^p I_i$ points almost surely.
As a consequence, conditioning PVS to having size $k$ can be performed by first sampling independent $I_i\sim\text{Ber}(\lambda_i)$ and an independent Poisson random variable $N$ with parameter $\nu(\Omega)$, all conditioned on $N+\sum_{i=1}^p I_i=k$. Then, the union of a DPP with kernel $\sum_{i=1}^p I_i\psi_i(x)\psi_i(y)$ and $N$ i.i.d. points with distribution $\nu/\nu(\Omega)$ on $\Omega$ has the same distribution as a sample from PVS conditioned to be of size $k$.
The pseudocode of this algorithm is given in Figure~\ref{f:algo_Condition_Cardinal}.
The costly steps are the one-time computation of the Gramian and its square root, and the DPP sampler.
For the Gramian, when exact computation is not possible, one has to resort to numerical integration, possibly even Monte Carlo methods \citep{RoCa04}, depending on how complicated the set $\Omega$ is, the dimension of the design space, and the regularity of the functions $\phi_i$ and the reference measure $\nu$.
Getting an approximate Gramian makes the overall algorithm heuristic.
For the DPP sampler, the number of operations is at least cubic in the number $p$ of points in the DPP sample. The \emph{at least} corresponds to the number of rejections in the $p$ rejection samplers involved in the chain rule for projection DPPs; see \citep{GaBaVa19a} for empirical investigations on these rejection numbers.
Overall, while the cost of the algorithm in Figure~\ref{f:algo} and~\ref{f:algo_Condition_Cardinal} is at least cubic in $p$, this is not a obstacle in practice. Indeed, experimental design is used in situations where obtaining the regression labels takes time or money, so that spending a few minutes finding a good design is considered negligible.

\begin{figure}
  \begin{algorithm}[H]
\SetAlgoLined
\DontPrintSemicolon
1: Compute the Gramian matrix $G_\nu(\phi)$.\;
2: Compute the spectral decomposition $G_\nu(\phi)^{1/2}(G_\nu(\phi)+\pcov)^{-1}G_\nu(\phi)^{1/2}=P^T D P$ where $D=\diag(\lambda_1,\cdots,\lambda_p)$.\;
3: Compute the orthonormalized functions $(\psi_1,\cdots,\psi_p)\defeq (\phi_1,\cdots,\phi_p)G_{\nu}(\phi)^{-1/2}P$.\;
4: Sample $X$ from a DPP$(K,\nu)$ distribution, where $K(x,y)=\sum_i \lambda_i\psi_i(x)\psi_i(y)$.\;
5: Sample $Y$ from a Poisson point process with intensity $\nu$.\;
6: Return $X\cup Y$.
\caption{A sampler for proportional volume sampling; see Proposition~\ref{prop:Superpos}.}
\end{algorithm}
\caption{An algorithm to sample PVS, using a DPP sampler as a subroutine.}
\label{f:algo}
\end{figure}

\begin{figure}
  \begin{algorithm}[H]
\SetAlgoLined
\DontPrintSemicolon
1: Compute the Gramian matrix $G_\nu(\phi)$.\;
2: Compute the spectral decomposition $G_\nu(\phi)^{1/2}(G_\nu(\phi)+\pcov)^{-1}G_\nu(\phi)^{1/2}=P^TD P$ where $D=\diag(\lambda_1,\cdots,\lambda_p)$.\;
3: Compute the orthonormalized functions $(\psi_1,\cdots,\psi_p)\defeq (\phi_1,\cdots,\phi_p)G_{\nu}(\phi)^{-1/2}P$.\;
4: Sample $p$ independent Bernoullis $I_i\sim\text{Ber}(\lambda_i)$ and an independent Poisson random variable $N$ with parameter $\nu(\Omega)$, conditionally to $N+\sum_{i=1}^p I_i = k$.\;
5: Sample $X$ from a DPP$(K,\nu)$ distribution, where $K(x,y)=\sum_i I_i\psi_i(x)\psi_i(y)$.\;
6: Sample $N$ points $Y=\{Y_1,\cdots,Y_{N}\}\subset\Omega$ i.i.d. with distribution $\nu/\nu(\Omega)$.\;
7: Return $X\cup Y$.
\caption{A sampler for proportional volume sampling conditioned on having a given cardinality $k$; see Proposition~\ref{prop:Superpos}.}
\end{algorithm}
\caption{An algorithm to sample PVS conditioned on having a given cardinality $k$, using a projection DPP sampler as a subroutine.}
\label{f:algo_Condition_Cardinal}
\end{figure}

\subsection{A numerical illustration of PVS for experimental design}
\label{s:simple_experiment}
We illustrate the theoretical results from Section \ref{sec:PVS_results} on a simple example.
We consider the design space $\Omega=[0,1]^2$ and $k=p=10$, with the regression functions $\phi_i$ being the $p$ bivariate polynomials of degree $\leq 3$, renormalized so that $\|\phi_i\|_{L^2(\Omega)}=1$ for all $i$.
For our example, we considered the case where the inverse prior covariance matrix $\pcov$ is either $I_{10}$, $10^{-2} I_{10}$ or $10^{-4} I_{10}$.
For each case, we show in Figure \ref{fig:CompDPP-U} the distribution of the D-efficiency \eqref{def:D-eff} and A-efficiency \eqref{def:A-eff} of $2000$ random designs.
The A and D-optimal designs are not known, so we use as a baseline the best design found by a hundred runs of Federov's exchange heuristic \citep{Fedorov}, which we observed to perform quite well in this example; see Section \ref{sec: Example2}.

The first and third columns in each subfigure of Figure \ref{fig:CompDPP-U} respectively correspond to designs generated uniformly with i.i.d. points, and designs sampled from $\PVS{\nu}{\phi}{\pcov}$, where $\nu$ is the uniform distribution over $\Omega$.
$\PVS{\nu}{\phi}{\pcov}$ improves over uniform i.i.d. sampling, and the improvement is larger when $\pcov$ is small, i.e., when the prior in the Bayesian linear regression has a large variance.
This can be seen as a consequence of Proposition \ref{prop:Superpos} and Corollary \ref{corr:Avg_num_points}, stating that $\PVS{\nu}{\phi}{\pcov}$ consists in an average of
$$\tr(G_\nu(\phi)(G_\nu(\phi)+\pcov)^{-1})$$
points generated from a DPP completed with i.i.d. samples. When $\pcov$ takes large values, $\PVS{\nu}{\phi}{\pcov}$ is thus very close to an i.i.d. distribution, explaining the similarity between both distributions in Figure \ref{fig:CompDPP-U_Dopt_1}, while in Figure \ref{fig:CompDPP-U_Dopt_10000}  $\PVS{\nu}{\phi}{\pcov}$ performs significantly better than its i.i.d. counterpart.
In fact, in Figure \ref{fig:CompDPP-U_Dopt_10000}, the average PVS design is on par with the best design out of thousands of uniform i.i.d. designs.
\begin{figure}
\centering
\subfloat[D-efficiency, $\pcov=I_{10}$\label{fig:CompDPP-U_Dopt_1}]{\includegraphics[width=\Vplot]{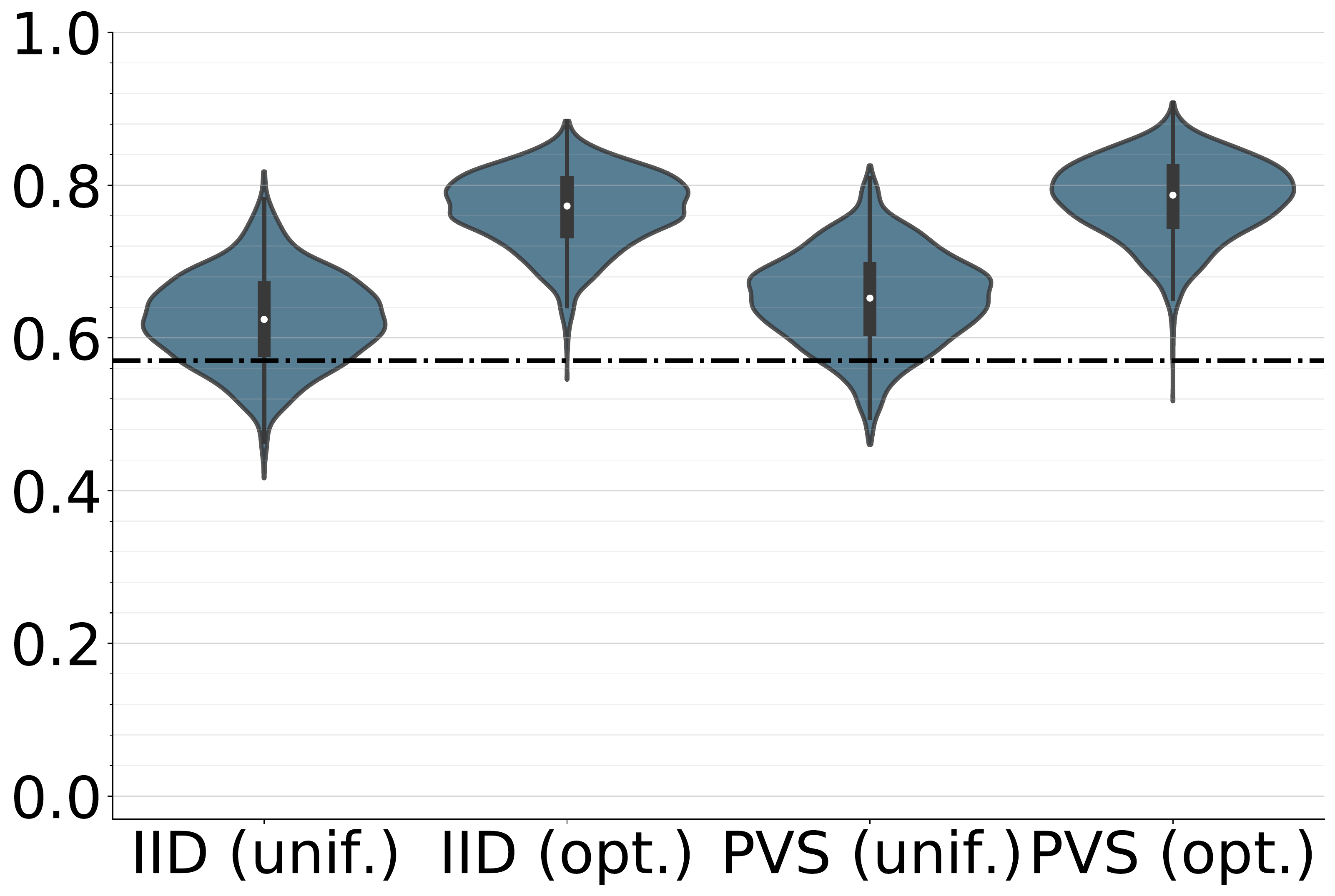}}\hfill
\subfloat[D-efficiency, $\pcov=10^{-2} I_{10}$\label{fig:CompDPP-U_Dopt_100}]{\includegraphics[width=\Vplot]{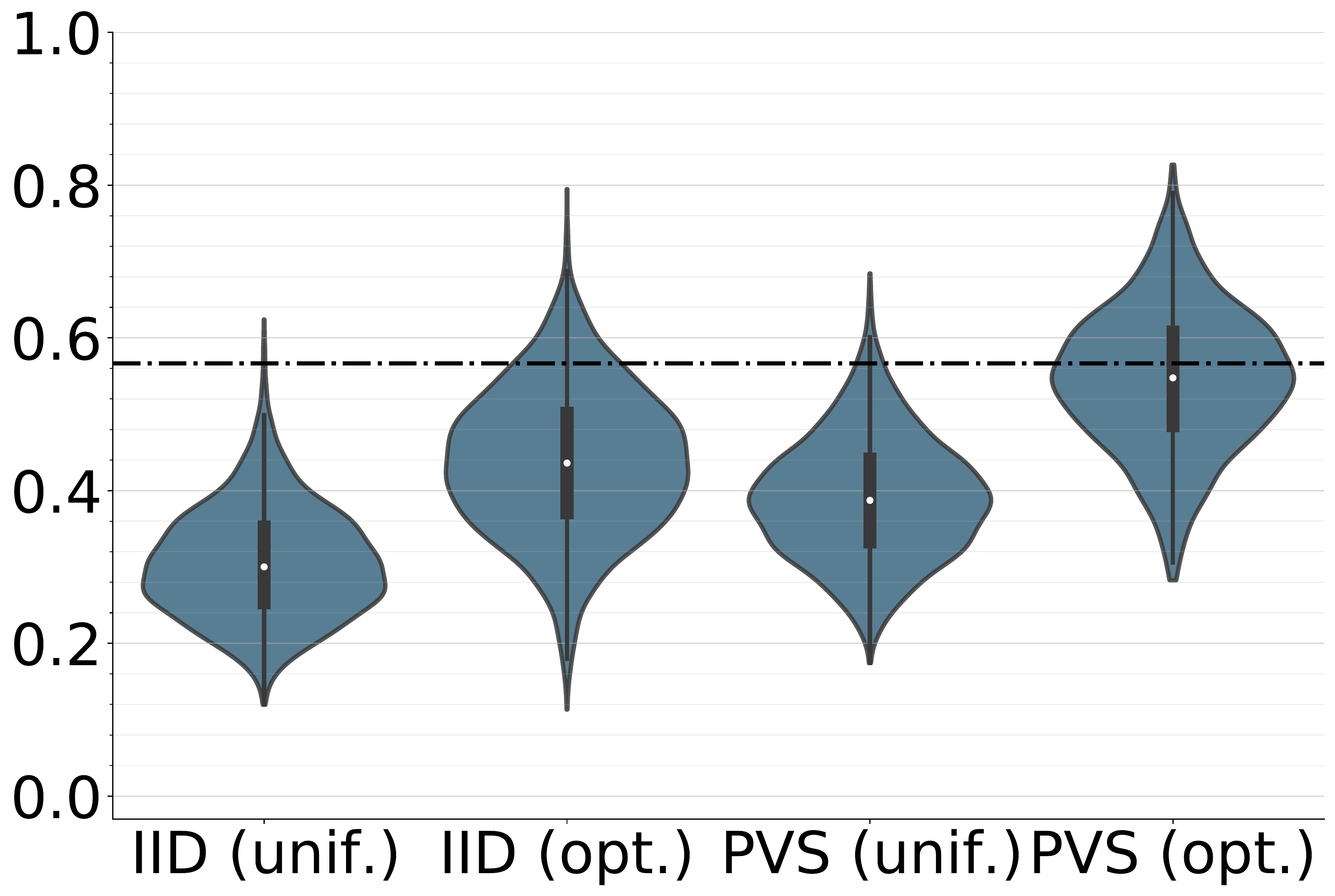}}\hfill
\subfloat[D-efficiency, $\pcov=10^{-4} I_{10}$\label{fig:CompDPP-U_Dopt_10000}]{\includegraphics[width=\Vplot]{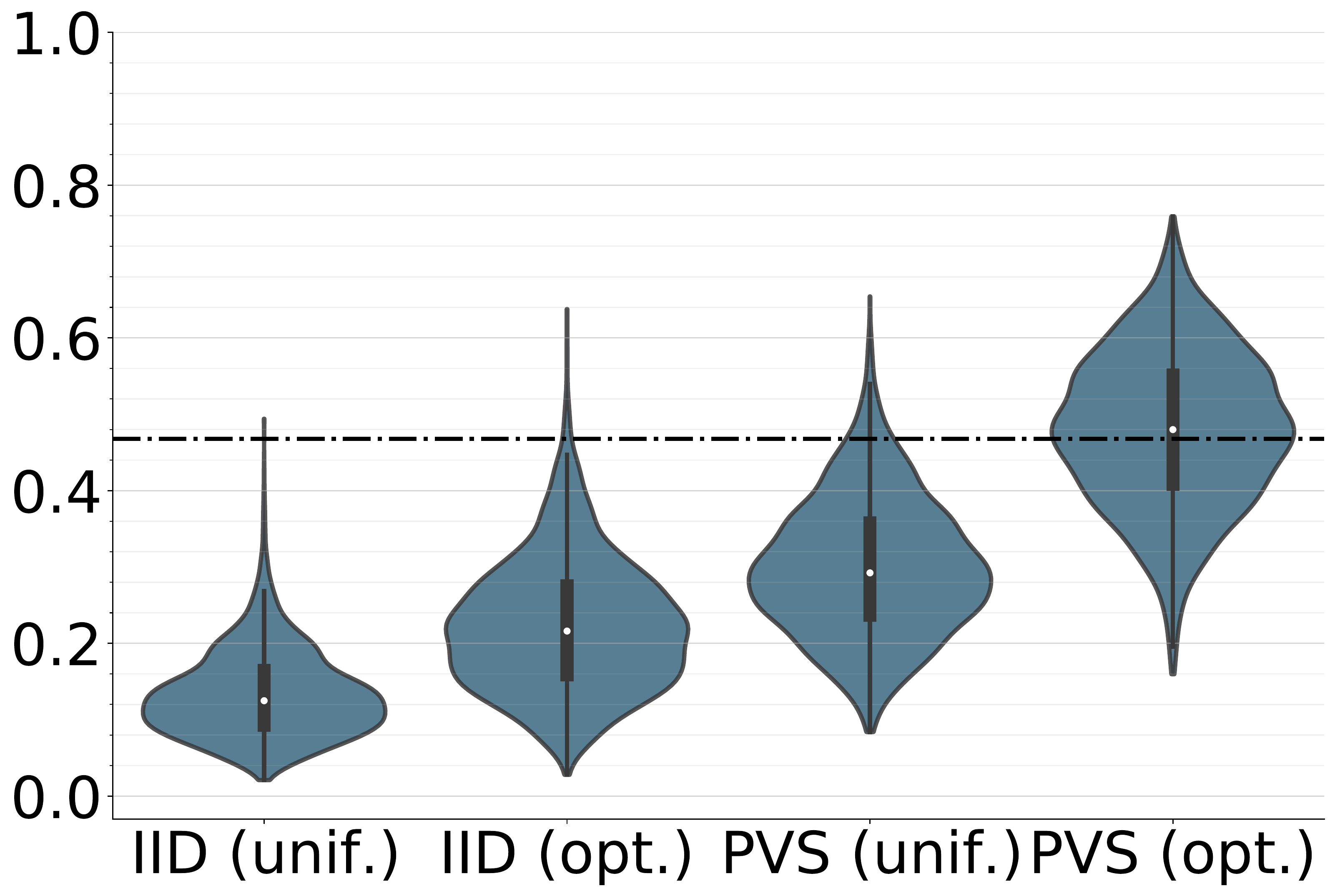}}
\hfill
\subfloat[A-efficiency, $\pcov=I_{10}$\label{fig:CompDPP-U_Aopt_1}]{\includegraphics[width=\Vplot]{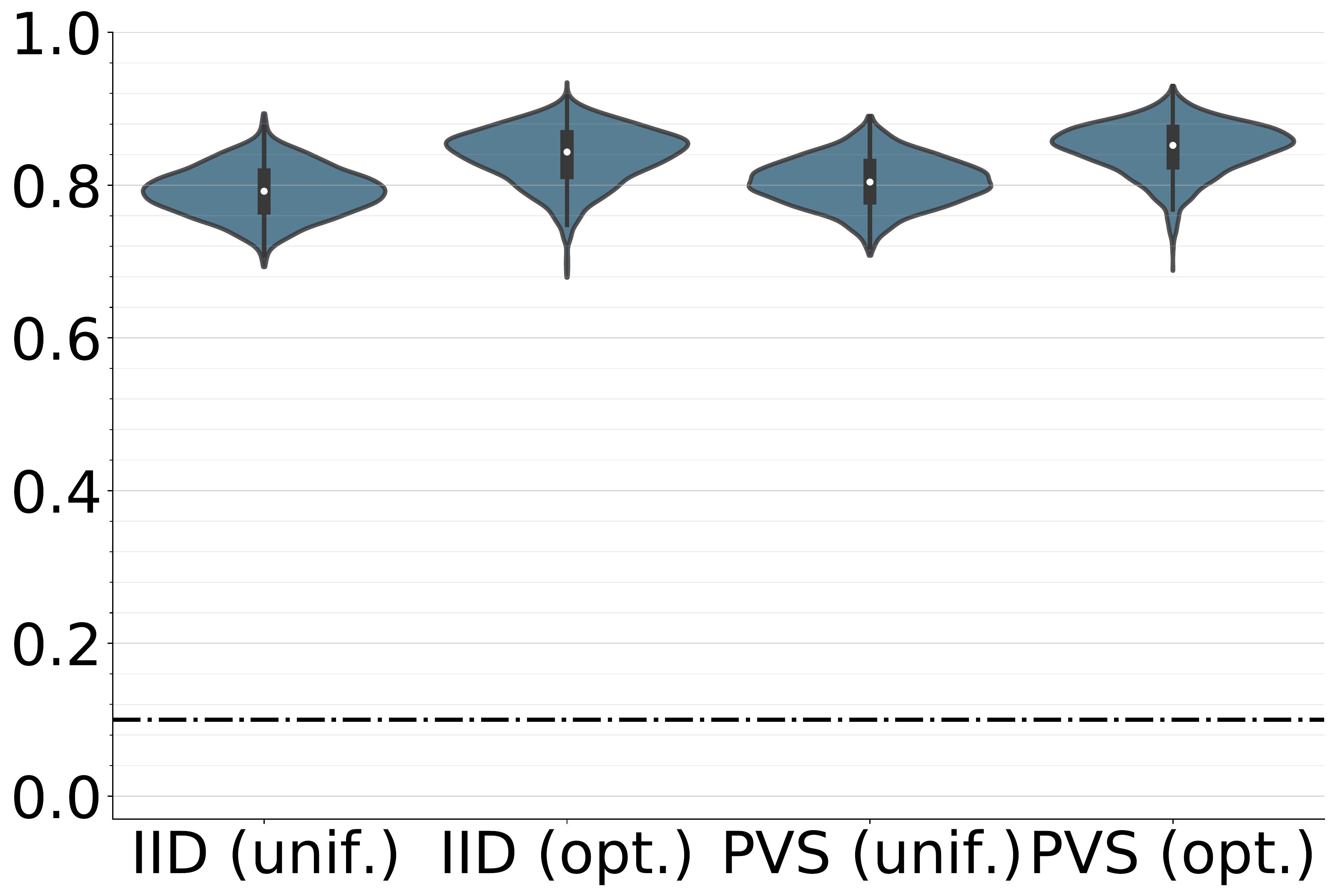}}
\hfill
\subfloat[A-efficiency, $\pcov=10^{-2} I_{10}$\label{fig:CompDPP-U_Aopt_100}]{\includegraphics[width=\Vplot]{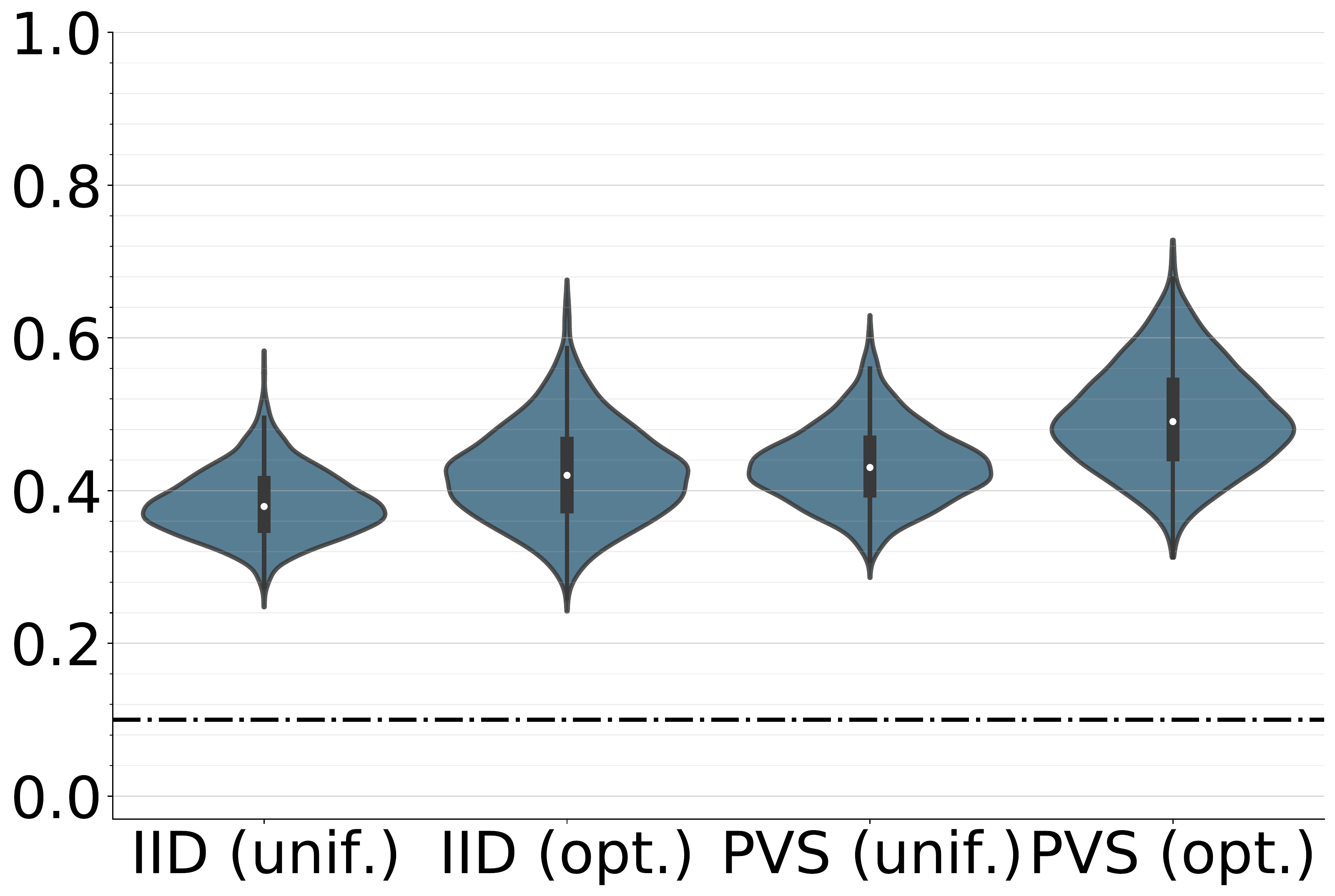}}
\hfill
\subfloat[A-efficiency, $\pcov=10^{-4} I_{10}$\label{fig:CompDPP-U_Aopt_10000}]{\includegraphics[width=\Vplot]{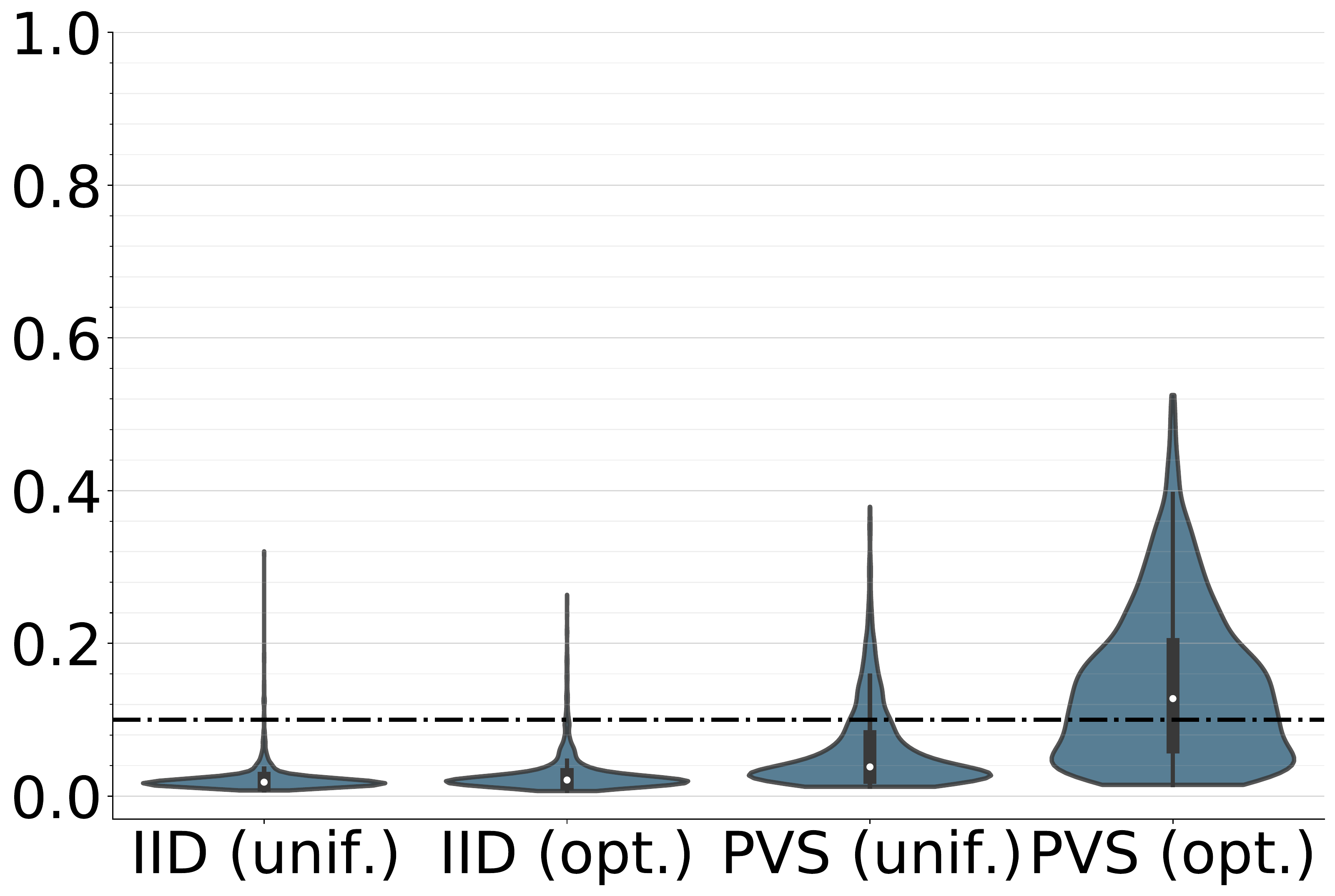}}
     \caption{{\label{fig:CompDPP-U}\small Violin and boxplots of the Bayesian A-efficiency and D-efficiency of $2000$ random designs either sampled i.i.d. from $\nu$ or from $\PVS{\nu}{\phi}{\pcov}$. The density $\nu$ is either uniform on $\Omega$ or the optimized density shown in Figure \ref{fig:Opti_Poly_Dens}.
     Dashdotted lines show the bounds of Proposition \ref{prop:regDPP_Dopt} (top) and the non-Bayesian bound of Proposition \ref{prop:regDPP_Aopt} (bottom); See Section \ref{s:simple_experiment} for details.}}
\end{figure}
Now, as discussed in Section~\ref{sec:PVS_results}, a natural way to further improve on these results is to set $\nu$ to be a solution $\nu_\star$ of the relaxed optimization problem \eqref{ODproblem_relax}.
In the non-Bayesian setting, a finitely supported approximate optimal design is already known \citep{Multi_Designs}, but this is not the case when $\Lambda\neq 0$.
In order to illustrate the results of Propositions \ref{prop:regDPP_Dopt} and \ref{prop:regDPP_Aopt}, we thus consider a simple parametrized form for the density of $\nu$, chosen to make the relaxed optimization problem convex and thus amenable to numerical solvers.
The details of our parametrization can be found in Appendix~\ref{experiment:approximation_OD}.

The second and fourth columns in each panel of Figure~\ref{fig:CompDPP-U} show the results of i.i.d. designs drawn from $\nu_\star$ and designs drawn from $\PVS{\nu_\star}{\phi}{\pcov}$, respectively.
As expected, using $\nu^\star$ further improves over PVS with uniform reference measure.
We observe that the dashdotted bound from Proposition~\ref{prop:regDPP_Dopt} is pessimistic when $\pcov$ takes large values. This is even more visible for the bound from Proposition~\ref{prop:regDPP_Aopt}, which is $10^{10}/10!\approx 0.00036$, and thus not shown in the second row of panels.
Instead, we show the bound for $\Lambda=0$ in all three lower panels: the bounds equals $0.1$, it represents well what happens at small $\Lambda$, but is still far lower than the observed average A-efficiency of designs generated from $\PVS{\nu_\star}{\phi}{\pcov}$ for large $\Lambda$.
This suggests that the bound of Proposition~\ref{prop:regDPP_Aopt} could be improved in the Bayesian case.
It is also interesting to see here the empirical confirmation that an optimized reference measure always improves over i.i.d. sampling.

In short, when looking at Figure \ref{fig:CompDPP-U} from left to right, we move from an informative prior to a frequentist setting. Simultaneously, we move from loose bounds and parity between PVS and iid designs to tighter bound for D-efficiency and big differences in performance between PVS and iid.

Finally, we show in Figure~\ref{fig:PtSimu} a few sample designs from the various distributions considered in Figure \ref{fig:CompDPP-U_Dopt_10000}. PVS naturally forces points apart from each other, showing a so-called repulsive behavior compared to i.i.d. designs.
This well-spreadedness is also characteristic of the support of optimal designs. We also observe that the density $\nu_\star$ for D-optimality, shown in Figure \ref{fig:Opti_Poly_Dens_D_10000}, takes a large value on the vertices of the square design space, a medium value on the edges and a low value in the middle.
Because of the inherent repulsiveness of the PVS distribution, this means that PVS designs will usually have a point close to each vertex, a few well-spread points close to the edges, and a few points in the center of the design space, far apart from the rest.
This behavior is very similar to the one of the D-optimal design shown in Figure \ref{fig:simu_OD}.

\begin{figure}
\centering
\subfloat[Uniform design \label{fig:simu_uni}]{\includegraphics[width=4cm]{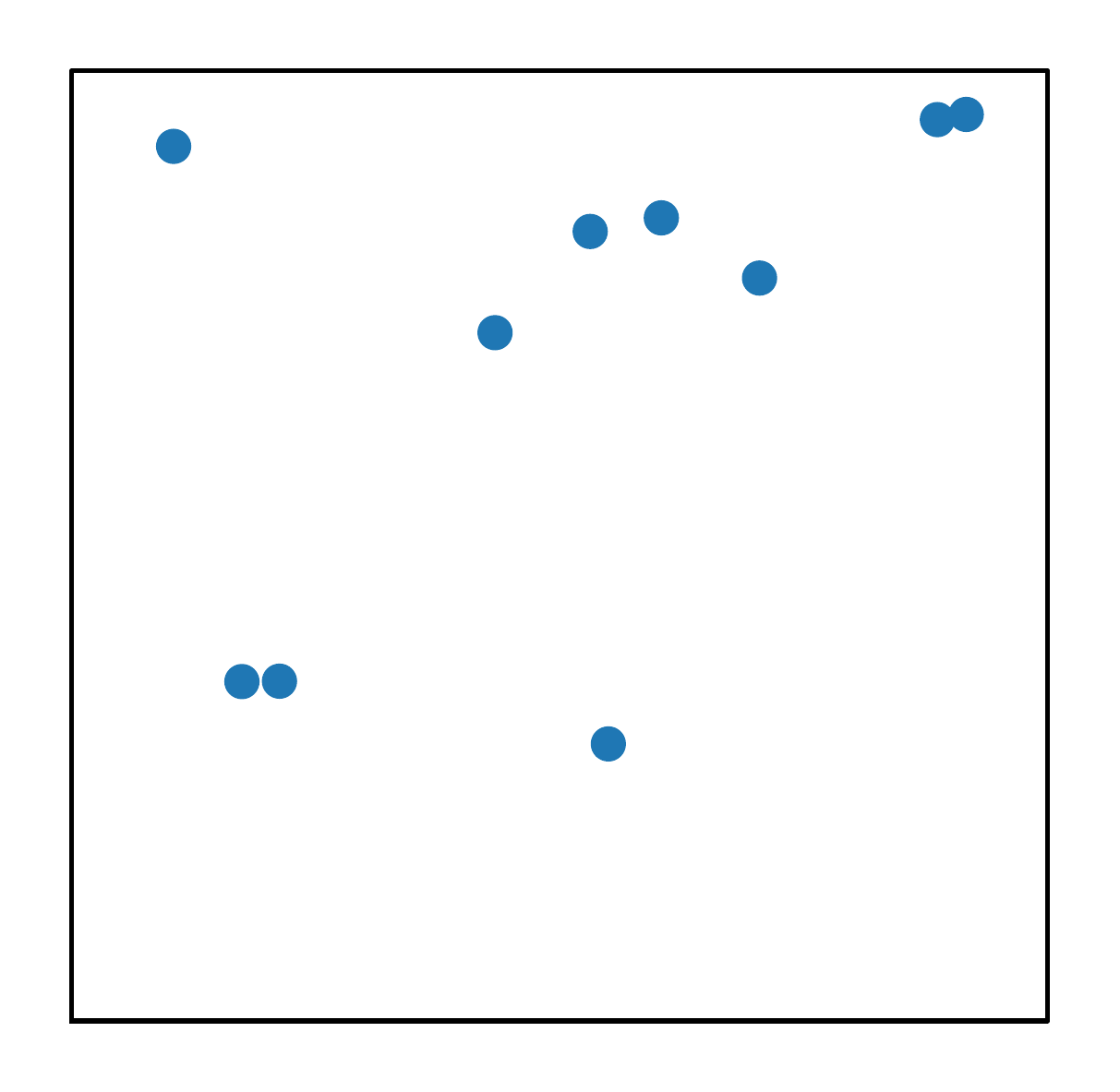}}
\hfill
\subfloat[PVS (unif.) \label{fig:simu_PVS_uni}]{\includegraphics[width=4cm]{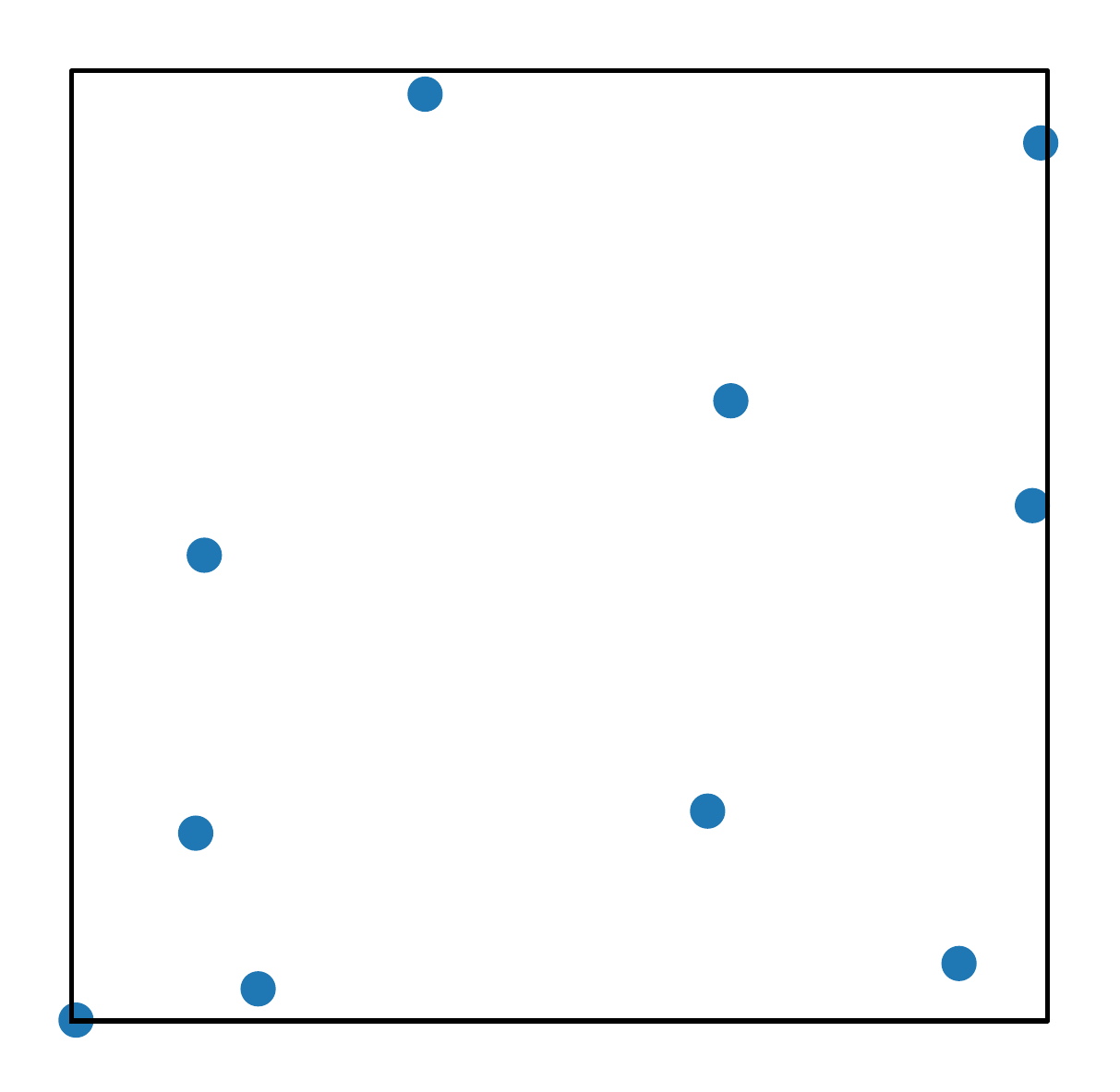}}
\hfill
\subfloat[PVS (opt.) \label{fig:simu_PVS_opt}]{\includegraphics[width=4cm]{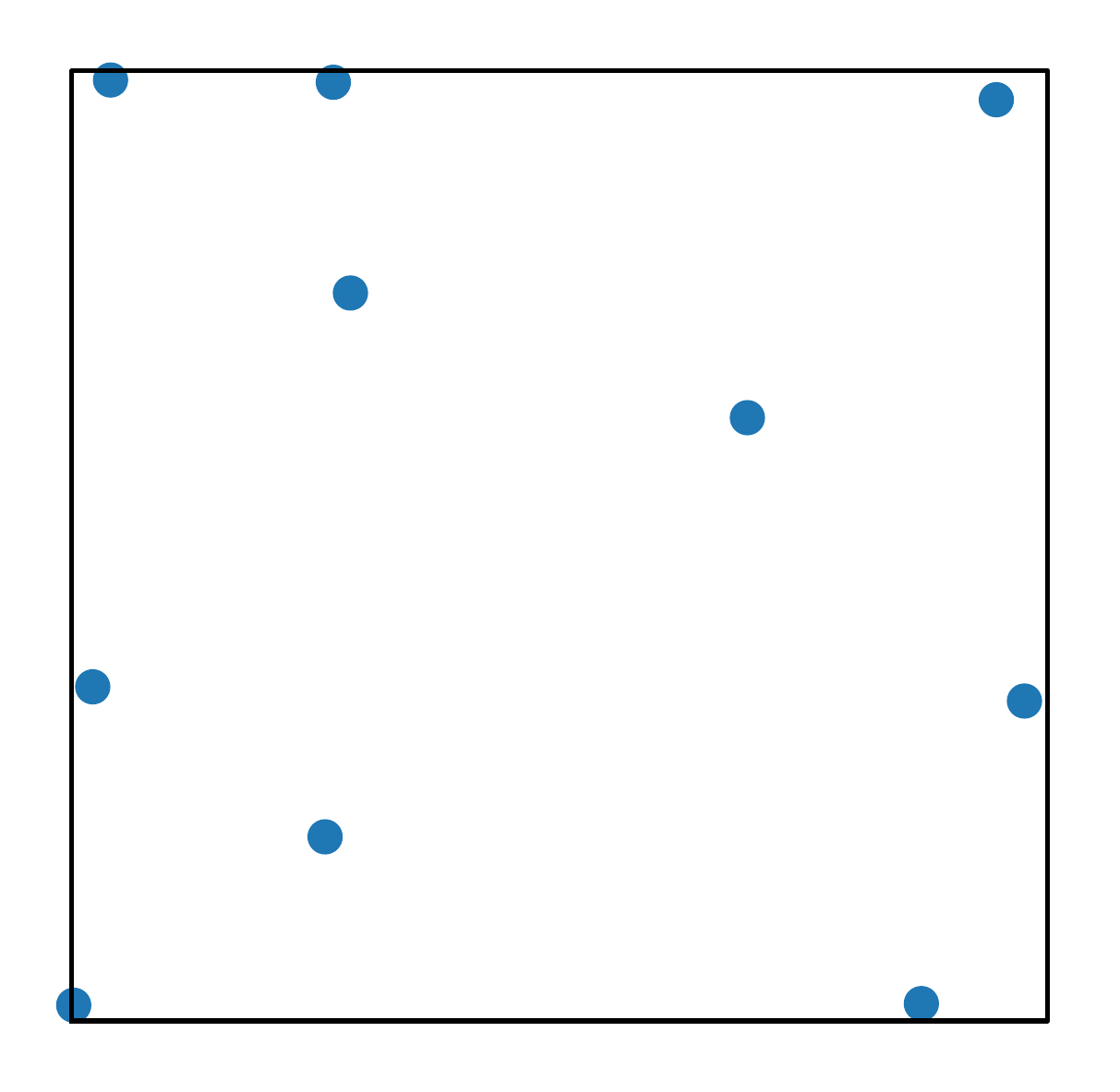}}
\hfill
\subfloat[D-Optimal design \label{fig:simu_OD}]{\includegraphics[width=4cm]{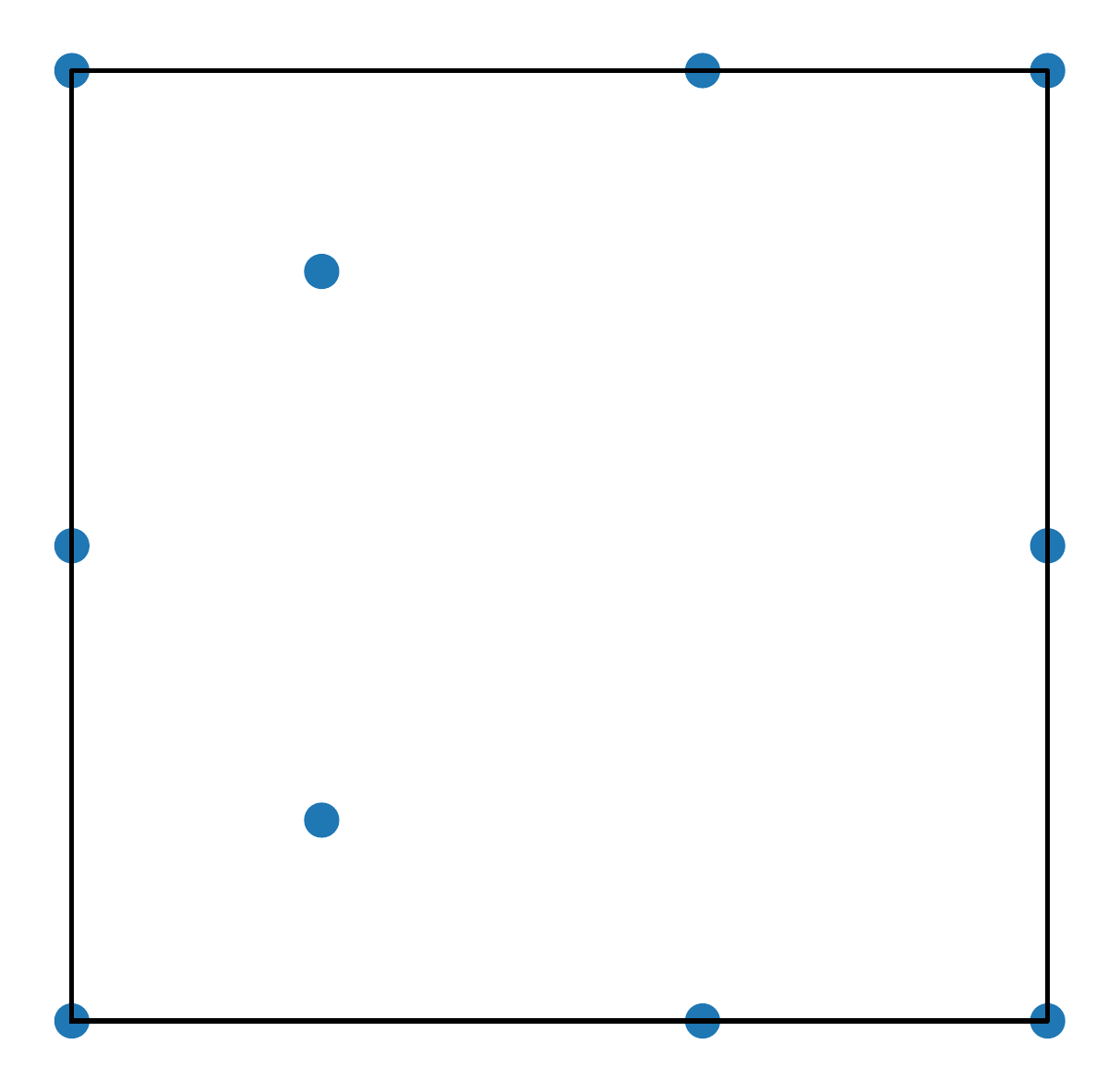}}
\hfill
	   \caption{{\label{fig:PtSimu} \small Example of simulation of a random design from either a uniform distribution (a) or the $\PVS{\nu}{\phi}{\pcov}$ distribution where $\nu$ is either the uniform distribution on $\Omega$ (b) or the distribution on $\Omega$ with the optimized density shown in Figure \ref{fig:Opti_Poly_Dens_D_10000} (c). Figure (d) shows an optimal design for comparison.}}
\end{figure}

%
%

Overall, PVS is mathematically elegant and tractable, and comes with guarantees in the form of bounds on the expected A- and D-criteria.
It also has the advantage of working for nearly any design space, any family of regression functions and for both the Bayesian and non-Bayesian setting.
A simple empirical investigation confirms that it generates designs with typically lower A- and D-optimality criteria than i.i.d. sampling, and optimizing the reference measure further helps.
On the negative side, the theoretical bounds can be loose. Furthermore, as seen in Figure~\ref{fig:CompDPP-U}, even taking the best of thousands of PVS samples does not yield a design arbitrarily close to an optimal design.
From a practical point of view, this is slightly disappointing.
This is why we now investigate a different way of using PVS, namely embedding it into an iterative search algorithm for optimal designs.

\section{Turning discrete PVS into a global search heuristic}
\label{sec:algo}
In this section, we first review two standard algorithmic templates to find approximate A and D-optimal designs in the most general setting.
Then, we introduce a new heuristic, named DOGS, that relies on the discrete proportional volume sampling of \cite{Nikolov} and \cite{Def_RescaledVS} to make global moves across a generic design space $\Omega$. Finally, we investigate the numerical performance of DOGS against standard methods.

\subsection{Standard heuristics for optimal designs}\label{sec:algo_presentation}
First, the \emph{local search algorithm} (LSA) starts with a random design drawn from some initial distribution as its current best $\Xbest$. Then, at each iteration and until a stopping criterion is met, LSA generates another random design $Y$ from a proposal distribution supported in a neighborhood of $\Xbest$, and sets $\Xbest$ to $Y$ if $Y$ has a smaller optimality criterion.
An iteration of LSA is computationally cheap, but this is at the cost of a slow empirical  convergence and a tendency to get stuck in local minima.
We refer to \citep[Chapter 4.2]{LocSearch} for a description of LSA and variants, including simulated annealing. These variants add a few parameters that need be tuned, and in our later experiments did not qualitatively change the behavior of LSA. We thus focus on the vanilla LSA algorithm.

Another standard algorithm is the \emph{exchange method} (ExM; \citep{Fedorov}). Starting with a random initial design, ExM iteratively minimizes the optimality criterion over each point of the design, keeping the rest of the design fixed, until a stopping criterion is met.
The basic operation is thus
\begin{equation} \label{eq:Toopt_by_ExM}
\argmin_{x\in\Omega} h\big(\phi(X\cup\{x\})^T\phi(X\cup\{x\}) + \pcov\big)
\end{equation}
for any design $X\subset \mathbb{R}^d$ with cardinality $k-1$. When $\Omega$ is a (small) finite set, this can be done by exhaustive enumeration or a grid search. When $\Omega$ is continuous, however, approximating \eqref{eq:Toopt_by_ExM} can be mathematically difficult or computationally heavy, depending on the dimension and shape of $\Omega$, as well as the regularity of the regression functions $\phi_i$.
ExM is also prone to converging to local minima, and it is usually recommended to take the best result out of several restarts.  
Variants of ExM are presented in \citep[Chapter 9.2.1]{Pronzato} but, as with LSA, we focus here on the vanilla ExM as a representative of its class.

\medskip

Finally, in many applications of experimental design in continuous $\Omega$, there is some knowledge or intuition of points that are likely to be in the support of an optimal design. This is particularly common when working with design spaces with a simple shape such as spheres, simplices or cubes, and for low-dimensional polynomials as regression functions; see e.g. \citep{Atkinson}.
In these cases, one usually forces search heuristics like ExM to focus on a finite set $\mathcal{C}$ of well-chosen $\emph{candidate points}$.
One approach is to run a discrete optimization algorithm such as ExM on the finite set $\mathcal{C}$, ensuring that the support of the solution is in $\mathcal{C}$.
When no useful $\mathcal{C}$ is known and $\Omega\subset\mathbb{R}^d$, it is also common practice to run ExM on the ``agnostic" finite set formed by the intersection of $\Omega$ with a regular grid.

\subsection{Discrete optimization within global search (DOGS)}

In spite of its near-optimality guarantees, proportional volume sampling as introduced in Section~\ref{sec:PVS_results} is likely not to meet the needs of practitioners.
Indeed, as shown in Figure \ref{fig:CompDPP-U}, even taking the best design out of thousands of PVS realizations does not get us close to an optimal design, while simple heuristics like ExM can output designs with smaller optimality criterion at a comparable cost.
In this section, we propose a search heuristic based on discrete volume sampling but different from PVS. The rationale is to use volume sampling to propose global moves across $\Omega$, thus avoiding getting stuck in local minima.

The pseudocode of our DOGS heuristic is given in Figure~\ref{f:DOGS}. We start from an initial random design $\Xbest$ of $k$ points drawn from any initial distribution at hand. Then, at each iteration, we find the optimal design of $k$ points among the union of the current $\Xbest$ and a random subset $\Xnew\subset\Omega^{k'}$ drawn from a proposal distribution, with a user-defined cardinality $k'\in\N$.
This step corresponds to finding a finite optimal design of $k$ points among $k+k'$, and can be solved either exactly, by enumeration, or approximately, by a search heuristic like ExM.
We choose to solve it approximately, but rather than running a costly ExM at each iteration of DOGS, we found it empirically more efficient to run a cheaper subroutine.
In line with Section~\ref{sec:PVS_results}, we actually propose to sample from the PVS distribution \eqref{DefPCVS}, over the set $\Xbest\cup\Xnew$, and with reference measure $\nu$ minimizing the continuous relaxation \eqref{eq:Relax}.

\begin{figure}[!ht]
\begin{algorithm}[H]
\SetAlgoLined
\DontPrintSemicolon
1: Start with a random initial design $X_0\in\Omega^k$ and set $\Xbest=X_0$.\;
2: Choose a random design $\Xnew$ in $\Omega^{k'}$ for some $k'\in \N$.\;
3: Let $\{x_1,\cdots,x_\ell\}$, $\ell\leq k+k'$, be the set of unique points in $\Xnew$ and $\Xbest$. Compute a solution $(p_i)_{1\leq i\leq \ell}$ of the convex optimization problem
\begin{equation}\label{opt_in_DOGS}
\hspace{-0.05cm}\argmin~h\left(\sum_{i=1}^{\ell} p_i\phi(x_i)^T\phi(x_i)+\pcov\right)~\mbox{s.t.}~\sum_{i=1}^{\ell}p_i=k~\mbox{and}~p_i\geq 0~\forall i\in\{1,\cdots,k\}.
\end{equation}
4: Sample $Y$ from the $\mathbb{P}_{VS}^\nu(\phi,\pcov)$ distribution over the set $\{x_1,\cdots,x_\ell\}$ where
$$\nu = \sum_{i=1}^\ell p_i\delta_{x_i}.$$
5: If $h(\phi(Y)^T\phi(Y)+\pcov)<h(\phi(\Xbest)^T\phi(\Xbest)+\pcov)$ then set $\Xbest=Y$.\;
6: If some stopping criterion is reached, return $\Xbest$. Otherwise, go back to step 2.\;
 \caption{Discrete optimization within global search}
\end{algorithm}
\caption{{\small The DOGS search heuristic for A- and D-optimal designs.}}
\label{f:DOGS}
\end{figure}

The choice of the proposal distribution from which $\Xnew$ is drawn in Step 2 of Figure~\ref{f:DOGS} is where expert knowledge can be built in the algorithm.
In particular, the proposal can include candidate points $\mathcal{C}$, as explained in Section~\ref{sec:algo_presentation}.
The remaining points can be drawn from any distribution over $\Omega$.
We typically draw i.i.d. samples from the uniform distribution over $\Omega$, since for the design spaces we have found in the literature, the uniform distribution is amenable to rejection sampling.
More space-filling proposals are also possible, like a randomly perturbed grid or a randomly shifted Sobol or Halton sequence \citep{DiPi10}, but we have found that the improvement over the uniform proposal is minor, and the experiments below in Section \ref{sec:perf} thus use i.i.d. uniform draws.
Finally, a large value of the number $k'$ of points in $\Xnew$ makes the algorithm converge in fewer iterations, but each iteration is more costly due to the increase in complexity of the optimization problem \eqref{opt_in_DOGS}.
In short, like for the $\sigma$ parameter of LSA, we recommend tuning $k'$ manually using a few initial short runs.
As for the complexity of the PVS sampling step, we refer to Section \ref{sec:sampling}.

The computational bottleneck of DOGS is the underlying convex optimization subroutine \eqref{opt_in_DOGS}.
The complexity of the latter heavily depends on the optimality criterion and the specific convex optimization algorithm that is used, but it is hard to make a definite scaling argument; we thus show CPU times in Table~\ref{table: Computation_times}.
Finally, unlike PVS in Section \ref{sec:GPCVS}, it is difficult to produce any mathematical result on the output of DOGS. We thus focus on evaluating its empirical performance.

\subsection{Numerical results}
\label{sec:perf}
In this section, we compare the performances of LSA, ExM and DOGS in three applications, of increasing difficulty. The first example comes from \citep[Example 16.3]{Atkinson} and corresponds to a three-component mixture design with quadratic constraints for polynomial regression.
The second example is drawn from \citep{B-Spline_1,B-Spline_2} and illustrates regressing on a functional basis that adds B-Splines to polynomials.
The third example is an engineering application from \citep{Space_Filling_Design}, and corresponds to an eight-component mixture design with a combination of linear and nonlinear constraints for polynomial regression.
We showcase more examples, mostly taken from the textbook \citep{Atkinson}, in a companion Jupyter Notebook provided as supplementary material\footnote{Python code allowing to reproduce all experiments will be publicly released upon acceptance.}. 
We focus on the non-Bayesian setting ($\pcov=0$) since it is more commonly found in the literature. Our personal experience shows that our conclusions on the relative performances of each algorithm in the three examples studied here still hold in the Bayesian setting. 
As it was illustrated in Figure \ref{fig:CompDPP-U}, finding good designs is easier in the Bayesian setting, especially when the prior covariance matrix takes small values.

For all algorithms, initial designs are chosen using i.i.d. uniform draws over $\Omega$.
In LSA, the proposal is generated by adding an i.i.d. Gaussian perturbation to each point in the current design, with a common, manually tuned standard deviation $\sigma$.
If the perturbation of a point makes it leave the design space, we leave the original point as is in the proposed $\Xnew$.
For ExM, the internal optimization subroutine is done using L-BFGS-B in \emph{scipy}~\citep{Scipy}.
In DOGS, the random design $\Xnew$ is sampled uniformly in $\Omega^{k'}$, unless specified otherwise.
The convex optimization in Step $3$ of Figure~\ref{f:DOGS} is carried out using a generic solver from the library \textit{cvxopt}~\citep{cvxopt} for D-optimality, and with the SDP solver of the same library for A-optimality.
Indeed, we follow \cite[Chapter 7.5.2]{Boyd} to cast A-optimality problem \eqref{opt_in_DOGS} as an SDP.
Finally, the PVS distribution in step $4$ of Figure~\ref{f:algo} is sampled using Algorithm \ref{f:algo}, where the DPP distribution is sampled using the \textit{DPPy} library \citep{DPPy}.

\subsubsection{Three-component mixture design with quadratic constraints} \label{sec: Example1}
In \citep[Example 16.3]{Atkinson}, the authors searched for a D-optimal design in
\begin{multline*}
\Omega = \{(x,y)\in\R^2~s.t.~0\leq x+y\leq 1,~-4.062x^2 + 2.962x + y \geq 0.6075\\
\mbox{and}~-1.174x^2 + 1.057x + y \leq 0.5019\}.
\end{multline*}
The regression functions $\phi_i$ are the $6$ multivariate polynomials of $\R^2$ with degree $\leq 2$.
The authors used a finite optimization method over a grid covering the design space, and obtained a six-point approximate design with equal weights.
We thus chose $k=30$, to make their approximate design a true design, i.e. with integer weights, without requiring a rounding step.
We ran a few iterations of DOGS with various values of $k'$, and set for $k'=50$.
Similarly, we set the standard deviation of the Gaussian perturbations in LSA to $\sigma=0.01$.
We present in Figure \ref{fig:perf_mix_deg2} the evolution of the $\log$ D-optimality criterion of the current best design $\Xbest$ across iterations, for $200$ runs of each algorithm. The solid line represents the median $\log$-D-optimality criterion of each algorithm, while the boundary of the shaded area corresponds to the $5$-th and $95$-th percentile. The $\log$ D-optimality criterion of the design in \citep{Atkinson} and the one obtained by the exact algorithm of \cite{DeCastro} are shown as, respectively, a dashed and a dashdotted line.
We also show in Table \ref{table: Computation_times} the runtime of each algorithm on a recent laptop.

We observe that DOGS outperforms both LSA and ExM in this case: DOGS only needs a few hundred iterations, totaling about $10$ seconds, to find a better D-optimal design than the one in \citep{Atkinson}.
In fact,
DOGS settles in the same area of $\Omega$ as the design of \cite{Atkinson}, but is allowed to fine tune its result by not being limited to a grid.
However, this example should now be considered as easy, in the sense that the Lasserre hierarchy of \cite{DeCastro} terminates in about the same time as DOGS and outputs an actual optimal design, with criterion value shown in Figure~\ref{fig:perf_mix_deg2}.

We can easily make the problem harder, though, by considering polynomials of higher degree. On this example, we observed that the Lasserre hierarchy failed to terminate for degrees larger than 2.
This is the regime where search heuristics become useful.
For instance, taking the regression functions $\phi_i$ to be the $15$ multivariate polynomials of $\R^2$ with degree $\leq 4$, we show the results of LSA, ExM, and DOGS in Figure \ref{fig:perf_mix_deg4}.
We also compare these results with the average D-optimality criterion of designs obtained by a discrete exchange method on the set $\Omega\cap 0.01\Z^2$ of $736$ candidate points, to mimic the approach of \citep{Atkinson}.
The results are shown in Figure~\ref{fig:perf_mix_deg4}.
The rankings are similar, with DOGS finding the best solution in a small number of iterations.

\subsubsection{Multifactor B-Spline Mixed Models} \label{sec: Example2}
This example is inspired by \citep{B-Spline_1}. The problem is to find a relationship between several features of a car engine, like maximum brake-torque timing (MBT), and three factors: the engine speed ($x_1$), its air-fuel ratio ($x_2$) and its load ($x_3$).
The authors chose to model the relationship between MBT and $x_2$ and $x_3$ as a cubic polynomial, while the dependency of MBT in the variable $x_1$ is modeled as a maximally smooth cubic B-spline basis with three knots.
We denote this basis by $\{B_1,\cdots,B_7\}$ for the remainder of this paper.
Using all possible products of B-spline and polynomials would mean considering $112$ basis functions of the form
$$\phi:(x_1,x_2,x_3)\mapsto B_i(x_1)x_2^\alpha x_3^\beta,~i\in\{1,\cdots,7\},~\alpha\in\{0,\cdots,3\},~\beta\in\{0,\cdots,3\}.$$
The authors chose to reduce this regression basis to a smaller one of $p=31$ functions detailed in \citep[Equation (6)]{B-Spline_1}.
The size of the optimal design searched is $k=55$.
The design space is unfortunately not specified, but it is alluded that its shape is complicated due to various combinations of $x_1$, $x_2$ and $x_3$ being either unphysical or potentially damaging to the engine.
To illustrate the properties of the different algorithms, we use both a simple and a more complex design space.

We first consider $\Omega=[0,1]^3$ and that the B-Spline knots are located at $0.25$, $0.5$ and $0.75$, although their location did not have any significant impact on the results.
A set of 28 candidate points is suggested in \citep{B-Spline_2}, namely
$$\mathcal{C}\defeq\left\{(x_1,x_2,x_3)\in\Omega,\quad x_1\in\{\argmax B_i, 1\leq i\leq 7\},\quad  x_2,x_3\in\{0,1\}\right\}.$$
To investigate the impact of candidate points on DOGS, we compare a version with a uniform proposal over $\Omega^{50}$, to a version where the proposal is made of the union of $\mathcal{C}$ and a uniform draw over $\Omega^{22}$.
We compare to LSA with $\sigma=0.01$ manually tuned, ExM, and a discrete ExM on the $700$ candidate points
$$\mathcal{C}'\defeq\{(x_1,x_2,x_3)\in\Omega,\quad x_1\in\{\argmax B_i, 1\leq i\leq 7\},\quad x_2,x_3\in\{0,1/9,2/9,\cdots,1\}\}$$
as suggested in \citep{B-Spline_2}.
We present in Figure \ref{fig:perf_BSpline_Cube} the evolution of the $\log$ D-optimality criterion of $\Xbest$ for $200$ runs of each algorithm with respect to their number of iteration and the average $\log$ D-optimality criterion of Discrete ExM. We also present in Table \ref{table: Computation_times} the runtime of each algorithm on a recent laptop.

LSA performs worst, while both versions of ExM tie in converging to what we believe is close to the optimal criterion.
DOGS again yields a quickly decreasing criterion, but its curve plateaus higher than ExM.
Adding candidate points to DOGS lowers the plateau, but still not to the level of ExM.
A tentative argument to explain the success of discrete ExM in this example is the small-dimensional cubic design space, which is particularly amenable to a discretization by a finite grid.
Similarly, for continuous ExM, the shape of $\Omega$ is natively handled by the L-BFGS-B optimization.
An interesting note is that discrete ExM on an agnostic grid of $1000$ candidate points $[0,1]\cap \frac{1}{9}\Z^3$ does not come close to its performance when including candidate points, which confirms the candidate point suggestions of \citep{B-Spline_2}.

In a second variant of the same experiment, we now make the design space more complicated, as the real design space in \citep{B-Spline_2} is suggested to be. We consider
\begin{equation} \label{2Ball}
\Omega'=B((x_c,x_c,x_c),x_c)\cup B((1-x_c,1-x_c,1-x_c),x_c),~\textrm{where}~x_c\defeq \frac{3-\sqrt{3}}{4},
\end{equation}
and where $B(P,R)$ denotes the ball centered at $P$ with radius $R$. This design space corresponds to two tangent Euclidean balls inside the cube $[0,1]^3$, meeting at the point $(1/2,1/2,1/2)$.
We did not include any candidate point in this case, and the proposal in DOGS is uniform over ${\Omega}'^{50}$.
The discrete version of ExM uses the $719$ points of $\Omega'\cap\frac{1}{14}\Z^3$.
The results are shown in Figure \ref{fig:perf_BSpline_2Balls}.

As expected, the performance of both the continuous and discrete version of ExM suffers from the more complex design space, while the DOGS seems more robust to the change of design space, outperforming the best overall result of any other approach in about 100 iterations.
In particular, we can see that continuous ExM shows a huge variance in its results and, even if they don't appear in the quantiles of Figure \ref{fig:perf_BSpline_Cube}, there were a few runs that got stuck in local optima with a criterion larger than $150$, likely due to a bad initial design.
This confirms the recommendation in \citep{Pronzato} to use a restart strategy.
DOGS did not suffer from this kind of issue.


So far, DOGS has proved to be a robust algorithm, that fares particularly well when $\Omega$ has a complex shape or is hard to discretize.
The typical behaviour of DOGS it to quickly lower its criterion before plateauing.
This suggests a hybrid method, where DOGS is used until its improvement in criterion is deemed to be too small, at which point one switches to a local search like LSA.
We leave the investigation of such variants to future work.

\subsubsection{Nepheline crystallization in high-level nuclear waste glass} \label{sec: Example3}
For our last example, we increase the dimensionality to an eight-component mixture design, \cite{Space_Filling_Design} study the propensity of nepheline crystals (NaAlSiO$_4$) to appear during the fabrication of glass made to contain nuclear waste, in relation to the proportion of the $8$ components of the glass. This leads to a $7$-dimensional problem.
Various linear and non-linear constraints make the design space $\Omega$ quite complicated. No regression is mentioned in the paper; the authors rather focus on producing a space-filling design, meaning a general purpose design with points well spread over $\Omega$.
For our experiments, we take the regression functions $\phi_i$ to be the $36$ multivariate polynomials of $\R^7$ with degree $\leq 2$.

We identified $24$ points at the intersection of the linear and non-linear constraints, and use them as candidate points $\mathcal{C}$.
We thus compare DOGS with a uniform proposal over $\Omega^{100}$ to DOGS with a proposal taken as the union between $\mathcal{C}$ and a uniform draw over $\Omega^{76}$.
We still compare to LSA, with $\sigma=0.001$, ExM, and its discrete version on the set $\mathcal{C}\cup(\Omega\cap \frac{1}{3}\Z^7)$ of $309$ points.

Again, we show in Figure \ref{fig:Perf3} the evolution of the $\log$ D-optimality criterion of $\Xbest$ for $200$ runs of each algorithm with respect to iteration number and the average $\log$ D-optimality criterion of Discrete ExM.
We also show in Table \ref{table: Computation_times} the runtime of each algorithm on a recent laptop.
Again, using candidate points significantly improves the performance of DOGS.
We also note that, despite clearly outperforming ExM and LSA, DOGS struggles to get any close to optimality past its initial decrease. In this case, a discrete exchange method on a very rough grid yields significantly better designs.
In our experience, this is a recurrent issue with DOGS in large dimensions.
Using more space-filling proposals, like continuous PVS in \eqref{DefPCVS} or a Sobol sequence \citep{DiPi10} helps a little, but not enough to outperform the discrete ExM.
We conjecture that a combination of DOGS and local moves could take the best of both, but leave this to future work.

\begin{figure}
\centering
\subfloat[Example~\ref{sec: Example1}: degree $\leq 2$.\label{fig:perf_mix_deg2}]{\includegraphics[width=\twofig]{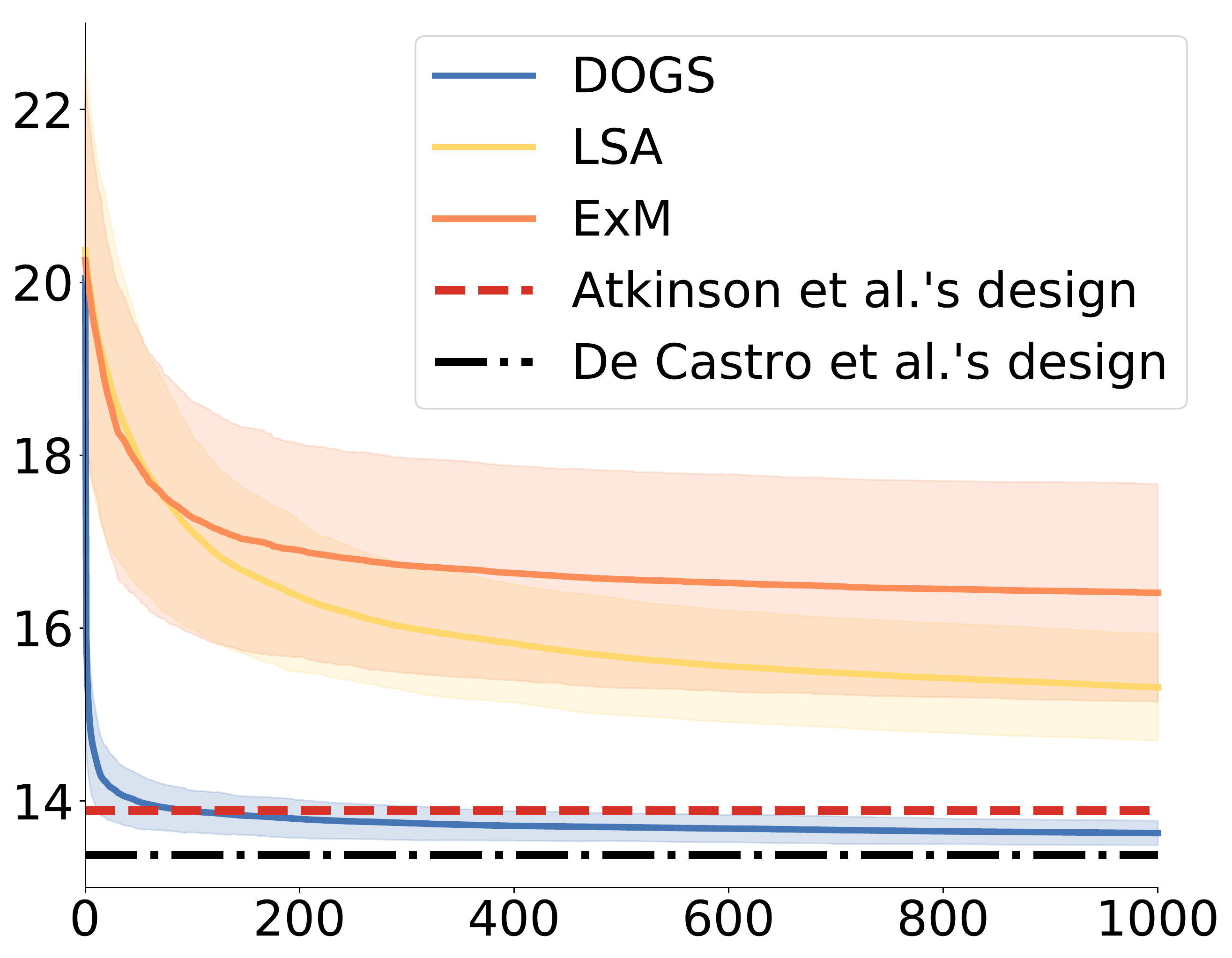}}
\hfill
\subfloat[Example~\ref{sec: Example1}: degree $\leq 4$. \label{fig:perf_mix_deg4}]{\includegraphics[width=\twofig]{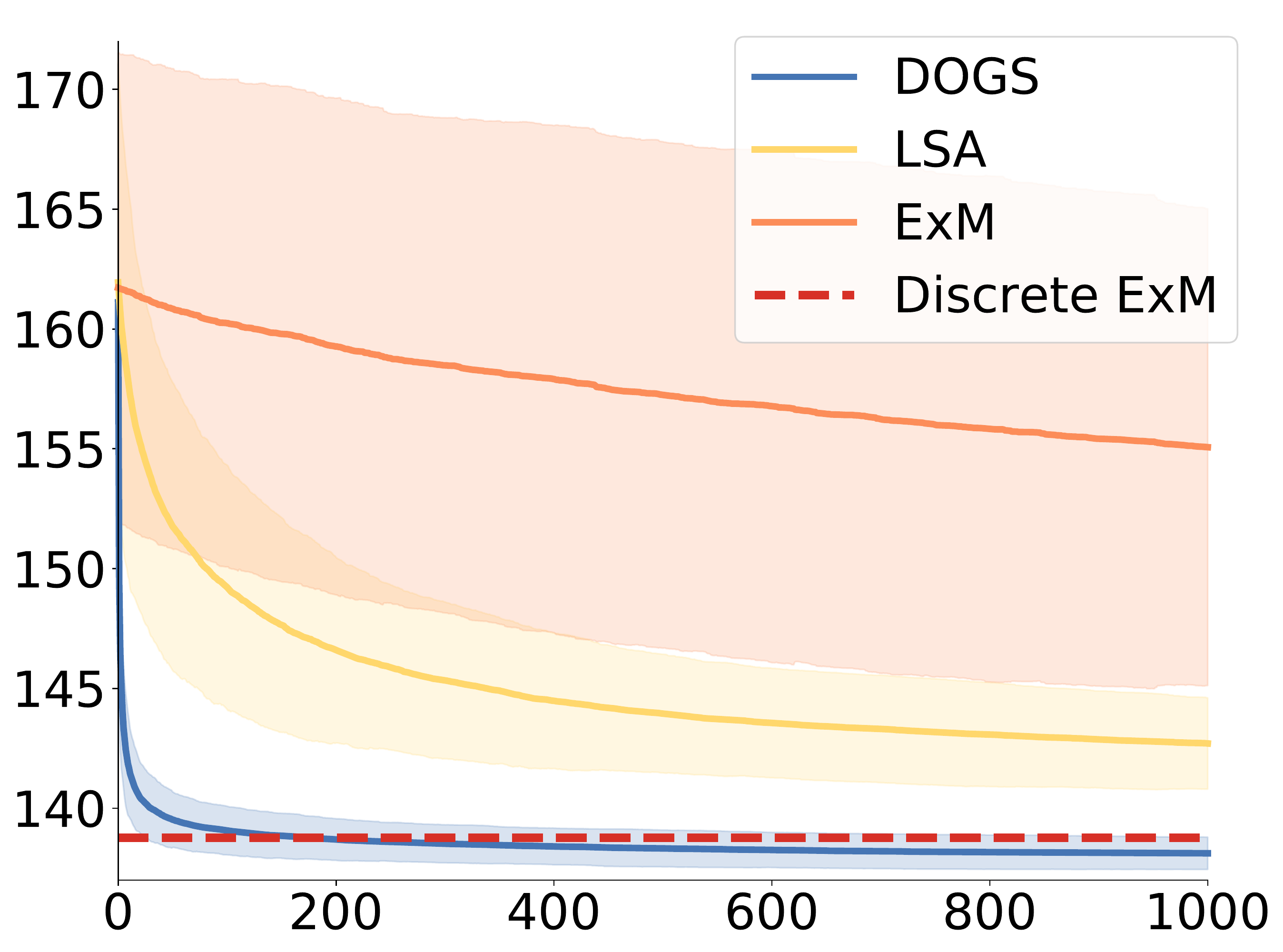}}
\vspace{-0.3cm}
\subfloat[Example~\ref{sec: Example2}: $\Omega={[0,1]}^3$.]{\label{fig:perf_BSpline_Cube}\includegraphics[width=\twofig]{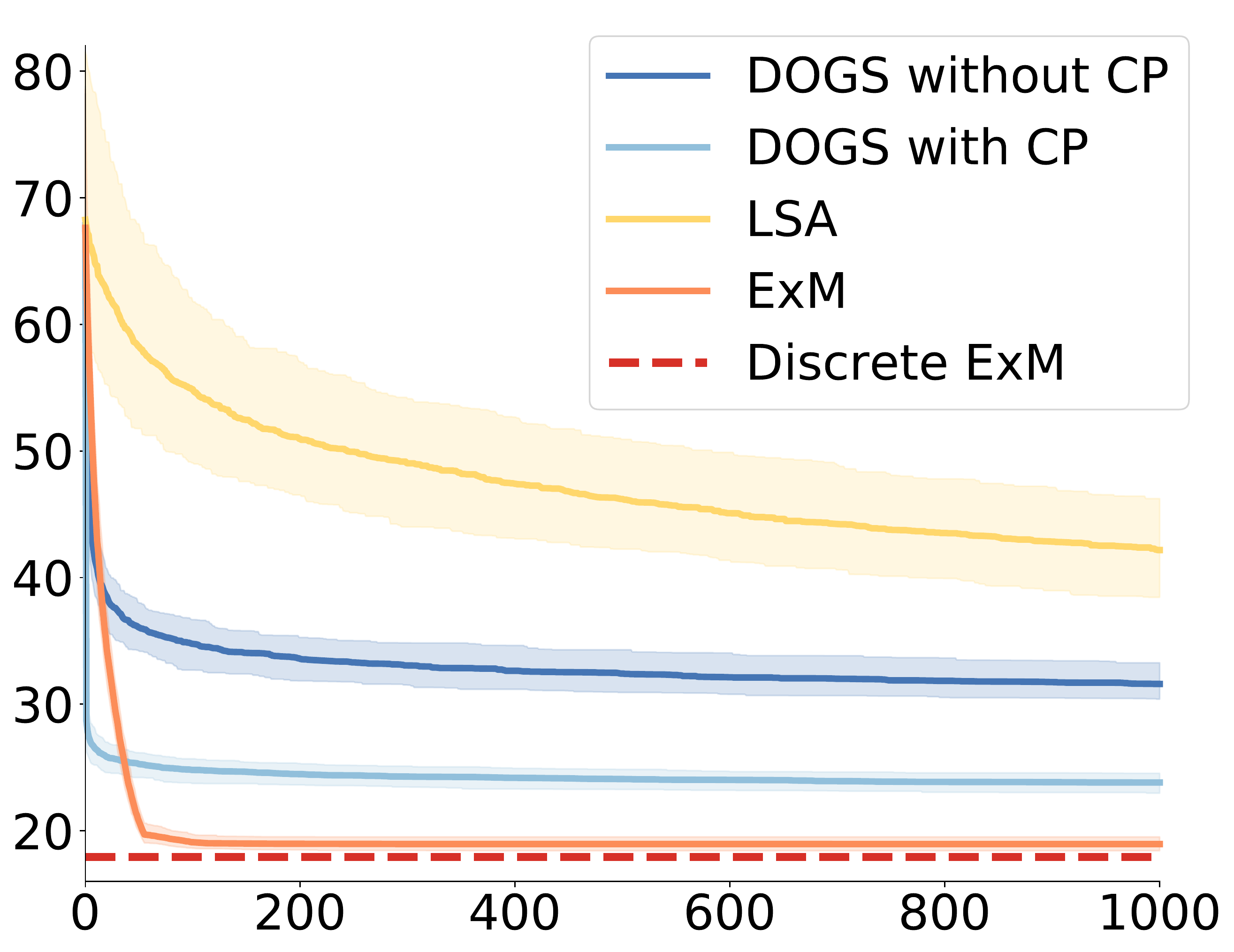}}
\hfill
\subfloat[Example~\ref{sec: Example2}: $\Omega'$ defined in \eqref{2Ball}.]{\label{fig:perf_BSpline_2Balls}\includegraphics[width=\twofig]{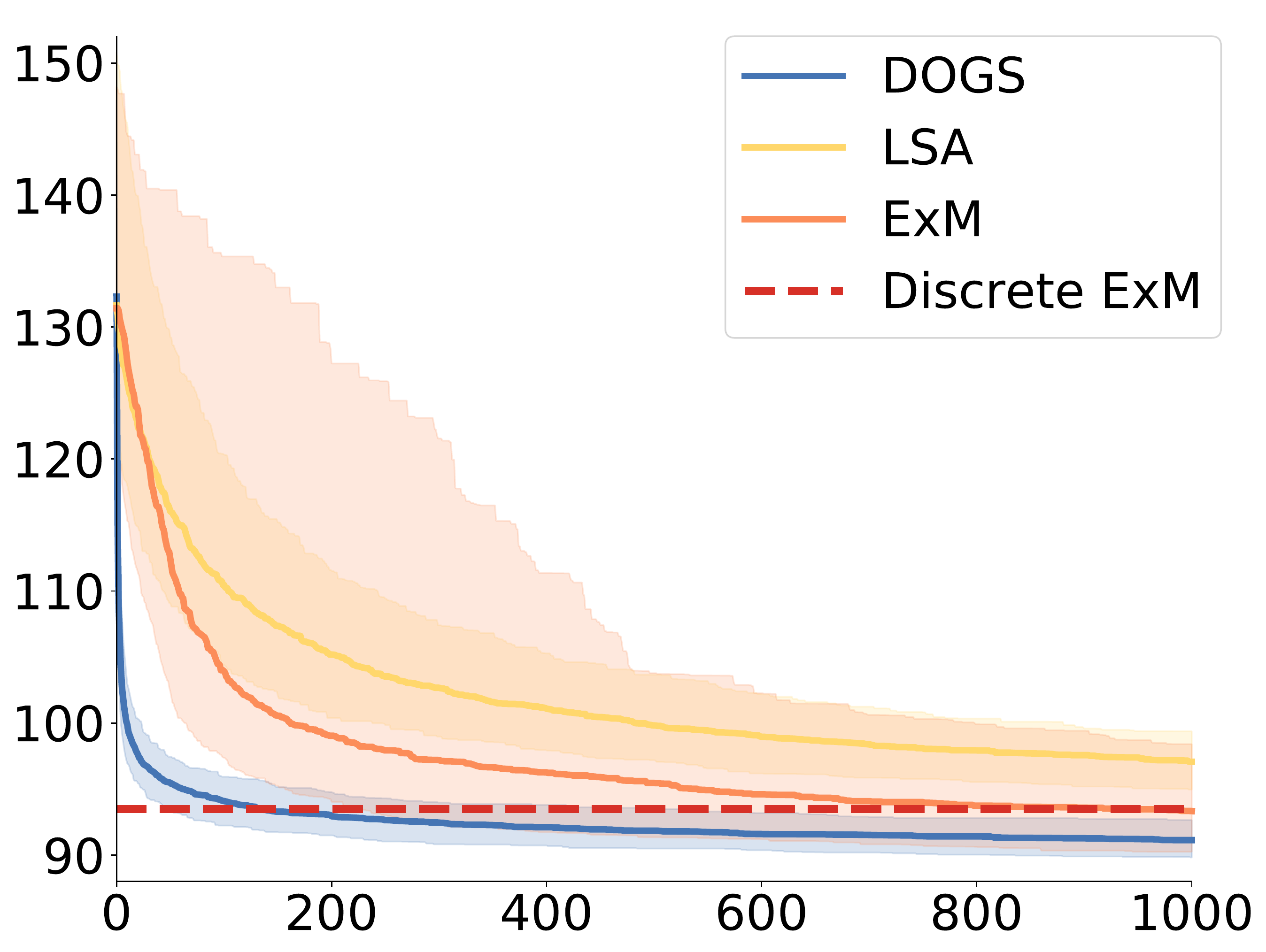}}
\vspace{-0.3cm}
\begin{center}
 \subfloat[Example~\ref{sec: Example3} \label{fig:Perf3}.]{\includegraphics[width=\twofig]{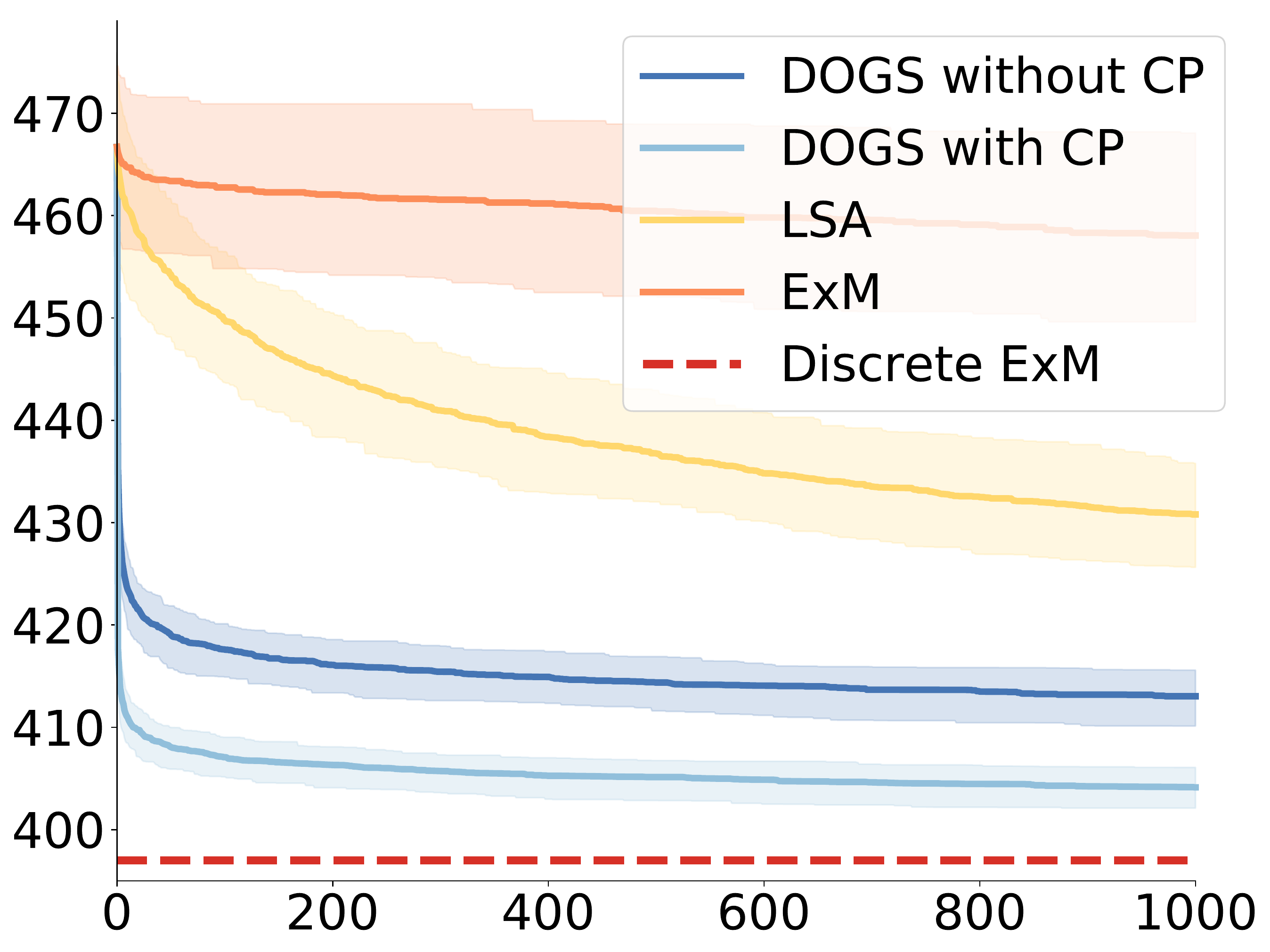}}
\end{center}
\vspace{-0.7cm}
	   \caption{{\small $\log h_D(\phi(\Xbest)^T\phi(\Xbest))$ vs. iteration number, for $200$ runs of each algorithm.
}}
\end{figure}

\begin{table}
\centering
\begin{tabular}{|c|cccc|}
	\hline
	 & DOGS & LSA & ExM & Disc. ExM \\
	\hline
	Example~\ref{sec: Example1}: degree $\leq 2$. & $47.9$s & $0.951$s & $30.2$s &  \\
	Example~\ref{sec: Example1}: degree $\leq 4$. & $51.5$s & $0.940$s & $12.6$s & $17.8$s \\
	Example~\ref{sec: Example2}: $\Omega={[0,1]}^3$. & $121$s & $4.14$s & $32.6$s & $260$s \\
	Example~\ref{sec: Example2}: $\Omega'$ defined in \eqref{2Ball}. & $96.4$s & $2.67$s & $59.6$s & $268$s \\
	Example~\ref{sec: Example3}. & $178$s & $4.63$s & $130$s & $184$s \\
	\hline
\end{tabular}
\caption{Average computation times on a laptop for one realization of discrete ExM and $1000$ iterations of DOGS, LSA and ExM.}
\label{table: Computation_times}
\end{table}

\section{Conclusion}
Our goal was to investigate some of the statistical implications of recent advances on volume sampling for discrete Bayesian optimal design.

We first turned the Bayesian version of finite proportional volume sampling \citep{Nikolov,Def_RescaledVS} introduced by \cite{RegDPP} into a general distribution over any Polish space.
Using point process arguments, we showed that this generalization preserves the property of giving unbiased estimates of the inverse information matrix and its determinant.
Additionally, we proved approximation guarantees for the A-efficiency and D-efficiency of designs sampled from general proportional volume sampling, conditionally to having a fixed size.
Through a connection with determinantal point processes, we highlighted that our general PVS can be sampled in polynomial time.
We also showed that the same algorithm can easily be modified to sample random designs from PVS conditioned on having a fixed size, without using rejection sampling.
This makes PVS a natural tool to extract experimental designs from the solution of the classical convex relaxation \eqref{ODproblem_relax} of the optimal design problem.
However, in spite of its mathematical and methodological support, we found that on simple continuous problems, PVS can be outperformed in practice by simple search heuristics.

We then took a more practical turn, and introduced DOGS, a search heuristic that combines discrete PVS and random sampling to make global moves across a generic design space $\Omega$.
Although it is costlier than popular alternatives, DOGS shines when $\Omega$ has a complicated shape and the dimension remains small, and its behavior is robust to changes in $\Omega$ or the basis functions.
We believe that this makes DOGS a valuable addition to the practitioner's toolbox.
When the ambient dimension $d$ is large ($\geq 5$), however, DOGS fails to find designs as good as a simple discrete exchange method over a reasonable set of candidate points.
This suggests investigating hybrid strategies, e.g., alternating DOGS and local search episodes, or combining search heuristics using, e.g., multi-armed bandits.

\section*{Acknowledgments}
We thank Adrien Hardy for useful discussions throughout the project. We thank Michał Dereziński for his insightful comments and suggestions on an early draft.
We acknowledge support from ERC grant Blackjack (ERC-2019-STG-851866) and ANR AI chair Baccarat (ANR-20-CHIA-0002).

\appendix

\section{Proof of the well-definedness of Definition \ref{def:RDPP}}
\label{proof:RDPP}
It is obvious that the Janossy densities are positive. Therefore, in order to prove that proportional volume sampling is well-defined; see \cite[Proposition 5.3.II.(ii)]{DVJ}, we only need to show that
\begin{equation}\label{toprove}
\sum_{n\geq 0}^\infty\frac{1}{n!}\int_{\Omega^n}j_n(x)\der^n x=1.
\end{equation}
We write the eigenvalues of $\pcov$ as $\lambda_1\leq \cdots\leq \lambda_p$ and the spectral decomposition of $\pcov$ as $\pcov=P^T D_\lambda P$, where $D_\lambda$ is the $p\times p$ diagonal matrix with the $\lambda_i$ as its diagonal entries. Then, we define the functions $\psi_i$, $1\leq i\leq p$, by the linear transform of the function $\phi_i$ defined by $(\psi_1(x),\cdots,\psi_p(x))\defeq (\phi_1(x),\cdots,\phi_p(x))P^T$. Finally, we have the decomposition
\begin{align*}
\det(\phi(x)^T\phi(x)+\pcov)&=\det(P\phi(x)^T\phi(x)P^T+D_\lambda)\\
&=\det(\psi(x)^T\psi(x)+D_\lambda)\\
&=\sum_{S\subset[p]}\lambda^{S^c}\det(\psi_S(x)^T\psi_S(x))
\end{align*}
where $\psi_S\defeq (\psi_{S_1},\cdots,\psi_{S_{|S|}})$ and $\lambda^{S^c}\defeq\prod_{i\notin S}\lambda_i$, with the usual convention $\lambda^{\emptyset}=1$; see \citep{Det_Diag}.
Now, by the discrete Cauchy-Binet formula,
$$\det(\psi_S(x)^T\psi_S(x))=\sum_{\substack{T\subset[k] \\ |T|=|S|}}\det(\psi_S(x_T))^2$$
where $x_T\defeq (x_{T_1},\cdots,x_{T_{|T|}})$. And, using the more general Cauchy-Binet formula \citep{Cauchy_Binet},
$$\int_{\Omega^n}\det(\psi_S(x_T))^2\der\nu^n(x)=|T|!\det(G_\nu(\psi_S))\nu(\Omega)^{n-|T|}.$$
Therefore
\begin{align}
\sum_{n\geq 0}^\infty\frac{1}{n!} \int_{\Omega^n}\det(\phi(x)^T\phi(x)+\pcov)& \der\nu^n(x)\nonumber \\
=&\sum_{n\geq 0}^\infty\frac{1}{n!}\sum_{S\subset[p]}\lambda^{S^c}\int_{\Omega^n}\det(\psi_S(x)^T\psi_S(x))\der\nu^n(x)\nonumber \\
=&\sum_{n\geq 0}^\infty\frac{1}{n!}\sum_{S\subset[p]}\lambda^{S^c}\sum_{\substack{T\subset[k] \\ |T|=|S|}}|T|!\det(G_\nu(\psi_S))\nu(\Omega)^{n-|T|}\nonumber \\
=&\sum_{n\geq 0}^\infty\frac{1}{n!}\sum_{S\subset[p]}\lambda^{S^c}\binom{n}{|S|}|S|!\det(G_\nu(\psi_S))\nu(\Omega)^{n-|S|}\nonumber \\
=&\sum_{n\geq 0}^\infty\sum_{S\subset[p]}\frac{\nu(\Omega)^{n-|S|}}{(n-|S|)!}\lambda^{S^c}\det(G_\nu(\psi_S))\cara{n\geq |S|}\nonumber \\
=&\sum_{S\subset[p]}\lambda^{S^c}\det(G_\nu(\psi_S))\sum_{n\geq |S|}^\infty\frac{\nu(\Omega)^{n-|S|}}{(n-|S|)!}\nonumber \\
=&\det(G_\nu(\psi)+D_\lambda)\exp(\nu(\Omega))\nonumber \\
=&\det(G_\nu(\phi)+\pcov)\exp(\nu(\Omega))\nonumber
\end{align}
where, in the last two identities, we used the facts that $(i)$ $G_\nu(\psi_S)$ is equal to $G_\nu(\psi)_S$, the submatrix of $G_\nu(\psi)$ whose rows and collumns are indexed by $S$, and $(ii)$ $G_\nu(\psi)=G_\nu(\phi P^T)=PG_\nu(\psi)P^T$. This proves \eqref{toprove}.

\section{Proof of Proposition~\ref{prop:Equal_Expectations}}
\label{proof:Equal_Expectations}

First, we write
$$\E\left[(\phi(X)^T\phi(X)+\pcov)^{-1}\right]=\sum_{n\geq 0}\frac{1}{n!}\int_{\Omega^n}(\phi(x)^T\phi(x)+\pcov)^{-1}j_n(x)\der^n x.$$
Since $(\phi(x)^T\phi(x)+\pcov)^{-1}\det(\phi(x)^T\phi(x)+\pcov)$ is the adjugate matrix of $(\phi(x)^T\phi(x)+\pcov)$, its $(i,j)$ entry is $(-1)^{i+j}\det(\phi_{-j}(x)^T\phi_{-i}(x)+\pcov_{-j,-i})$, where we define $\pcov_{-j,-i}$ as the matrix $\pcov$ with its $j$-th row and $i$-th column removed, and $\phi_{-i}(x)$ as the vector $\phi(x)$ with its $i$-th entry removed. Therefore, the $(i,j)$ entry of the matrix $\E\left[(\phi(X)^T\phi(X)+\pcov)^{-1}\right]$ is
$$\sum_{n\geq 0}\frac{1}{n!}\int_{\Omega^n}\frac{(-1)^{i+j}\det(\phi_{-j}(x)^T\phi_{-i}(x)+\pcov_{-j,-i})}{\det(G_\nu(\phi)+\pcov)\exp(\nu(\Omega))}\der\nu^n(x).$$
Using the same reasoning as in the proof of normalization in Section \ref{proof:RDPP}, we get that
\begin{multline}\label{eq:Same_as_Normalisation}
\sum_{n\geq 0}\frac{1}{n!}\int_{\Omega^n}\det(\phi_{-j}(x)^T\phi_{-i}(x)+\pcov_{-j,-i})\der\nu^n(x)\\
=\det\left((\langle \phi_a,\phi_b\rangle)_{\substack{1\leq a,b\leq p \\ a\neq j, b\neq i}}+\pcov_{-j,-i}\right)\exp(\nu(\Omega)).
\end{multline}
Note that the proof in Section \ref{proof:RDPP} does not rely on any symmetricity argument, so that identity \eqref{eq:Same_as_Normalisation} can be proved in the same way. As a consequence we get that
$$\sum_{n\geq 0}\frac{1}{n!}\int_{\Omega^n}\frac{(-1)^{i+j}\det(\phi_{-j}(x)^T\phi_{-i}(x)+\pcov_{-j,-i})}{\det(G_\nu(\phi)+\pcov)\exp(\nu(\Omega))}\der\nu^n(x)=\frac{(-1)^{i+j}\Delta_{j,i}(G_\nu(\phi)+\pcov)}{\det(G_\nu(\phi)+\pcov)},$$
which is the $(i,j)$ entry of the inverse matrix of $G_\nu(\phi)+\pcov$. This proves identity \eqref{Eq_regDPP1}.

Finally, the proof of identity \eqref{Eq_regDPP2} is  straightforward:
\begin{align*}
\E\left[\det(\phi(X)^T\phi(X)+\pcov)^{-1}\right]&=\sum_{n\geq 0}\frac{1}{n!}\int_{\Omega^n}\frac{1}{\det(G_\nu(\phi)+\pcov)\exp(\nu(\Omega))}\der\nu^n(x)\\
&=\frac{\exp(\nu(\Omega))}{\det(G_\nu(\phi)+\pcov)\exp(\nu(\Omega))}\\
&=\det(G_\nu(\phi)+\pcov)^{-1}.
\end{align*}

\section{Proof of Proposition~\ref{prop:regDPP_Dopt}}
\label{proof:regDPP_Dopt}

By definition of the Janossy densities, we have
\begin{equation}\label{eq:step1_Dopt_regDP}
\E\left[\det(\phi(X)^T\phi(X)+\pcov)^{-1}\big| |X|=k\right]=\frac{\frac{1}{k!}\int_{\Omega^k}j_k(x)\det(\phi(x)^T\phi(x)+\pcov)^{-1}\der^k x}{\frac{1}{k!}\int_{\Omega^k}j_k(x)\der^k x}.
\end{equation}
The integral in the numerator simplifies to
\begin{align}
\int_{\Omega^k}j_k(x)\det(\phi(x)^T\phi(x)+\pcov)^{-1}\der^k x&=\int_{\Omega^k}\frac{1}{\det(G_\nu(\phi)+\pcov)\exp(\mu(f))}\der\nu^k(x)\nonumber\\
&=\frac{\nu(\Omega)^k}{\exp(\nu(\Omega))}\det(G_\nu(\phi)+\pcov)^{-1}.\label{eq:step2_Dopt_regDP}
\end{align}
As for the denominator of \eqref{eq:step1_Dopt_regDP}, following the lines of Section \ref{proof:RDPP} leads to
\begin{equation}\label{eq:step3_Dopt_regDP}
\frac{1}{k!}\int_{\Omega^k}j_k(x)\der^k x=\frac{1}{\det(G_\nu(\phi)+\pcov)\exp(\nu(\Omega))}\sum_{S\subset[p]}\lambda^{S^c}\det(G_\nu(\psi_S))\frac{\nu(\Omega)^{k-|S|}}{(k-|S|)!},
\end{equation}
where the $\psi$ functions are defined the same way as in Section \ref{proof:RDPP}.
Recalling that
$$\sum_{S\subset[p]}\lambda^{S^c}\det(G_\nu(\psi_S))=\det(G_\nu(\phi)+\pcov),$$
we can rewrtite the sum in \eqref{eq:step3_Dopt_regDP} as
\begin{multline*}
\sum_{S\subset[p]}\lambda^{S^c}\det(G_\nu(\psi_S))\frac{\nu(\Omega)^{k-|S|}}{(k-|S|)!}\\
=\frac{\nu(\Omega)^{k-p}}{(k-p)!}\det(G_\nu(\phi)+\pcov)+\sum_{\substack{S\subset[p] \\ S\neq [p]}}\lambda^{S^c}\det(G_\nu(\psi_S))\left(\frac{\nu(\Omega)^{k-|S|}}{(k-|S|)!}-\frac{\nu(\Omega)^{k-p}}{(k-p)!}\right).
\end{multline*}
Now, since $\nu(\Omega)=k$, the sequence $i\mapsto \nu(\Omega)^i/i!$ is increasing when $i\leq k$. Hence, for all $S\subset[p]$ such that $S\neq [p]$,
$$\frac{\nu(\Omega)^{k-|S|}}{(k-|S|)!}-\frac{\nu(\Omega)^{k-p}}{(k-p)!}\geq \frac{\nu(\Omega)^{k-p+1}}{(k-p+1)!}-\frac{\nu(\Omega)^{k-p}}{(k-p)!}=\frac{k^{k-p}}{(k-p)!}\times\frac{p-1}{k-p+1}.$$
We thus obtain
\begin{multline}\label{eq:step4_Dopt_regDP}
\sum_{S\subset[p]}\lambda^{S^c}\det(G_\nu(\psi_S))\frac{\nu(\Omega)^{k-|S|}}{(k-|S|)!}\\
\geq\frac{k^{k-p}}{(k-p)!}\left(\det(G_\nu(\phi)+\pcov)+\frac{p-1}{k-p+1}\big(\det(G_\nu(\phi)+\pcov)-\det(G_\nu(\phi))\big)\right).
\end{multline}
Finally, combining \eqref{eq:step1_Dopt_regDP}, \eqref{eq:step2_Dopt_regDP}, \eqref{eq:step4_Dopt_regDP} and the fact that $\nu(\Omega)=k$, we get
\begin{align*}
&\E\left[\det(\phi(X)^T\phi(X)+\pcov)^{-1}\big| |X|=k\right]\\
\leq &\frac{\frac{k^k}{k!\exp(k)}\det(G_\nu(\phi)+\pcov)^{-1}}{\frac{k^{k-p}}{(k-p)!\exp(k)}\left(1+\frac{p-1}{k-p+1}\big(1-\det(G_\nu(\phi)(G_\nu(\phi)+\pcov)^{-1})\big)\right)}\\
= &\frac{k^p(k-p)!}{k!}\left(1+\frac{p-1}{k-p+1}\big(1-\det(G_\nu(\phi)(G_\nu(\phi)+\pcov)^{-1})\big)\right)^{-1}\hspace{-0.3cm}\det(G_\nu(\phi)+\pcov)^{-1},
\end{align*}
concluding the proof.

\section{Proof of Proposition~\ref{prop:PVS_better}}
\label{proof:PVS_better}
Using the convexity of $x\mapsto 1/x$ on $\R_+^*$, it comes
\begin{align*}
\E[\det(\phi(Y)^T\phi(Y)+\pcov)^{-1}]&\geq(\E[\det(\phi(Y)^T\phi(Y)+\pcov)])^{-1}\\
&=\left(\nu(\Omega)^{-k}\int_{\Omega^k} \det(\phi(y)^T\phi(y)+\pcov)\der\nu^k(y)\right)^{-1}.
\end{align*}
Now, in Section \ref{proof:regDPP_Dopt} we showed that
\begin{align*}
\E[\det(\phi(X)^T\phi(X)+\pcov)^{-1}\big| |X|=k]&=\frac{\frac{1}{k!}\int_{\Omega^k}j_k(x)\det(\phi(x)^T\phi(x)+\pcov)^{-1}\der^k x}{\frac{1}{k!}\int_{\Omega^k}j_k(x)\der^k x}\\
&=\frac{\frac{\nu(\Omega)^k}{\exp(\nu(\Omega))}\det(G_\nu(\phi)+\pcov)^{-1}}{\int_{\Omega^k}\frac{\det(\phi(x)^T\phi(x)+\pcov)}{\det(G_\nu(\phi)+\pcov)\exp(\nu(\Omega))}\der\nu^k(x)}\\
&=\nu(\Omega)^k\left(\int_{\Omega^k}\det(\phi(x)^T\phi(x)+\pcov)\der\nu^k(x)\right)^{-1}
\end{align*}
which concludes the proof.

\section{Proof of Proposition~\ref{prop:regDPP_Aopt}}
\label{proof:regDPP_Aopt}

By definition of the Janossy densities, we have
\begin{equation}\label{eq:step1_Aopt_regDP}
\E\left[\tr((\phi(X)^T\phi(X)+\pcov)^{-1})\big| |X|=k\right]=\frac{\frac{1}{k!}\int_{\Omega^k}j_k(x)\tr((\phi(x)^T\phi(x)+\pcov)^{-1})\der^k x}{\frac{1}{k!}\int_{\Omega^k}j_k(x)\der^k x}.
\end{equation}
Using the same notation as in Section \ref{proof:RDPP}, we expand the numerator into
\begin{align}
\frac{1}{k!}\int_{\Omega^k}j_k(x) \tr((\phi(x)^T\phi(x)+\pcov)^{-1}) & \der^k x\nonumber\\
&=\frac{1}{k!}\int_{\Omega^k}j_k(x)\tr((\psi(x)^T\psi(x)+D_\lambda)^{-1})\der^k x\nonumber\\
&= \sum_{i=1}^p\int_{\Omega^n}\frac{\Delta_{i,i}(\psi(x)^T\psi(x)+D_{\lambda})}{\det(G_\nu(\phi)+\pcov)\exp(\nu(\Omega))}\der\nu^k(x).\label{eq:inter2_Aopt_regDP}
\end{align}
Now,
\begin{equation}\label{eq:inter_Aopt_regDP}
\int_{\Omega^n}\Delta_{i,i}(\psi(x)^T\psi(x)+D_{\lambda})\der\nu^k(x)=\sum_{S\subset[p]\backslash\{i\}}\lambda^{S^c}\det(G_\nu(\psi_S))\frac{\nu(\Omega)^{k-|S|}}{(k-|S|)!},
\end{equation}
where in this case $S^c$ denotes the complement of $S$ relative to $[p]\backslash\{i\}$. Note that there are exactly $\dim(\textrm{Ker}(\pcov))$ eigenvalues of $\pcov$ equal to $0$, so that the elements in the sum in \eqref{eq:inter_Aopt_regDP} are equal to $0$ when $|S|\leq m_0-1$. Since $\nu(\Omega)=k$, the sequence $i\mapsto \nu(\Omega)^i/i!$ is increasing when $i\leq k$, so that
\begin{align*}
\int_{\Omega^n}\Delta_{i,i}(\psi(x)^T\psi(x)+D_{\lambda})\der\nu^k(x)&\leq\sum_{S\subset[p]\backslash\{i\}}\lambda^{S^c}\det(G_\nu(\psi_S))\frac{\nu(\Omega)^{k+1-m_0}}{(k+1-m_0)!}\nonumber\\
&=\frac{\nu(\Omega)^{k+1-m_0}}{(k+1-m_0)!}\Delta_{i,i}(G_\nu(\psi)+D_{\lambda})
\end{align*}
which, combined with \eqref{eq:inter2_Aopt_regDP}, gives
\begin{equation}\label{eq:step2_Aopt_regDP}
\frac{1}{k!}\int_{\Omega^k}j_k(x)\tr((\phi(x)^T\phi(x)+\pcov)^{-1})\der^k x\leq\frac{\nu(\Omega)^{k+1-m_0}\tr((G_\nu(\phi)+\pcov)^{-1})}{(k+1-m_0)!\exp(\nu(\Omega))}.
\end{equation}
Finally, combining \eqref{eq:step1_Aopt_regDP}, \eqref{eq:step2_Aopt_regDP} and \eqref{eq:step3_Dopt_regDP} gives
\begin{align*}
\E\left[\det(\phi(X)^T\phi(X)+\pcov)^{-1}\big| |X|=k\right]&\leq\frac{\frac{\nu(\Omega)^{k+1-m_0}\tr((G_\nu(\phi)+\pcov)^{-1})}{(k+1-m_0)!\exp(\nu(\Omega))}}{\frac{\nu(\Omega)^{k-p}}{(k-p)!\exp(\nu(\Omega))}}\\
&=\frac{\nu(\Omega)^{p+1-m_0}(k-p)!}{(k+1-m_0)!}\tr\big((G_\nu(\phi)+\pcov)^{-1}\big)
\end{align*}
and since $\nu(\Omega)=k$, this concludes the proof.

\section{Proof of Proposition~\ref{prop:Corr_functions}}
\label{proof:Corr_functions}

The Janossy densities and correlation functions of a point process are linked by the following identity; see \cite[Lemma 5.4.III]{DVJ}:
$$\rho_n(x_1,\cdots,x_n)\der^n x=\sum_{m\geq 0}\frac{1}{m!}\int_{\Omega^m}j_{n+m}(x,y)\der^m y.$$
Applying this identity to the Janossy densities of $\PVS{\nu}{\phi}{\pcov}$, we get
$$\rho_n(x_1,\cdots,x_n)\der^n x=\sum_{m\geq 0}\frac{1}{m!}\int_{\Omega^m}\frac{\det(\phi(x)^T\phi(x)+\phi(y)^T\phi(y)+\pcov)}{\det(G_\nu(\phi)+\pcov)\exp(\nu(\Omega))}\der\nu^m(y).$$
Now, for all $x_1,\cdots,x_n\in\Omega$, using the same reasoning as in the proof of normalization in Section \ref{proof:RDPP} but replacing the matrix $\pcov$ with the matrix $\phi(x)^T\phi(x)+\pcov$, we get
$$\sum_{m\geq 0}\frac{1}{m!}\int_{\Omega^m}\det(\phi(x)^T\phi(x)+\phi(y)^T\phi(y)+\pcov)\der\nu^m(y)=\det(\phi(x)^T\phi(x)+\pcov)\exp(\nu(\Omega)).$$
We then conclude that
$$\rho_n(x_1,\cdots,x_n)=\frac{\det(G_\nu(\phi)+\pcov+\phi(x)^T\phi(x))}{\det(G_\nu(\phi)+\pcov)}\prod_{i=1}^nf(x_i).$$

\section{Proof of Proposition~\ref{prop:Superpos}}
\label{proof:Superpos}

For any $n\in\N$ and $x\in\Omega^n$, we write $K[x]$ for the $n\times n$ matrix with entries $K(x_i,x_j)$. Since $G_\nu(\phi)+\pcov$ is invertible, then
\begin{align}
\rho_n(x)\der^n x&=\det(I_p+(G_\nu(\phi)+\pcov)^{-1}\phi(x)^T\phi(x))\der\nu^n(x)\nonumber\\
&=\det(I_n+\phi(x)(G_\nu(\phi)+\pcov)^{-1}\phi(x)^T)\der\nu^n(x).\label{eq:Nicest_intensity_expression}\\
&=\det(I_n+\phi(x)G_\nu(\phi)^{-1/2}PDP^TG_\nu(\phi)^{-1/2}\phi(x)^T)\der\nu^n(x)\nonumber\\
&=\det(I_n+\psi(x)D\psi(x)^T)\der\nu^n(x)\nonumber\\
&=\det(I_n+K[x])\der\nu^n(x).\nonumber
\end{align}
Now, it remains to show that the superposition of $X$ and $Y$ has the same correlation functions to conclude that its distribution is $\PVS{\nu}{\phi}{A}$.


Let $n\in\N$, we recall that the $n$-th order correlation function $\rho'_n$ of $X\cup Y$ satisfy
\begin{equation}\label{eq:Expect}
\E\left[\sum_{x_1,\cdots,x_n\in X\cup Y}^{\neq}f(x_1,\cdots,x_n)\right]=\int_{\Omega^n}f(x_1,\cdots,x_n)\rho'_n(x_1,\cdots,x_n)\der x_1\cdots \der x_n
\end{equation}
for all integrable functions $f$, where the $\neq$ symbol means that the sum is taken on distinct elements of $X\cup Y$. Since each element of $X\cup Y$ is either in $X$ or $Y$ then \eqref{eq:Expect} can be rewritten as
\begin{align*}
\E\left[\sum_{x_1,\cdots,x_n\in X\cup Y}^{\neq}f(x_1,\cdots,x_n)\right]&=\sum_{S\subset[n]}\E\left[\sum_{x_i\in X, i\in S}^{\neq}\sum_{x_j\in Y, j\in S^c}^{\neq}f(x_1,\cdots,x_n)\right]\\
&=\sum_{S\subset[n]}\E\left[\sum_{x_i\in X, i\in S}^{\neq}\E\left[\sum_{x_j\in Y, j\in S^c}^{\neq}f(x_1,\cdots,x_n)\right]\right]\\
&=\sum_{S\subset[n]}\E\left[\sum_{x_i\in X, i\in S}^{\neq}\int_{\Omega^{|S^c|}}f(x_1,\cdots,x_n)\prod_{j\in S^C}\der\nu(x_j)\right]\\
&=\sum_{S\subset[n]}\int_{\Omega^n}f(x_1,\cdots,x_n)\det((K(x_i,x_j))_{i,j\in S})\der\nu^n(x)\\
&=\int_{\Omega^n}f(x_1,\cdots,x_n)\det(I_n+K[x])\der\nu^n(x).
\end{align*}
This proves that the correlation functions of $X\cup Y$ also satisfy
$$\rho'_n(x)\der^n x=\det(I_n+K[x])\der\nu^n(x_i).$$
Therefore, $X\cup Y$ is distributed as $\PVS{\nu}{\phi}{A}$.

\section{Proof of Corollary~\ref{corr:Avg_num_points}}
\label{proof:Avg_num_points}

$X\sim \PVS{\nu}{\phi}{\pcov}$ is the superposition of a Poisson point process $Y$ with intensity $\nu$ and a DPP $Z$ with intensity $\rho(x)\der x=\phi(x)(G_\nu(\phi)+\pcov)^{-1}\phi(x)^T$; see identity \eqref{eq:Nicest_intensity_expression}. Therefore,
$$\E[|X|]=\E[|Y|]+\E[|Z|]$$
with $\E[|Y|]=\nu(\Omega)$ and
$$\E[|Z|]=\int_\Omega \phi(x)(G_\nu(\phi)+\pcov)^{-1}\phi(x)^T\der\nu(x).$$
Since we can rewrite $\phi(x)(G_\nu(\phi)+\pcov)^{-1}\phi(x)^T$ as $\tr((G_\nu(\phi)+\pcov)^{-1}\phi(x)^T\phi(x))$, we get
$$\E[|Z|]=\tr((G_\nu(\phi)+\pcov)^{-1}G_\nu(\phi)),$$
concluding the proof.

\section{A parametrized reference measure for Section~\ref{s:simple_experiment}.}
\label{experiment:approximation_OD}

To parametrize $\nu$, we write its density $f$ as a linear combination of positive functions with nonnegative weights, that is,
\begin{equation} \label{fsum}
f(x)=\sum_{i=1}^n\omega_i g_i(x).
\end{equation}
This way, minimizing $h(G_\nu(\phi))$ over $\nu=f\mathrm{d}x$ of the form \eqref{fsum} and such that $\nu(\Omega)=k$ is equivalent to finding $(\omega_1,\cdots,\omega_n)$ minimizing
\begin{equation}\label{eq:Relax}
h\left(\sum_{i=1}^n \omega_iG_{g_i}(\phi) + \pcov\right)~\mbox{s.t.}~\omega\succcurlyeq 0~\mbox{and}~\sum_{i=1}^n\omega_i\int_\Omega g_i(x)\der x=k.
\end{equation}
This is now a convex optimization problem that can be solved numerically.
For our illustration, we consider that $h\in\{h_D,h_A\}$ and the $g_i$ to be the $231$ polynomial functions of two variables with degree $\leq 10$ as well as their composition with $(x,y)\mapsto (1-x,1-y)$, which are all non-negative functions on $\Omega=[0,1]^2$.
We show in Figure \ref{fig:Opti_Poly_Dens} the density of the measures minimizing \eqref{eq:Relax} for both optimality criteria and for $\pcov\in\{I_{10}, 0.01 I_{10}, 0.0001 I_{10}\}$.

\begin{figure}[H]
\centering
   \subfloat[D-optimality, $\pcov=I_d$\label{fig:Opti_Poly_Dens_D_1}]{\includegraphics[width=\densplot]{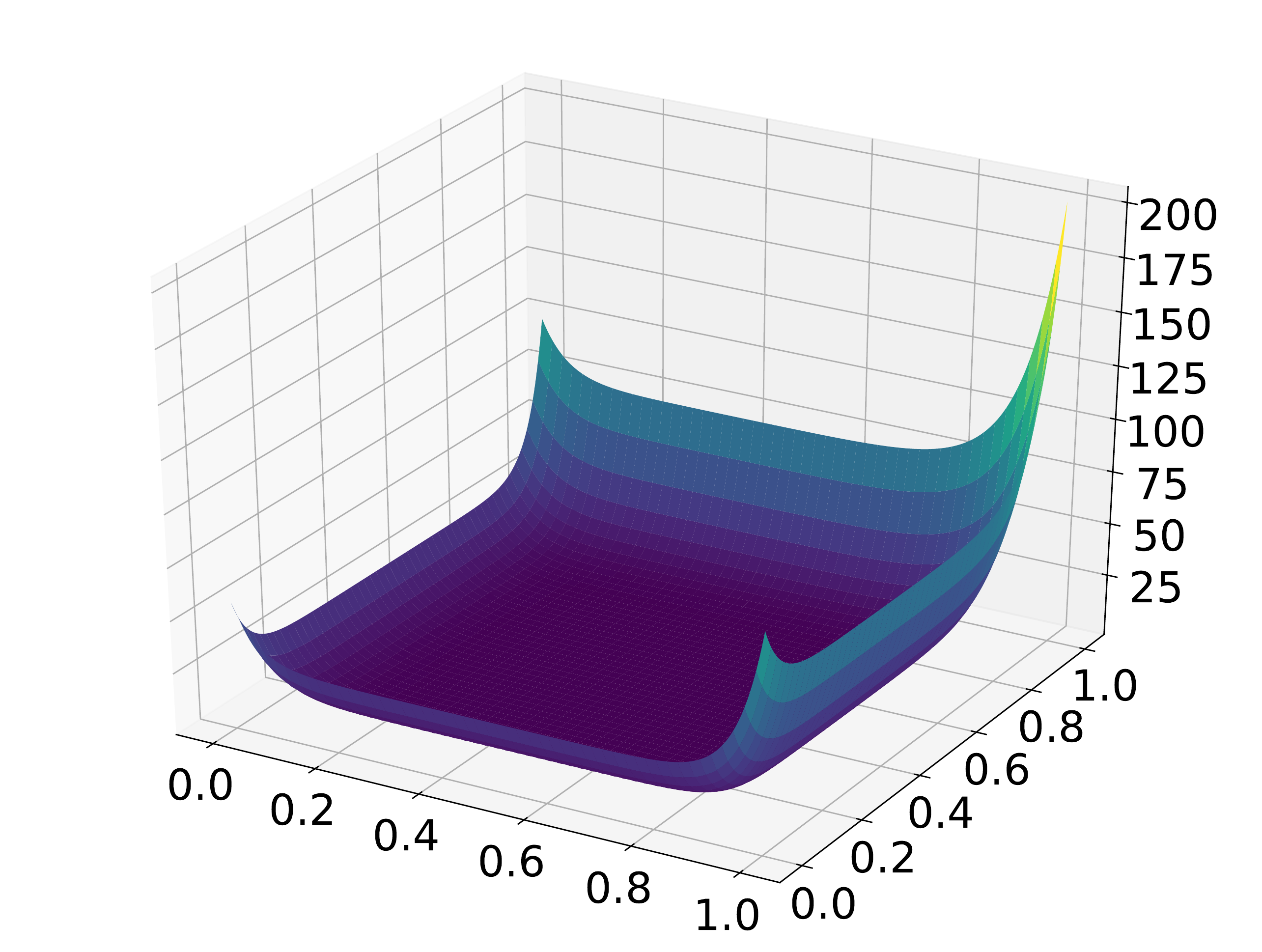}}\hfill
      \subfloat[D-optimality, $\pcov=0.01 I_d$\label{fig:Opti_Poly_Dens_D_100}]{\includegraphics[width=\densplot]{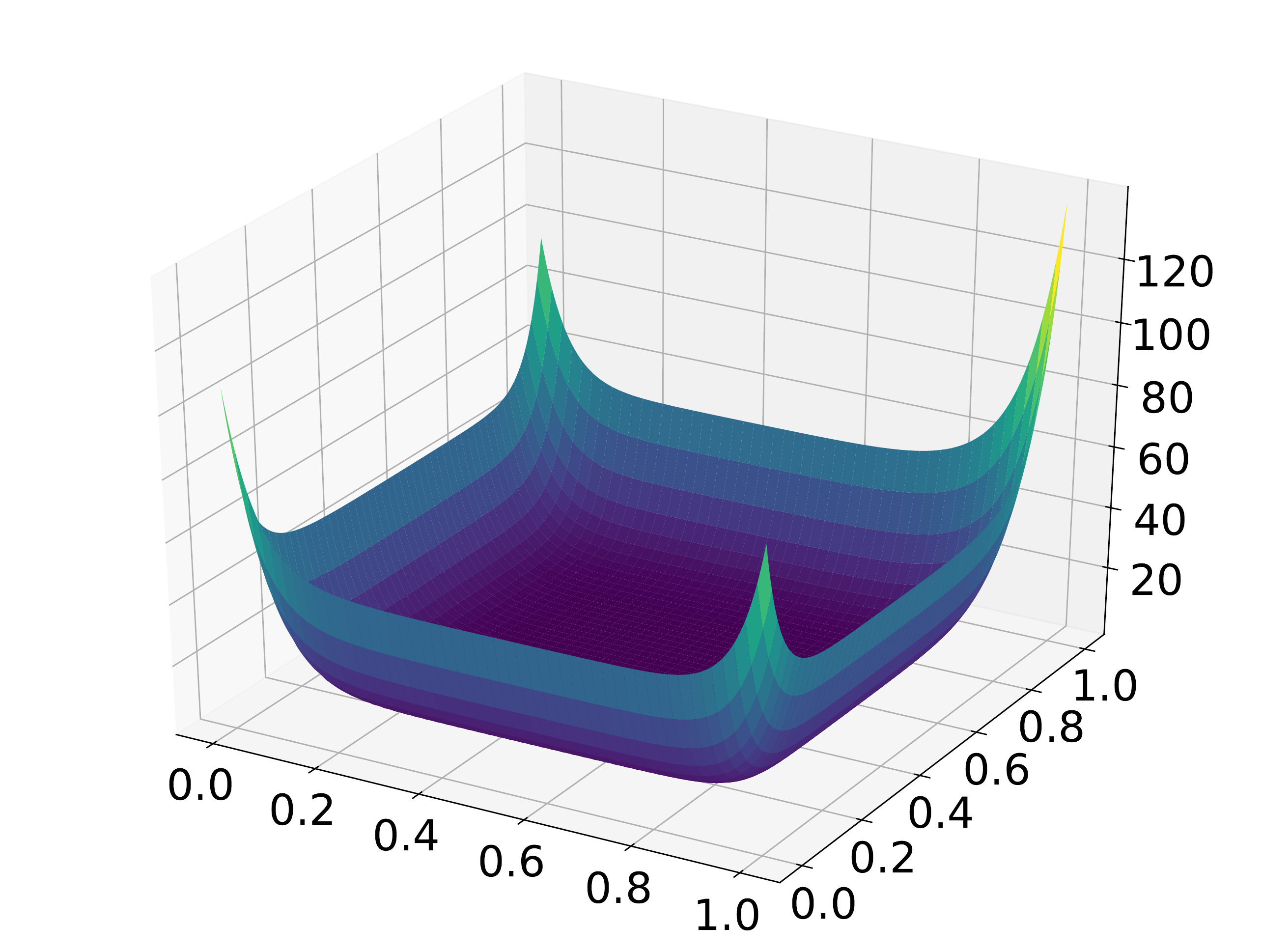}}\hfill
         \subfloat[D-optimality, $\pcov=0.0001 I_d$\label{fig:Opti_Poly_Dens_D_10000}]{\includegraphics[width=\densplot]{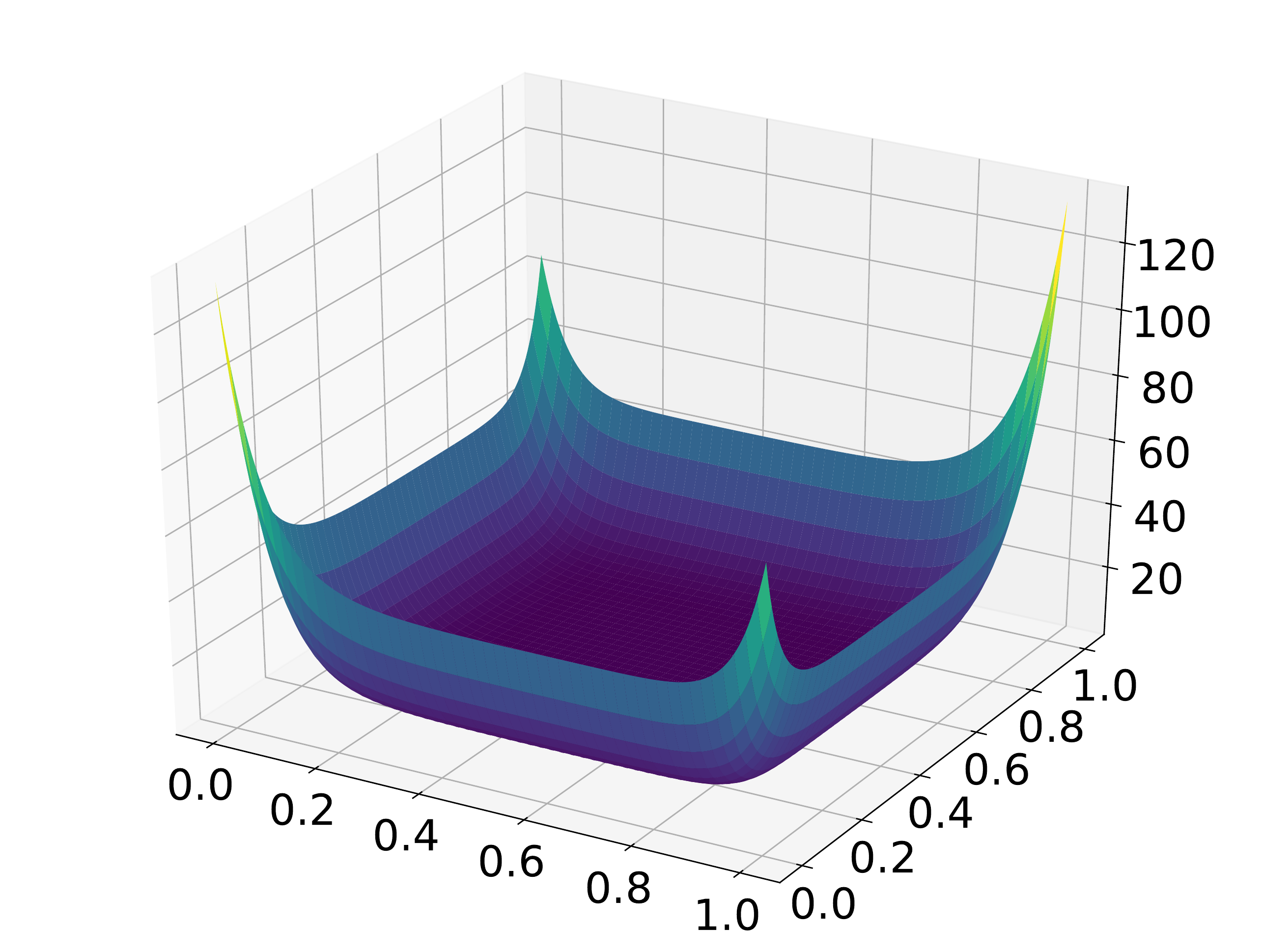}}\hfill
\subfloat[A-optimality, $\pcov=I_d$\label{fig:Opti_Poly_Dens_A_1}]{\includegraphics[width=\densplot]{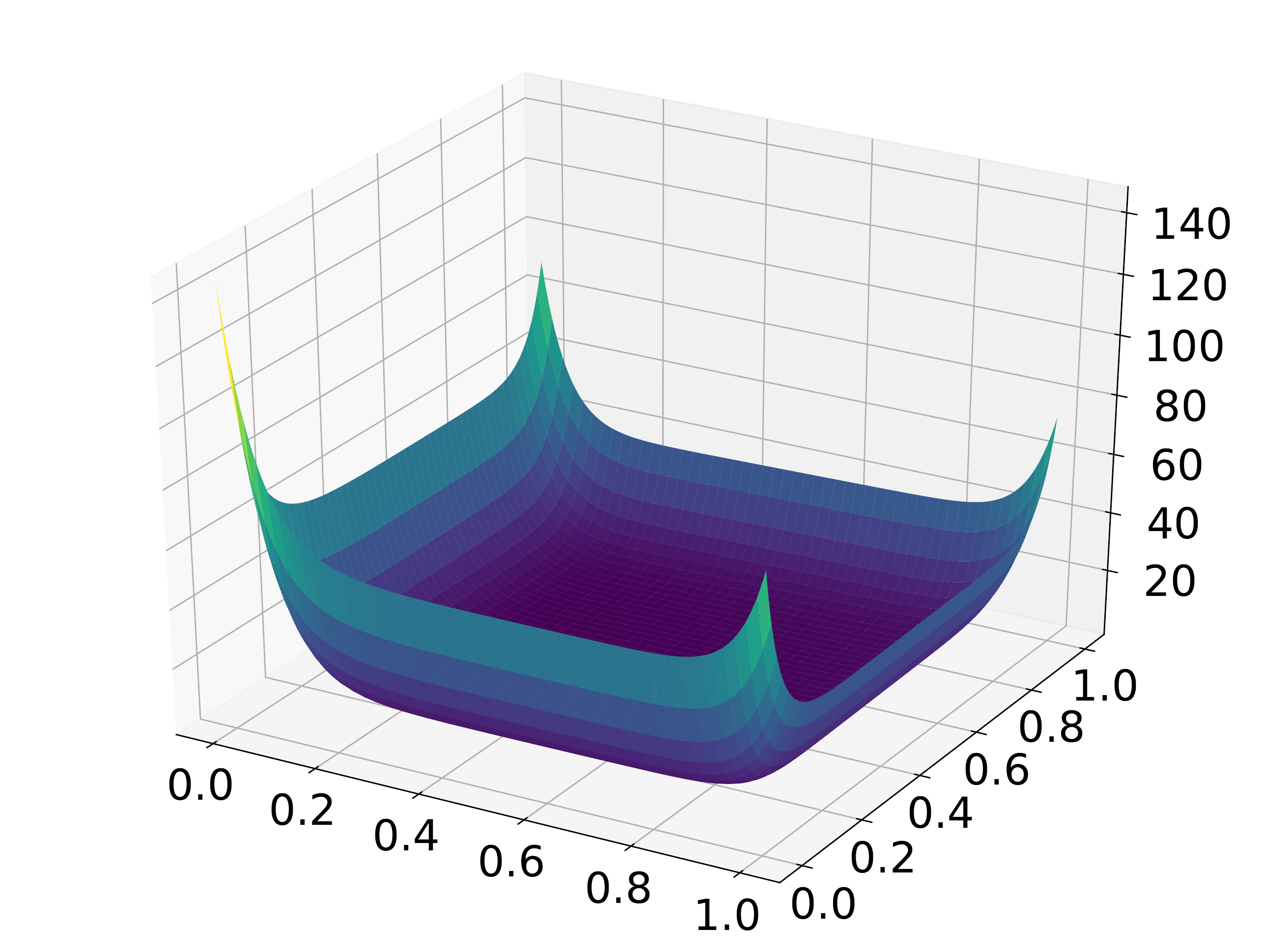}}\hfill
\subfloat[A-optimality, $\pcov=0.01 I_d$\label{fig:Opti_Poly_Dens_A_100}]{\includegraphics[width=\densplot]{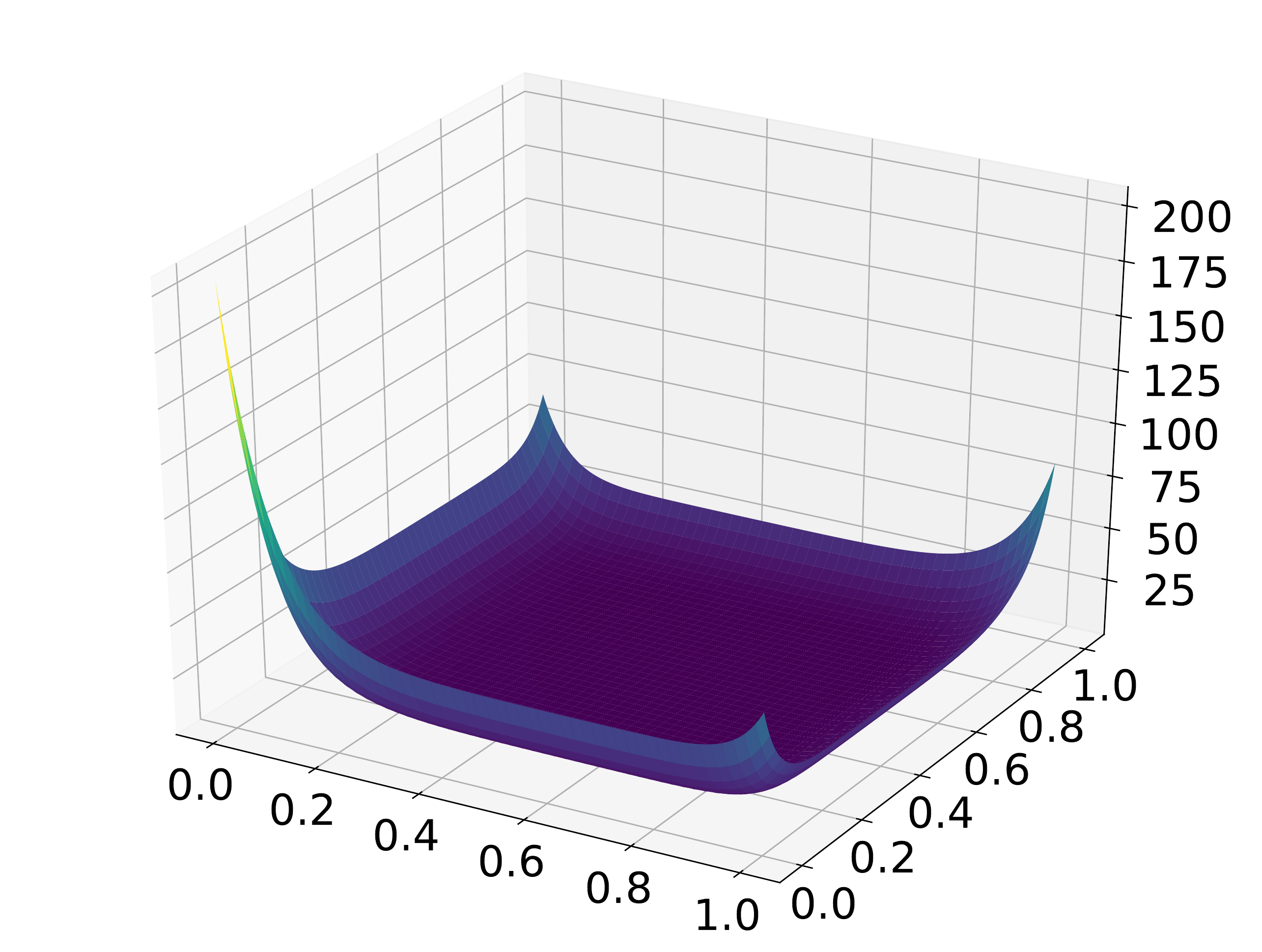}}\hfill
\subfloat[A-optimality, $\pcov=0.0001 I_d$\label{fig:Opti_Poly_Dens_A_10000}]{\includegraphics[width=\densplot]{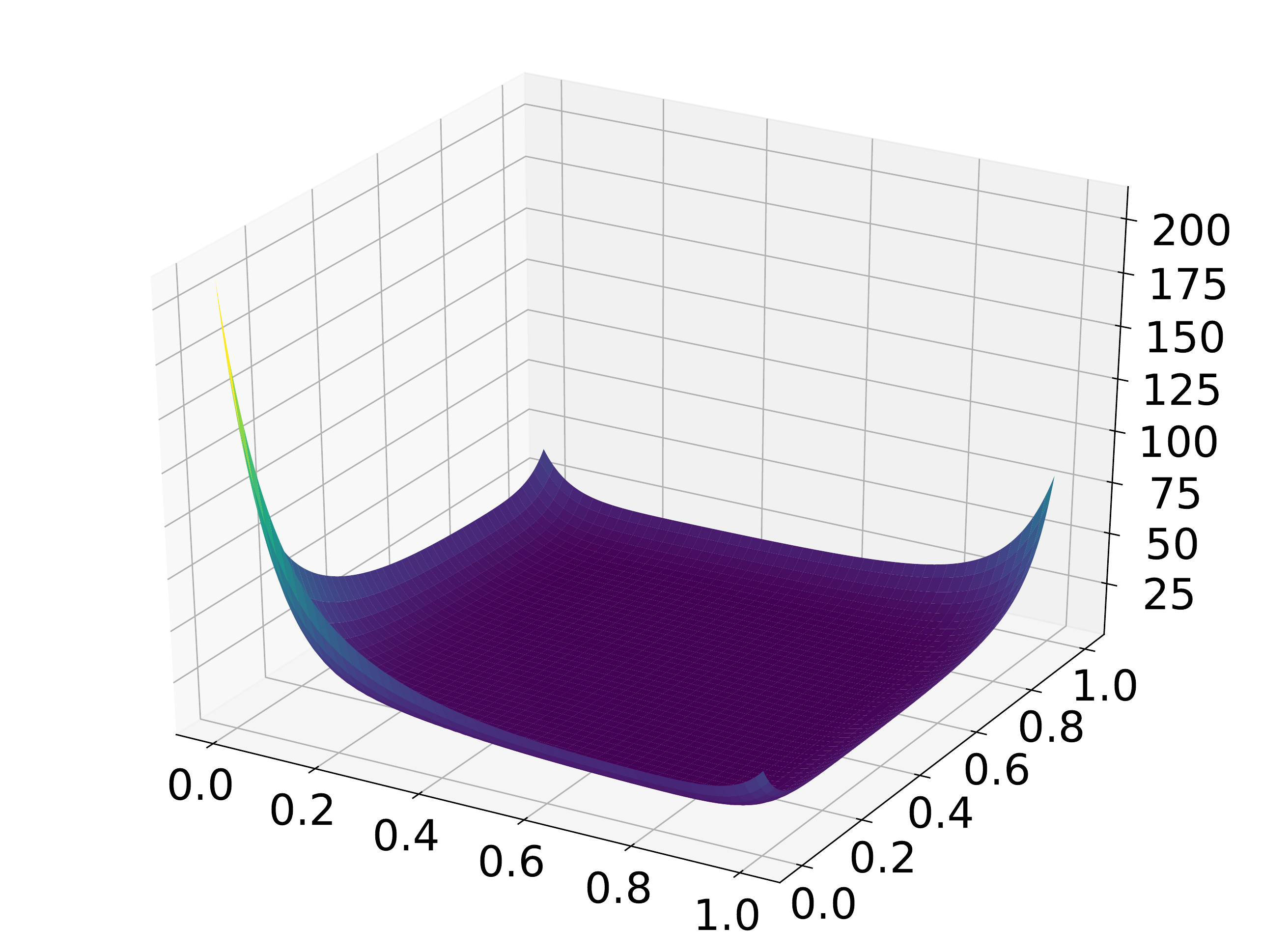}}
	   \caption{{\label{fig:Opti_Poly_Dens} \small 3D plots of the densities of the measures minimizing \eqref{eq:Relax} for the D and A-optimality criterion when the $g_i$ functions are the binomial polynomial of degree $\leq 10$ as well as their composition with $(x,y)\mapsto (1-x,1-y)$.}}
\end{figure}

\bibliographystyle{abbrvnat}
\bibliography{ref,remi}

\begin{thebibliography}{37}
\providecommand{\natexlab}[1]{#1}
\providecommand{\url}[1]{\texttt{#1}}
\expandafter\ifx\csname urlstyle\endcsname\relax
  \providecommand{\doi}[1]{doi: #1}\else
  \providecommand{\doi}{doi: \begingroup \urlstyle{rm}\Url}\fi

\bibitem[Andersen et~al.(2012)Andersen, Dahl, Liu, and Vandenberghe]{cvxopt}
M.~Andersen, J.~Dahl, Z.~Liu, and L.~Vandenberghe.
\newblock Interior-point methods for large-scale cone programming.
\newblock In S.~Sra, S.~Nowozin, and S.~Wright, editors, \emph{Optimization for
  Machine Learning}, chapter~1, pages 55--83. MIT Press, 2012.

\bibitem[Atkinson et~al.(2007)Atkinson, Donev, and Tobias]{Atkinson}
A.~Atkinson, A.~Donev, and R.~Tobias.
\newblock \emph{Optimum Experimental Designs, with SAS}.
\newblock Oxford Statistical Science Series. Oxford University Press, USA,
  2007.

\bibitem[Boyd and Vandenberghe(2004)]{Boyd}
S.~Boyd and L.~Vandenberghe.
\newblock \emph{Convex Optimization}.
\newblock Cambridge University Press, USA, 2004.

\bibitem[Collings(1983)]{Det_Diag}
B.~J. Collings.
\newblock Characteristic polynomials by diagonal expansion.
\newblock \emph{The American Statistician}, 37\penalty0 (3):\penalty0 233--235,
  1983.

\bibitem[Daley and Vere-Jones(2003)]{DVJ}
D.~J. Daley and D.~Vere-Jones.
\newblock \emph{An Introduction to the Theory of Point Processes. {V}ol. {I}}.
\newblock Springer-Verlag, 2nd edition, 2003.

\bibitem[De~Castro et~al.(2019)De~Castro, Gamboa, Henrion, Hess, and
  Lasserre]{DeCastro}
Y.~De~Castro, F.~Gamboa, D.~Henrion, R.~Hess, and J.~Lasserre.
\newblock Approximate optimal designs for multivariate polynomial regression.
\newblock \emph{Annals of Statistics}, 47\penalty0 (1):\penalty0 127--155, 02
  2019.

\bibitem[Derezi\'{n}ski et~al.(2018)Derezi\'{n}ski, Warmuth, and
  Hsu]{Def_RescaledVS}
M.~Derezi\'{n}ski, M.~Warmuth, and D.~Hsu.
\newblock Leveraged volume sampling for linear regression.
\newblock In \emph{Advances in Neural Information Processing Systems 31: Annual
  Conference on Neural Information Processing Systems}, pages 2510--2519, 2018.

\bibitem[Derezi\'{n}ski et~al.(2019)Derezi\'{n}ski, Warmuth, and Hsu]{VRS}
M.~Derezi\'{n}ski, M.~Warmuth, and D.~Hsu.
\newblock Unbiased estimators for random design regression, 2019.
\newblock arXiv pre-print.

\bibitem[Derezi\'{n}ski et~al.(2020)Derezi\'{n}ski, Liang, and Mahoney]{RegDPP}
M.~Derezi\'{n}ski, F.~Liang, and M.~Mahoney.
\newblock Bayesian experimental design using regularized determinantal point
  processes.
\newblock In S.~Chiappa and R.~Calandra, editors, \emph{Proceedings of the
  Twenty Third International Conference on Artificial Intelligence and
  Statistics}, volume 108 of \emph{Proceedings of Machine Learning Research},
  pages 3197--3207, Online, 26--28 Aug 2020. PMLR.

\bibitem[Dette(1993)]{Dette_Bayes}
H.~Dette.
\newblock Bayesian d-optimal and model robust designs in linear regression
  models.
\newblock \emph{Statistics: A Journal of Theoretical and Applied Statistics},
  25\penalty0 (1):\penalty0 27--46, 1993.

\bibitem[Dette and Studden(1997)]{Dette}
H.~Dette and W.~J. Studden.
\newblock \emph{The Theory of Canonical Moments with Applications in
  Statistics, Probability, and Analysis}.
\newblock Wiley Series in Probability and Statistics. Wiley, 1997.
\newblock ISBN 9780471109914.

\bibitem[Dette et~al.(2002)Dette, Melas, and Pepelyshev]{Trigo_regression}
H.~Dette, V.~Melas, and A.~Pepelyshev.
\newblock D-optimal designs for trigonometric regression models on a partial
  circle.
\newblock \emph{Annals of the Institute of Statistical Mathematics},
  54:\penalty0 945--959, 02 2002.

\bibitem[Dick and Pilichshammer(2010)]{DiPi10}
J.~Dick and F.~Pilichshammer.
\newblock \emph{Digital Nets and Sequences. Discrepancy Theory and Quasi-Monte
  Carlo Integration}.
\newblock Cambridge University Press, 2010.

\bibitem[Fang et~al.(2006)Fang, Li, and Sudjianto]{LocSearch}
K.~Fang, R.~Li, and A.~Sudjianto.
\newblock \emph{Design and modeling for computer experiments}.
\newblock Computer science and data analysis series. Chapman and Hall/CRC, 1
  edition, 2006.

\bibitem[Farrell et~al.(1967)Farrell, Kiefer, and Walbran]{Multi_Designs}
R.~H. Farrell, J.~Kiefer, and A.~Walbran.
\newblock Optimum multivariate designs.
\newblock In \emph{Proceedings of the Fifth Berkeley Symposium on Mathematical
  Statistics and Probability, Volume 1: Statistics}, pages 113--138. University
  of California Press, 1967.

\bibitem[Fedorov(1972)]{Fedorov}
V.~Fedorov.
\newblock \emph{Theory of Optimal Experiments Designs}.
\newblock Academic Press, New York, 01 1972.

\bibitem[Gautier et~al.(2019{\natexlab{a}})Gautier, Bardenet, and
  Valko]{GaBaVa19a}
G.~Gautier, R.~Bardenet, and M.~Valko.
\newblock On two ways to use determinantal point processes for {M}onte {C}arlo
  integration.
\newblock Technical report, ICML workshop on Negative dependence in machine
  learning, 2019{\natexlab{a}}.

\bibitem[Gautier et~al.(2019{\natexlab{b}})Gautier, Polito, Bardenet, and
  Valko]{DPPy}
G.~Gautier, G.~Polito, R.~Bardenet, and M.~Valko.
\newblock {DPPy: DPP Sampling with Python}.
\newblock \emph{Journal of Machine Learning Research - Machine Learning Open
  Source Software (JMLR-MLOSS)}, 2019{\natexlab{b}}.

\bibitem[Grove et~al.(2004)Grove, Woods, and Lewis]{B-Spline_1}
D.~Grove, D.~Woods, and S.~Lewis.
\newblock Multifactor b-spline mixed models in designed experiments for the
  engine mapping problem.
\newblock \emph{Journal of Quality Technology}, 36\penalty0 (4):\penalty0
  380--391, 2004.

\bibitem[Hough et~al.(2009)Hough, Krishnapur, Peres, and Virag]{Hough}
J.~Hough, M.~Krishnapur, Y.~Peres, and B.~Virag.
\newblock \emph{Zeros of Gaussian Analytic Functions and Determinantal Point
  Processes}.
\newblock American Mathematical Society, 2009.
\newblock ISBN 978-0-8218-43.

\bibitem[Hough et~al.(2006)Hough, Krishnapur, Peres, and Vir\'ag]{HKPV06}
J.~B. Hough, M.~Krishnapur, Y.~Peres, and B.~Vir\'ag.
\newblock Determinantal processes and independence.
\newblock \emph{Probability surveys}, 2006.

\bibitem[Johansson(2006)]{Cauchy_Binet}
K.~Johansson.
\newblock Random matrices and determinantal processes.
\newblock In \emph{Les Houches Summer School Proceedings}, volume 83(C), pages
  1--56, 2006.

\bibitem[Kulesza and Taskar(2012)]{KuTa12}
A.~Kulesza and B.~Taskar.
\newblock Determinantal point processes for machine learning.
\newblock \emph{Foundations and Trends in Machine Learning}, 2012.

\bibitem[Lavancier et~al.(2015)Lavancier, M{\o}ller, and Rubak]{Lavancier}
F.~Lavancier, J.~M{\o}ller, and E.~Rubak.
\newblock Determinantal point process models and statistical inference.
\newblock \emph{Journal of Royal Statistical Society: Series B (Statistical
  Methodology)}, 77:\penalty0 853--877, 5 2015.

\bibitem[Liski et~al.(2002)Liski, Mandal, Shah, and Sinha]{Liski}
E.~Liski, N.~Mandal, K.~Shah, and B.~a. Sinha.
\newblock \emph{Topics in Optimal Design}.
\newblock Lecture Notes in Statistics 163. Springer-Verlag New York, 1 edition,
  2002.
\newblock ISBN 978-0-387-95348-9,978-1-4613-0049-6.

\bibitem[Liu et~al.(2020)Liu, Yue, and Chatterjee]{Bayes_Example}
X.~Liu, R.-X. Yue, and K.~Chatterjee.
\newblock Geometric characterization of d-optimal designs for random
  coefficient regression models.
\newblock \emph{Statistics and Probability Letters}, 159:\penalty0 108696,
  2020.

\bibitem[Macchi(1975)]{Mac75}
O.~Macchi.
\newblock The coincidence approach to stochastic point processes.
\newblock \emph{Advances in Applied Probability}, 7:\penalty0 83--122, 1975.

\bibitem[Maronge et~al.(2017)Maronge, Zhai, Wiens, and Fang]{Wavelet_reg}
J.~Maronge, Y.~Zhai, D.~Wiens, and Z.~Fang.
\newblock Optimal designs for spline wavelet regression models.
\newblock \emph{Journal of Statistical Planning and Inference}, 184:\penalty0
  94 -- 104, 2017.

\bibitem[Nikolov et~al.(2019)Nikolov, Singh, and Tantipongpipat]{Nikolov}
A.~Nikolov, M.~Singh, and U.~T. Tantipongpipat.
\newblock Proportional volume sampling and approximation algorithms for
  a-optimal design.
\newblock In \emph{Proceedings of the Thirtieth Annual ACM-SIAM Symposium on
  Discrete Algorithms}, SODA ’19, page 1369–1386. Society for Industrial
  and Applied Mathematics, 2019.

\bibitem[Piepel et~al.(2019)Piepel, Stanfill, Cooley, Jones, Kroll, and
  Vienna]{Space_Filling_Design}
G.~Piepel, B.~Stanfill, S.~Cooley, B.~Jones, J.~Kroll, and J.~Vienna.
\newblock Developing a space-filling mixture experiment design when the
  components are subject to linear and nonlinear constraints.
\newblock \emph{Quality Engineering}, 31\penalty0 (3):\penalty0 463--472, 2019.
\newblock \doi{10.1080/08982112.2018.1517887}.

\bibitem[Pronzato and Pázman(2013)]{Pronzato}
L.~Pronzato and A.~Pázman.
\newblock \emph{Design of Experiments in Nonlinear Models: Asymptotic
  Normality, Optimality Criteria and Small-Sample Properties}.
\newblock Lecture Notes in Statistics 212. Springer-Verlag New York, 2013.

\bibitem[Pukelsheim(2006)]{Pukelsheim}
F.~Pukelsheim.
\newblock \emph{Optimal Design of Experiments}.
\newblock Classics in applied mathematics 50. Society for Industrial and
  Applied Mathematics, 2006.

\bibitem[Pukelsheim and Rieder(1992)]{Efficient_rounding}
F.~Pukelsheim and S.~Rieder.
\newblock Efficient rounding of approximate designs.
\newblock \emph{Biometrika}, 79\penalty0 (4):\penalty0 763--770, 12 1992.

\bibitem[Robert and Casella(2004)]{RoCa04}
C.~P. Robert and G.~Casella.
\newblock \emph{Monte {C}arlo statistical methods}.
\newblock Springer, 2004.

\bibitem[Summa et~al.(2014)Summa, Eisenbrand, Faenza, and
  Moldenhauer]{D_opt_NP_Hard}
M.~Summa, F.~Eisenbrand, Y.~Faenza, and C.~Moldenhauer.
\newblock On largest volume simplices and sub-determinants.
\newblock \emph{Proceedings of the Annual ACM-SIAM Symposium on Discrete
  Algorithms}, 2015, 06 2014.

\bibitem[Virtanen et~al.(2020)Virtanen, Gommers, Oliphant, et~al.]{Scipy}
P.~Virtanen, R.~Gommers, T.~Oliphant, et~al.
\newblock {{SciPy} 1.0: Fundamental Algorithms for Scientific Computing in
  Python}.
\newblock \emph{Nature Methods}, 17:\penalty0 261--272, 2020.

\bibitem[Woods et~al.(2003)Woods, Lewis, and Dewynne]{B-Spline_2}
D.~Woods, S.~Lewis, and J.~Dewynne.
\newblock Designing experiments for multi-variable b-spline models.
\newblock \emph{Sankhya}, 65:\penalty0 660--670, 2003.

\end{thebibliography}
\end{document}